\def\poly{\operatorname{poly}}
\title{Subquadratic Dynamic Path Reporting in Directed Graphs\\Against an Adaptive Adversary\thanks{This work has been partially supported by the ERC CoG grant TUgbOAT no 772346.}}
\date{\vspace{-5ex}}
\author[1]{Adam Karczmarz\thanks{\texttt{a.karczmarz@mimuw.edu.pl}}}
\author[1]{Anish Mukherjee\thanks{\texttt{anish@mimuw.edu.pl}}}
\author[1]{Piotr Sankowski\thanks{\texttt{sank@mimuw.edu.pl}}}
\affil[1]{Institute of Informatics, University of Warsaw and IDEAS NCBR, Poland}
\newcommand{\Ot}{\ensuremath{\widetilde{O}}}
\newcommand{\eps}{\ensuremath{\epsilon}}
\newcommand{\wei}{w}
\theoremstyle{plain}
\newtheorem{theorem}{Theorem}[section]
\newtheorem{lemma}[theorem]{Lemma}
\newtheorem{corollary}[theorem]{Corollary}
\newtheorem{observation}[theorem]{Observation}
\newtheorem{remark}[theorem]{Remark}
\begin{document}

\maketitle
\thispagestyle{empty}

\begin{abstract}
  We study reachability and shortest paths problems in dynamic directed graphs.
  Whereas algebraic dynamic data structures supporting edge updates and reachability/distance queries
  have been known for quite a long time, they do not, in general, allow
  reporting the underlying paths within the same time bounds, especially against an adaptive adversary.

  In this paper we develop the first known fully dynamic reachability data structures working against an adaptive adversary and supporting edge updates and
  path queries for two natural variants: (1) point-to-point path reporting, and (2) single-source reachability tree
  reporting. 
  For point-to-point queries in DAGs, we achieve $O(n^{1.529})$ worst-case update and query bounds, whereas
  for tree reporting in DAGs, the respective worst-case bounds are $O(n^{1.765})$.
  More importantly, we show how to lift these algorithms to work on general graphs at the cost
  of increasing the bounds to $n^{1+5/6+o(1)}$ and making the update times amortized.
  On the way to accomplishing these goals, we obtain two interesting subresults. We give subquadratic fully dynamic algorithms
  for topological order (in a DAG), and strongly connected components. To the best of our knowledge, such algorithms
  have not been described before.

  Additionally, we provide \emph{deterministic} \emph{incremental} data structures for (point-to-point
  or single-source) reachability and shortest paths that can handle edge insertions and report the respective
  paths within subquadratic \emph{worst-case} time bounds.
  For reachability and \linebreak $(1+\eps)$-approximate shortest paths in weighted directed graphs, these bounds match the best known
  dynamic matrix inverse-based randomized bounds for fully dynamic\linebreak reachability~\cite{BrandNS19}.
  For exact shortest paths in unweighted graphs, the obtained bounds in the incremental setting
  polynomially improve upon the respective best known randomized update/distance query bounds in the fully dynamic setting.

\end{abstract}

\clearpage
\setcounter{page}{1}

\section{Introduction}\label{s:intro}

Dynamic reachability (or transitive closure) and dynamic all-pairs shortest paths are among the most fundamental
and well-studied dynamic problems on directed graphs.
In these problems, we are given a dynamic directed graph $G=(V,E)$ with $n=|V|$,
and the goal is to devise a data structure maintaining~$G$ and supporting edge set updates interleaved
with reachability (or shortest path) queries between arbitrary source-target pairs of vertices of $G$.
A dynamic data structure is called \emph{incremental} if it can handle edge insertions only,
\emph{decremental} if it can handle edge deletions only, and \emph{fully dynamic} if it can handle both.
One is typically interested in optimizing both the (amortized or worst-case) update time
of the data structure, and the query time (which is usually worst-case anyway).
In the partially dynamic settings one usually optimizes the total update time, i.e.,
the time needed to process the entire sequence of updates.

Efficient and non-trivial combinatorial\footnote{That is, not using algebraic methods like the Zippel-Schwartz lemma \cite{Zippel79, Schwartz80} or fast matrix multiplication.} algorithms have been developed for dynamic transitive closure in the incremental~\cite{Italiano86},
decremental~\cite{FrigioniMNZ01}, and fully dynamic settings~\cite{DemetrescuI08, Roditty08, RodittyZ16}.
For all-pairs shortest paths, Demetrescu and Italiano~\cite{DemetrescuI04}~developed a fully dynamic algorithm (later slightly
improved by~\cite{Thorup04}) with
$\Ot(n^2)$ amortized update time and optimal query time. That algorithm works in the most general
comparison-addition model which allows real edge weights.
Partially dynamic algorithms for all-pairs shortest paths with non-trivial total update
time bounds and optimal query time are known in the incremental~\cite{AusielloIMN91, Bernstein16, KarczmarzL19, KarczmarzL20}
and decremental~\cite{Bernstein16,EvaldFGW21,KarczmarzL20} settings; however, these algorithms work only for unweighted
graphs (or for small-integer weights) and/or produce $(1+\eps)$-approximate answers.

One can observe the following phenomenon: the known combinatorial dynamic algorithms for reachability and shortest paths
typically have good amortized, but much worse worst-case update bounds.
Moreover, in the fully dynamic settings, none of the state-of-the-art combinatorial algorithms~\cite{DemetrescuI04, DemetrescuI08, Roditty08, RodittyZ16, Thorup04}
achieves $O(n^{2-\delta})$ update time and query time at the same time for dense graphs (for any $\delta>0$).\footnote{Abboud and Vassilevska Williams~\cite{AbboudW14} explained why this may be impossible without resorting to fast matrix multiplication.}
King and Sagert~\cite{KingS02} were the first to observe that path counting modulo a sufficiently
large prime can be used to obtain a fully dynamic transitive closure data structure with $O(n^2)$ \emph{worst-case} update time
in the case of acyclic graphs.
Demetrescu and Italiano~\cite{DemetrescuI05} combined this technique with fast rectangular matrix multiplication
and obtained a fully dynamic reachability algorithm for DAGs supporting single-edge updates
in $O(n^{1+\rho})$ worst-case time and queries in $O(n^{\rho})$ time, where $\rho\approx 0.529$
equals the smallest real number such that $\omega(1,\rho,1)=1+2\rho$~\cite{GallU18}.
Here, $\omega(1,\alpha,1)$ denotes an exponent such that a product of $n\times n^{\alpha}$
and $n^\alpha\times n$ matrices can be computed in $O(n^{\omega(1,\alpha,1)})$ time.
Finally, Sankowski~\cite{Sankowski04} obtained the same bounds for fully dynamic reachability
in general graphs by reducing the problem to dynamically maintaining a matrix inverse.
This technique also led to subquadratic fully dynamic algorithms for exact distances in unweighted
graphs~\cite{Sankowski05}, approximate distances in weighted graphs~\cite{BrandN19}, and maximum matchings~\cite{Sankowski07}.
The state-of-the-art bounds for dynamic matrix inverse (and thus also for some of these graph problems)
were given by~\cite{BrandNS19}; we refer to that work for more graph applications
of dynamic matrix inverse.

The common drawback of algebraic dynamic transitive closure and distances algorithms based
on either dynamic path counting or dynamic matrix inverse is that they are Monte Carlo
randomized (as they invoke the Zippel-Schwartz lemma).\footnote{Recall we focus on directed graphs throughout. Interestingly, in a recent manuscript,~\cite{BrandFN21} show a \emph{deterministic} fully dynamic algorithm for approximating distances in \emph{undirected} graphs that uses algebraic techniques.} More importantly, they do not, in general, allow constructing a path that certifies reachability or achieves the reported distance.
To the best of our knowledge, the only exception to the latter drawback is the recent trade-off algorithm of~\cite{BergamaschiHGWW21}
which, if the maximum of query/update time bounds is optimized,
supports single-edge updates and shortest path reporting queries in $O(n^{1.897})$ worst-case
time.\footnote{In \cite{BergamaschiHGWW21}, the description of this algorithm is deferred to the full version~\cite[Section~5.4]{BergamaschiHGWW21full}.}
However, the high probability correctness of the dynamic shortest paths data structure of~\cite{BergamaschiHGWW21}
is only guaranteed against an \emph{oblivious adversary}
that does not base its future updates on the answers that the data structure produces.\footnote{More specifically,
the reported paths in~\cite{BergamaschiHGWW21} can leak randomness of two different randomized
components: the hitting sets of sufficiently long paths, and the witnesses of maintained
matrix products.}
In the recent years, there has been a significant effort of the dynamic graph algorithms
community to obtain solutions that work well against an \emph{adaptive adversary},
possibly by completely avoiding randomization (see, e.g.,~\cite{BernsteinGS20, ChuzhoyK19, EvaldFGW21, NanongkaiS17}).
  Such algorithms are not only more general, but can also be used in a black-box
  way as building blocks for static algorithms.

  To the best of our knowledge, none of the currently known dynamic algorithms for reachability
  or shortest paths that relies on algebraic techniques and has subquadratic update/query bounds is able to report the certifying
  paths and still perform well under the adaptive adversary assumption.

  \subsection{Our results}

  \subsubsection{Fully dynamic path reporting}

  As our first contribution, we show the first known fully dynamic reachability
  algorithms that support edge updates and certificate reporting queries in subquadratic time
  and at the same time work against an adaptive adversary.
  We consider two natural certificate reporting variants:
  \begin{enumerate}[label=(\arabic*)]
    \item point-to-point path reporting for any requested source-target
  pair $s,t\in V$, 
  and,
    \item single-source reachability tree reporting for any requested source $s\in V$.
  \end{enumerate}

  Note that the former variant constitutes a reporting analogue to reachability (transitive closure) queries,
  whereas the latter is an analogue to single-source reachability queries. Recall
  that $\rho\approx 0.529$ is such that $\omega(1,\rho,1)=1+2\rho$. Moreover, if $\omega=\omega(1,1,1)=2$, then $\rho=0.5$.
  The following theorem summarizes the bounds that we obtain for directed acyclic graphs.

\begin{theorem}\label{t:dags}
  Let $G$ be a directed graph subject to fully dynamic single-edge updates that keep~$G$
  acyclic at all times. Then, there exist data structures with the following \emph{worst-case} update and query bounds:
  \begin{enumerate}[label=(\arabic*)]
    \item $\Ot(n^{1+\rho})=O(n^{1.529})$ for point-to-point path reporting,
    \item $\Ot(n^{(3+\rho)/2})=O(n^{1.765})$ for single-source reachability tree reporting.
  \end{enumerate}
  The data structures are Monte Carlo randomized and produce
  answers correct with high probability against an adaptive adversary.
\end{theorem}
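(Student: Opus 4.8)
The plan is to build both data structures on top of two ingredients — a Demetrescu--Italiano-style fully dynamic reachability structure for DAGs, and a fully dynamic topological-order structure — and then recover the certifying path/tree by a recursion, guaranteeing robustness against an adaptive adversary through \emph{self-certification}. Besides the graph $G$ itself (adjacency matrix and lists, updated in $O(1)$ time), I would maintain the structure of~\cite{DemetrescuI05} that, via dynamic path counting modulo a random prime $p$ and fast rectangular matrix multiplication, supports single-edge updates in $\Ot(n^{1+\rho})$ worst-case time and single-pair reachability queries in $\Ot(n^{\rho})$, together with the subquadratic fully dynamic topological-order structure developed in this paper. Two features matter. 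Path counts are \emph{one-sided}: the count modulo $p$ is nonzero whenever a path exists and is exactly $0$ otherwise, so every ``not reachable'' answer is unconditionally correct. And if $u,v,w$ appear in this order in the current topological order then every $u$--$w$ path stays in the topological interval $[u,w]$; hence the reachability structure answers, for free, reachability restricted to any contiguous topological interval.

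\textbf{Point-to-point reporting (item (1)).} Given $s,t$ with $t$ reachable from $s$, fix the current topological order and run a recursion that builds the path while halving a topological interval. At a node handling a pair $(s',t')$ spanning a topological interval $I$, let $c$ split $I$ in the middle and let $L,R$ be the vertices of $I$ of index $\le c$, resp.\ $>c$. Using $O(|I|)$ interval-restricted queries, precompute the sets of vertices of $I$ reachable from $s'$ and co-reachable to $t'$; by binary search over an index threshold (each step a set-intersection test) find the minimum-index $x^\ast\in R$ that is both reachable from $s'$ and co-reachable to $t'$. Then $x^\ast$ is the first $R$-vertex on some simple $s'$--$t'$ path, its predecessor $w^\ast$ on that path lies in $L$, and $w^\ast$ is found by scanning the in-neighbours of the \emph{single} vertex $x^\ast$ for one reachable from $s'$. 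Recurse on $(s',w^\ast)$ inside $G[L]$ and on $(x^\ast,t')$ inside $G[R]$, and splice through the edge $w^\ast x^\ast$. The recursion has depth $O(\log n)$; the nodes at each level span disjoint topological intervals of total length $O(n)$, and a node on an interval of length $k$ costs $\Ot(k\cdot n^{\rho})$ for the queries, $\Ot(k)$ for its binary search and $O(k)$ for the single in-neighbour scan, so each level, and hence the whole reconstruction, costs $\Ot(n^{1+\rho})$; the output is then verified against $G$ in $\Ot(n)$ time. The update time is inherited from the maintained structures.

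\textbf{Single-source tree reporting (item (2)).} First compute $R^+(s)$ with $n$ reachability queries in $\Ot(n^{1+\rho})$. It now suffices, for every $v\in R^+(s)\setminus\{s\}$, to output \emph{any} in-neighbour of $v$ lying in $R^+(s)$: such an in-neighbour always exists, and following these parent pointers strictly decreases the topological index, hence terminates at $s$, so they span a valid reachability tree inside $G$. Brute-force edge inspection costs $O(n^2)$ in the worst case; I would improve this to $\Ot(n^{(3+\rho)/2})=\Ot\bigl(\sqrt{n^{1+\rho}\cdot n^2}\bigr)$ by a meet-in-the-middle at the geometric mean of these two costs: fix a threshold $\Delta=n^{(1+\rho)/2}$; for the at most $n$ vertices of in-degree $\le\Delta$, scan their in-neighbours directly, at total cost $O(n\Delta)=O(n^{(3+\rho)/2})$; for the at most $O(n^2/\Delta)=O(n^{(3-\rho)/2})$ vertices of in-degree $>\Delta$, process them in batch, combining rectangular matrix multiplication on the relevant submatrices of the adjacency/reachability matrices with a random hitting set of $\Ot(n/\Delta)$ vertices that w.h.p.\ meets the in-neighbourhood of every high-in-degree vertex, so that only $\Ot(n^{\rho})$ amortized work is spent per such vertex; balancing the two parts yields query time $\Ot(n^{(3+\rho)/2})$. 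The high-in-degree batch is the step I expect to require the most care: here one has to exploit the freedom to choose \emph{which} reachable vertices receive their parent ``from above'' (recursing on the rest in the style of item (1)) so that heavy vertices are never scanned individually.

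\textbf{Correctness against an adaptive adversary.} Both reconstructions are self-certifying: a reported path, resp.\ tree, is checked against the current $G$ in $\Ot(n)$ time, so we never emit an incorrect certificate, and it remains only to bound the probability that a reconstruction \emph{fails} to find a certificate that exists. By one-sidedness of path counting, conditioned on every answer produced so far being correct, the entire interaction — the current graph and query included — is a fixed function of the adversary's strategy, independent of $p$; hence the probability that the present reconstruction rests on a wrong reachability sub-answer is at most the number of prime divisors of the relevant path count divided by the number of primes in the sampling range, i.e.\ $O(n/n^{\Theta(1)})$. A union bound over the $\poly(n)$ operations, with $p$ drawn from a sufficiently large range, shows that w.h.p.\ every certificate requested is produced correctly; on the low-probability failure event, either the recursion hits a detectable inconsistency or the final verification fails, and we abort within the stated bound. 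The randomness of the hitting set in item (2) is absorbed into the same argument, since a hitting set that fails for the current graph again only triggers a detectable failure.
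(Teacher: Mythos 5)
Your construction for item~(1) is correct and reaches the right bound, but it is a genuinely different route from the paper's. The paper's point-to-point reporting does not divide and conquer: it greedily grows the path, and at the current vertex~$s'$ it walks the topological array $A$ from position~$\pi(s')+1$ upward, issuing a $\dtc$ reachability query for every out-neighbour $y$ of $s'$ encountered, and jumping to the first $y$ that can reach $t$. Since successive jump targets have strictly increasing $\pi$-values, the total number of queries over the whole recursion telescopes to at most $\pi(t)-\pi(s)\le n$ (Lemma~\ref{l:top-path}). Your divide-and-conquer version replaces the telescoping by interval disjointness per level and has $O(\log n)$ levels, so it also lands at $\Ot(n^{1+\rho})$. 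The one detail you should pin down is how you enumerate the in-neighbours of the pivot $x^\ast$ that lie in $L$ in $O(|I|)$ rather than $O(\deg_{\mathrm{in}}(x^\ast))$ time; iterating over the topological subarray $A[L]$ and consulting the adjacency matrix, rather than walking the adjacency list of $x^\ast$, resolves it. (Using the path-counting structure of \cite{DemetrescuI05} instead of Sankowski's matrix-inverse structure \cite{Sankowski04} is fine for DAGs and does not change the bounds.)

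For item~(2) there is a genuine gap. You propose to handle a high in-degree vertex $v$ with a random hitting set of size $\Ot(n/\Delta)$ that ``w.h.p.\ meets the in-neighbourhood of every high-in-degree vertex.'' But what you need is not an in-neighbour of $v$, it is an in-neighbour in $R^+(s)$, and a vertex with $\deg_{\mathrm{in}}(v)\gg\Delta$ may perfectly well satisfy $|N^-(v)\cap R^+(s)|=1$. A hitting set tuned to large sets does not hit a set of size~$1$, so your batch step can fail to produce \emph{any} admissible parent for such $v$, and self-certification cannot rescue this: you would be stuck with a partial tree, not a verifiable-then-abortable certificate; you must actually produce the tree. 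The remainder of that paragraph (``rectangular products on the relevant submatrices,'' ``$\Ot(n^\rho)$ amortized per heavy vertex,'' ``freedom to choose which reachable vertices receive their parent from above'') does not let me reconstruct a correct algorithm. The paper never scans in-neighbourhoods at all. It keeps $q=\lceil n/\Delta\rceil$ auxiliary three-layer dynamic reachability structures $\dtc_1,\ldots,\dtc_q$, where $\dtc_i$ runs on a graph $G_i$ over $V\cup V'\cup V''$ containing $uv$ and $u'v''$ for every $uv\in E(G)$ and the link edge $uu'$ exactly when $u$ lies in the $i$-th length-$\Delta$ block. Lemma~\ref{l:detect-interval} shows that $u\to v''$ exists in $G_i$ iff some $u\to v$ path in $G$ has its \emph{penultimate} vertex in that block. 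So for each $t$, $O(n/\Delta)$ queries locate the smallest block containing an admissible parent, and a $O(\Delta)$-time local scan inside that block yields the minimum-label one. This is $O(n^{2+\rho}/\Delta+n\Delta)$ per query, balanced at $\Delta=n^{(1+\rho)/2}$, and it is deterministic once the correct reachability answers are fixed, which is exactly what makes the adaptive-adversary argument go through. You should replace the hitting-set step by this block-restricted penultimate-vertex idea.

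One smaller remark: you have the one-sidedness of path counting backwards. With a random prime $p$, a nonzero count mod~$p$ is a certificate that a path exists; a zero count may be a false negative, since the true count could be a positive multiple of~$p$. Thus the ``reachable'' answers are the unconditionally correct ones, not the ``not reachable'' ones. This does not break your adaptive-adversary argument, because that argument -- like the paper's -- really only uses that, conditioned on no error having occurred, every answer is a deterministic function of the current graph and hence leaks nothing about the seed; one-sidedness plays no role. But you should state it correctly.
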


In order to obtain the former of the above data structures, we observe that a topological order
of a fully dynamic acyclic graph can be maintained in $\Ot(n^{1+\rho})=O(n^{1.529})$ worst-case
time per update as well. Whereas this observation is a simple consequence of
the known dynamic reachability algorithms with subquadratic update time and sublinear query time~\cite{DemetrescuI05,Sankowski04},
to the best of our knowledge, it has not been described before.
Dynamic topological ordering has been mostly studied in the incremental setting,
and multiple algorithms with non-trivial total update time bounds are known~\cite{BenderFGT16, BernsteinC18, BhattacharyaK20}.

\newcommand{\sccs}{\mathcal{S}}
\newcommand{\decscc}{\mathcal{C}}

Even more importantly, we show the following reductions of the respective variants
on general graphs to suitable reporting variants of the \emph{decremental strongly connected components}
problem.
In this paper, we use the term \emph{strongly connected components} to refer to the set $\sccs$ of equivalence classes of the strong connectivity relation,
and not the subgraphs $G[S]$, $S\in \sccs$

\begin{restatable}{theorem}{tpathreporting}\label{t:path-reporting}
  Let $\decscc$ be a decremental data structure with total update time\footnote{Throughout, we assume that the total update time also captures the initialization time. Moreover, for decremental data structures, $m$ denotes the initial number of edges of the data structure's input graph.} ${T(n,m)=\Omega(m+n)}$
  (1) maintaining strongly connected components  explicitly and (2) supporting
  queries reporting a simple path $P$ between arbitrary \emph{strongly connected} vertices $u,v$
  of the maintained graph in \linebreak $O(|P|\cdot n^{\rho})$ time.
  Then, there exists a fully dynamic data structure supporting single-edge updates and
  point-to-point path reporting queries
  with amortized update time and worst-case query time of $\Ot\left(\sqrt{T(n,m)\cdot n}+n^{1+\rho}\right)$.

  If the data structure $\decscc$ works against an adaptive adversary,
  so does the fully dynamic path reporting data structure.
\end{restatable}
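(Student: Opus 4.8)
The plan is to turn the given decremental data structure $\mathcal{C}$, together with a fully dynamic \emph{DAG} path reporter $\mathcal{D}$ from Theorem~\ref{t:dags}(1), into a fully dynamic point‑to‑point path reporter by processing updates in phases of length $\Delta$ (to be fixed). At the start of a phase I record the current graph $G_0$, (re)build $\mathcal{C}$ on $G_0$, and maintain a Theorem~\ref{t:dags}(1) structure $\mathcal{D}$ on the condensation of $G_0$ (a DAG on at most $n$ vertices, here identified with one \emph{hub} vertex per strongly connected component). During the phase, edge deletions affecting $G_0$ are forwarded to $\mathcal{C}$, while edge insertions are only collected in a set $I$ with $|I|\le\Delta$ (and deletions of $I$‑edges are removed from $I$). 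Since $\mathcal{C}$ maintains the SCCs of $G_0$ explicitly and the strong‑connectivity relation only refines under deletions, the condensation of $G_0$ changes only through node splits, and each split is propagated to $\mathcal{D}$ as a short sequence of single‑edge updates, with hubs chosen to follow the largest surviving part so that the number of "hub‑reassignment" updates is controlled by a smaller‑half argument and each edge of $G_0$ becomes an inter‑SCC edge at most once per phase. The parameter $\Delta$ is then chosen to balance the amortized rebuild/total‑update cost $T(n,m)/\Delta$ of $\mathcal{C}$ against the within‑phase per‑update cost, which produces the $\sqrt{T(n,m)\cdot n}$ term; the $n^{1+\rho}$ term comes from a single $\mathcal{D}$ operation and from a single call to $\mathcal{C}$'s path reporter on a path of length at most $n$.

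For a query $(s,t)$ I would use the reformulation: $s$ reaches $t$ in the current graph $G=G_0\cup I$ iff $[s]$ reaches $[t]$ in the auxiliary graph $H$ whose vertices are the $G_0$‑SCCs of $s$, of $t$, and of the at most $2\Delta$ endpoints of edges of $I$, and whose arcs are (a) the $I$‑edges contracted to their $G_0$‑SCCs, and (b) an arc $X\to Y$ whenever $X$ reaches $Y$ in the condensation of $G_0$, tested via $\mathcal{D}$; $H$ has $O(\Delta)$ vertices and $O(\Delta^2)$ arcs and is built and searched in $O(\Delta^2 n^{\rho})$ time. A path witnessing $[s]\rightsquigarrow[t]$ in $H$ is expanded: each $I$‑arc becomes its original edge, each type‑(b) arc is expanded via $\mathcal{D}$ into an actual sequence of condensation edges of $G_0$, and each traversal of an SCC $C$ of $G_0$ (entered and left at concrete vertices) is expanded via $\mathcal{C}$'s path reporting into an actual walk inside $G_0[C]$. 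Concatenating these pieces yields a walk from $s$ to $t$ in $G$, which I truncate to a simple path; since the truncated path has at most $n$ vertices, this costs $\Ot(n^{1+\rho})$ when the reporting of $\mathcal{C}$ and $\mathcal{D}$ is output‑sensitive (and $\Ot(\Delta\cdot n^{1+\rho})$ otherwise, which is then folded into the balancing of $\Delta$). Correctness of the reformulation is a routine case analysis: an $s$–$t$ path in $G$ projects to an $H$‑walk by cutting it at $I$‑edges, and conversely an arc $X\to Y$ of type (b) certifies a $G_0$‑path between some vertices of $X$ and $Y$ and each $X$ is strongly connected in $G_0\subseteq G$, so an $H$‑walk lifts back to a genuine $s$–$t$ walk in $G$.

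Finally, the whole reduction is deterministic given the outputs of $\mathcal{C}$ and $\mathcal{D}$, so it leaks no randomness of its own; hence if $\mathcal{C}$ is correct against an adaptive adversary (and the Theorem~\ref{t:dags} structure is), so is the resulting data structure. The step I expect to be the main obstacle is keeping $\mathcal{D}$ faithfully synchronized with the condensation of $G_0$ within a phase: a single deletion can split one SCC into many pieces, and a naive update of the condensation would feed $\mathcal{D}$ up to $\Omega(m)$ single‑edge operations, far too many since each costs $\Omega(n^{1+\rho})$. Making the per‑phase cost of this synchronization acceptable — via the hub encoding, charging each edge at most once as it becomes inter‑SCC, and never rebuilding $\mathcal{D}$ inside a phase — and then choosing $\Delta$ so that this cost, the rebuild/total‑update cost $T(n,m)/\Delta$, the interface‑graph cost $O(\Delta^2 n^{\rho})$, and the reporting cost all fit under $\Ot\!\left(\sqrt{T(n,m)\cdot n}+n^{1+\rho}\right)$, is the technical heart of the proof.
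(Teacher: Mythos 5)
Your high-level plan (phases, a decremental SCC structure $\decscc$ for intra-component paths, an algebraic data structure for inter-component reachability, and an $O(\Delta)$-vertex interface graph at query time) matches the paper's strategy in spirit, but the central step you flag as ``the technical heart'' is precisely where the argument breaks, and the fix you sketch does not work.

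Maintaining a Theorem~\ref{t:dags}(1)-style DAG path reporter $\mathcal{D}$ on the \emph{condensation} of $G^-$ is exactly the approach the paper explains is infeasible (see the discussion at the start of Section~\ref{s:tree-general} and the overview in Section~4.2). When an SCC splits, the condensation undergoes a vertex split that can change $\Theta(n)$ edge endpoints, and over a phase up to $\Theta(m)$ original edges of $G^-$ can become inter-SCC edges for the first time. Your hub-following-largest-piece trick and the smaller-half charging do bound the number of condensation-edge \emph{changes} by $\Ot(m)$ per phase, but each such change must be fed to $\mathcal{D}$ as a single-edge update costing $\Theta(n^{1+\rho})$; the per-phase cost is therefore $\Ot(m\cdot n^{1+\rho})$, and after amortizing over $\Delta$ updates you are nowhere near $\Ot(\sqrt{T(n,m)\cdot n}+n^{1+\rho})$ for any admissible $\Delta$. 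No choice of hub encoding fixes this because the bottleneck is the sheer number of edge updates to $\mathcal{D}$, not the number of hub reassignments. The paper sidesteps this entirely: it never runs any Theorem~\ref{t:dyn-tr}-type structure on the condensation. Instead, the structure $\mathcal{Q}$ of Theorem~\ref{t:dyn-tr} maintains $G^-$ itself (deletions only, so $O(n^{1+\rho})$ per update with no vertex-split issues), the condensation and a nested topological labeling are maintained \emph{combinatorially} via Lemma~\ref{l:decscc-extension} in $O(m\log n)$ total time per phase, and Lemma~\ref{l:path-in-condensation} runs the DAG path-finding recursion on the condensation \emph{using $\mathcal{Q}$-queries on $G^-$} (reachability between SCCs reduces to reachability between representative vertices) — no algebraic structure on the condensation is ever needed.

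There is a second gap in the query analysis. Even granting a synchronized $\mathcal{D}$, expanding each type-(b) arc of your interface graph $H$ into a concrete condensation path via $\mathcal{D}$ costs up to $\Theta(n^{1+\rho})$ per arc, and there can be $\Theta(\Delta)$ such arcs on the reported walk, giving $\Ot(\Delta\cdot n^{1+\rho})$; folding this into the balance forces $\Delta=\sqrt{T/n^{1+\rho}}$ and yields $\Ot(\sqrt{T\cdot n^{1+\rho}})$, which is polynomially worse than the claimed $\Ot(\sqrt{T\cdot n})$. The paper avoids this via Lemma~\ref{l:query-partition}, which uses the $\dtc$ structure of Lemma~\ref{l:dyn-tr-general} (maintaining reachability from the $\le F$ inserted heads explicitly, in $O(n^{1+\rho}+nF)$ per update) to partition the reachable set of $s$ into \emph{pairwise disjoint} regions $V_{s,0},\dots,V_{s,k}$, one per needed $E^+$-edge; each maximal $G^-$-subpath is then found inside its own region $V_{s,i}$ in $O(|V_{s,i}|\cdot n^\rho)$ time by Lemma~\ref{l:path-in-condensation}, so the disjointness makes the \emph{total} query cost $\Ot(n^{1+\rho}+nF)$. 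Your interface graph $H$ on $O(\Delta)$ SCCs, with $O(\Delta^2)$ reachability tests, has no analogous disjointness guarantee, which is why both the $O(\Delta^2 n^\rho)$ construction cost and the $O(\Delta\cdot n^{1+\rho})$ expansion cost fail to telescope.
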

\begin{restatable}{theorem}{ttreereporting}\label{t:tree-reporting}
  Let $\decscc$ be a decremental data structure with total update time $T{(n,m)=\Omega(m+n)}$\linebreak
  (1) maintaining strongly connected components $\sccs$ of $G$ explicitly and (2) supporting
  queries reporting, for a chosen strongly connected component $S\in\sccs$,
  a (possibly sparser) strongly connected subgraph $Z\subseteq G[S]$ with $V(Z)=S$ 
  in $O(|S|\cdot n^{(1+\rho)/2})$ time. Then, there exists a fully dynamic data structure supporting
  single-edge updates and single-source reachability tree reporting queries
  with amortized update time and worst-case query time of $\Ot\left(\sqrt{T(n,m)\cdot n}+n^{(3+\rho)/2}\right)$.

  If the data structure $\decscc$ works against an adaptive adversary,
  so does the fully dynamic reachability tree reporting data structure.
\end{restatable}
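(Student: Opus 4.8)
The plan is to reduce single-source reachability tree reporting in $G$ to two tasks: (a) reporting a reachability tree in the \emph{condensation} $\widehat{G}$ of $G$ --- the DAG on the strongly connected components of $G$ --- and (b) for each component appearing on that tree, reporting a sparse strongly connected subgraph spanning it. The first task is served by the single-source reachability tree reporting data structure for DAGs from \cref{t:dags}(2), run on a dynamically maintained $\widehat{G}$; the second is exactly a query to $\decscc$. Given these, a query for source $s$ proceeds as follows: first obtain, in $\Ot(n^{(3+\rho)/2})$ time, an out-tree $\mathcal{T}$ in $\widehat{G}$ rooted at the component $\widehat{s}$ containing $s$ and spanning all components reachable from $\widehat{s}$; every tree edge of $\mathcal{T}$ corresponds to an actual edge $(x_{S'},y_S)$ of $G$ with $x_{S'}\in S'$, $y_S\in S$ (set $y_{\widehat{s}}:=s$). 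Then, for each component $S$ on $\mathcal{T}$, query $\decscc$ for a strongly connected $Z_S\subseteq G[S]$ with $V(Z_S)=S$, run a BFS of $Z_S$ from $y_S$, and splice the resulting per-component out-trees along the edges of $\mathcal{T}$. Since $Z_S$ is strongly connected, the BFS reaches every tail $x_S$ of an edge of $\mathcal{T}$ leaving $S$, so the splice is a genuine out-tree; it uses only edges of $G$, and it spans exactly $\bigcup_{S\in\mathcal{T}}S$, which is the set of vertices reachable from $s$. As the components in $\mathcal{T}$ are pairwise disjoint, the $\decscc$-queries cost $\sum_S O(|S|\cdot n^{(1+\rho)/2})=O(n^{(3+\rho)/2})$, and the BFS/splicing costs $O(n)$.

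Since $\decscc$ is only decremental, I would make it fully dynamic by the standard batching scheme: pick $\tau$, rebuild everything every $\tau$ updates, and between rebuilds keep the set $I$ of inserted edges ($|I|\le\tau$) explicitly while feeding deletions to $\decscc$, which thus maintains the components of $H:=G_0\setminus D$, where $G_0$ is the last snapshot and $D$ the deletions since. Rebuilding amortizes to $\Ot(T(n,m)/\tau)$ per update. I would additionally keep algebraic dynamic reachability structures (on $G$, and on $H$ within each phase) updated in $\Ot(n^{1+\rho})$ worst-case time per update --- which is within budget, as $n^{1+\rho}\le n^{(3+\rho)/2}$ --- to locate the vertices reachable from $s$ and, together with $I$, to recover the components of $G$ and the part of $\widehat{G}$ relevant to a given query. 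Balancing $T(n,m)/\tau$ against the per-update cost of keeping $\widehat{G}$ and the \cref{t:dags}(2)-structure over it up to date gives $\tau=\Theta\!\big(\sqrt{T(n,m)/n}\big)$ and the claimed amortized update and worst-case query bound $\Ot\!\big(\sqrt{T(n,m)\cdot n}+n^{(3+\rho)/2}\big)$.

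The hard part will be maintaining $\widehat{G}$ --- equivalently, the strongly connected components of the fully dynamic graph and their condensation --- so that the \cref{t:dags}(2)-structure receives only $\Ot(1)$ amortized updates per update to $G$, and so that a query can reconstruct the relevant portion of $\widehat{G}$ within the stated time. Here I would lean on two structural facts. First, within a phase $H$ loses at most $\tau$ edges, so at most $\tau$ of the phase-start components are ever split, and the refinement work $\decscc$ performs on them is already charged to $T(n,m)$; thus the changes the condensation of $H$ undergoes can be pushed into the DAG structure cheaply (essentially one edge deletion per cross-component deletion in $G$). Second, since the condensation of $H$ is acyclic, every component of $G=H\cup I$ that is not already a component of $H$ must contain an endpoint of an edge of $I$; hence all ``merges'' are localized to the $O(\tau)$ endpoints of $I$, and an auxiliary structure over these $O(\tau)$ terminals together with the algebraic reachability structure on $H$ suffices to recover, from the components of $H$ and the edges of $I$, both the components of $G$ on the tree and one representative incoming edge for each. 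The main obstacle is carrying this out without introducing a $\tau^2$ or a $\tau\cdot n^{1+\rho}$ term anywhere, and stitching the constituent $H$-components of a $G$-component $S$ into a sparse strongly connected $Z_S$ using $O(1)$ work per constituent plus the incident edges of $I$; this is, in effect, the ``subquadratic fully dynamic strongly connected components'' subresult promised in the introduction. The point-to-point variant (\cref{t:path-reporting}) follows from the same construction, replacing the tree query on $\widehat{G}$ by a point-to-point path query and each $Z_S$ by a single simple path between the two relevant vertices of $S$.

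Finally, the adaptive-adversary guarantee is inherited: $\decscc$ is adaptive by hypothesis, while the \cref{t:dags}(2)-structure and the algebraic reachability structures are Monte Carlo algorithms correct with high probability against an adaptive adversary; and the only objects the composed data structure ever returns are subgraphs of the current $G$ --- actual vertex sequences and trees of real edges --- so it exposes no internal randomness to the adversary. Hence the composed structure answers correctly with high probability against an adaptive adversary, as claimed.
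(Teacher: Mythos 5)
Your high-level framework --- phases of length $\tau$, algebraic reachability structures together with $\decscc$, assembling a small subgraph of $G$ that supports BFS from $s$, and per-component sparse strongly connected subgraphs $Z_S$ spliced along a condensation tree --- is essentially the same plan the paper follows, and the adaptivity argument is also the paper's. But you have left the crux of the proof as a gap, and the way you sketch filling it does not work: you propose to maintain the condensation $\widehat{G}$ (or the condensation $H/\sccs$ of the decremental graph) as the underlying graph of the \cref{t:dags}(2) data structure, and you assert that ``the changes the condensation of $H$ undergoes can be pushed into the DAG structure cheaply (essentially one edge deletion per cross-component deletion in $G$).'' That is false, and the paper explicitly discusses why (see the overview in Section~4.2). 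When a single edge deletion causes an SCC $S$ to split, the condensation undergoes a \emph{vertex split}: possibly $\Theta(n)$ condensation edges change endpoints, and moreover edges of $G^-$ that were internal to $S$ and therefore absent from $G^-/\sccs$ can suddenly become cross-component edges that must be \emph{inserted} into the condensation. Over a single decremental phase, $\Theta(n^2)$ such condensation edge updates can accrue, and each update to an algebraic structure in the style of \cref{t:dyn-tr} already costs $n^{1+\rho}$ --- giving a per-phase cost of order $n^{3+\rho}$, which is far above any of the budgets you set up (it is not a stray $\tau^2$ or $\tau\cdot n^{1+\rho}$ term; it is a genuine $n^{3+\rho}$-type blowup). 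Also note that the ``fully dynamic SCC'' subresult you appeal to (Section~5 of the paper) maintains the SCC \emph{partition}, not the condensation as a dynamically updated graph inside an algebraic data structure, so it does not rescue this step.

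The paper's fix --- the ingredient your sketch is missing --- is to avoid ever reflecting $G^-/\sccs$ in the algebraic data structures. Instead, one maintains (deterministically, via Lemma~\ref{l:decscc-extension}) a topological \emph{labeling} $\pi$ of the SCCs of $G^-$ with a crucial \emph{nesting} property: when an SCC $S$ splits into $S_1,\dots,S_k$, the label intervals $[\pi(S_i),\pi(S_i)+|S_i|-1]$ form a partition of $[\pi(S),\pi(S)+|S|-1]$. One then runs the interval trick of Section~3.3 on $G^-$ itself (not on the condensation), picking for each target the incoming edge whose tail has minimum $\pi$-label, and maintains growing sets $Y_i$ of vertices whose $\pi$-interval has become and will \emph{remain} contained in $[(i-1)\Delta+1,i\Delta]$ for the rest of the phase (their SCC is ``non-special''). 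Because of nesting, each vertex is added to some $Y_i$ at most once per phase, so the total algebraic-update cost per phase is $O(n^{2+\rho})$, which amortizes correctly. SCCs whose label interval straddles a boundary $j\Delta$ are ``special'', and there are only $O(n/\Delta)$ of them at any time (Observation~\ref{o:special}); they are handled by explicitly listing their outgoing edges in the condensation. The subgraph $\mathcal{H}$ constructed at query time then combines (i) one min-$\pi$ incoming edge per target found via the $Y_i$ structures, (ii) explicit out-edges of the special SCCs, and (iii) the sparse strongly connected subgraphs from $\decscc$; Lemma~\ref{l:reach-subgraph} shows this suffices. Your BFS-and-splice step at the end is then fine, but the way you propose to obtain the tree $\mathcal{T}$ --- running \cref{t:dags}(2) on a dynamically maintained $\widehat{G}$ --- is exactly the approach the paper shows cannot be afforded.
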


In particular, a deterministic decremental strongly connected components data structure with total update time $T(n,m)=mn^{2/3+o(1)}$ that satisfies the requirements
of both reductions has been recently shown by~\cite{BernsteinGS20}, i.e., the query time of their data structure is $|P|\cdot n^{o(1)}=$ \linebreak $O(|P|\cdot n^{\rho})$ for path reporting and $|S|\cdot n^{o(1)}$ for strongly connected subgraph reporting, respectively.
Hence, we obtain $n^{1+5/6+o(1)}$  \emph{amortized} update bound and worst-case query time
for both variants against an adaptive adversary.
Thanks to our reductions, further progress on deterministic (or adaptive\footnote{That is, allowing path-reporting queries within a strongly connected component against an adaptive adversary.}) decremental strongly connected components
problem will lead to improved bounds for subquadratic fully dynamic path- and tree-reporting
data structures.

We also note that if the oblivious adversary assumption is acceptable,
by plugging in the near-optimal decremental strongly connected components
data structure of~\cite{BernsteinPW19} with $T(n,m)=\Ot(n+m)$ as $\decscc$,
we can obtain the same asymptotic (but still amortized) bounds for the respective reporting
variants as in Theorem~\ref{t:dags} for acyclic graphs.

As a warm-up to proving Theorems~\ref{t:path-reporting}~and~\ref{t:tree-reporting},
we also show that the strongly connected components of a \emph{fully dynamic}
graph can be explicitly maintained under single-edge updates in $\Ot(n^{1+\rho})$ \emph{amortized}
time per update. To the best of our knowledge, no non-trivial bounds
have ever been described for the strongly connected components problem
in the fully dynamic setting for general digraphs.
Such a bound (in fact, a worst-case bound) has only been given for an easier problem of testing whether
the graph \emph{is strongly connected}~\cite{BrandNS19}.
It is known that one cannot achieve a subquadratic update bound for this problem
using combinatorial methods~\cite{AbboudW14}.

\subsubsection{Deterministic incremental algorithms with subquadratic worst-case bounds}
As our second contribution, we show that randomization is not always required for obtaining
subquadratic \emph{worst-case} update bounds for dynamic reachability or shortest paths problems in directed graphs.\footnote{See~\cite{BrandFN21} for subquadratic \emph{deterministic} fully dynamic approximate distances algorithms in \emph{unweighted undirected} graphs.}
Namely, we show that in the \emph{incremental} setting, there exist \emph{deterministic} \emph{path-reporting} algebraic
data structures for reachability, and (approximate) shortest paths.
These data structures completely avoid using the Zippel-Schwartz lemma.

Whereas in the incremental setting one usually studies the total update time,
obtaining incremental data structures with good worst-case bounds is important for the following
additional reasons. First of all, such data structures can efficiently handle \emph{rollbacks}, i.e.,
are capable of reverting the most recent insertion within the same worst-case time bound.
As a result, they are useful in certain limited fully dynamic settings as well.
This property can be also used to obtain \emph{offline} fully dynamic algorithms\footnote{That is, in the case when the entire sequence of updates issued is known beforehand.} with the same update bound in a black-box way (up to polylogarithmic factors).
Formally, we have the following transformation\footnote{Technically, the transformation requires making the incremental data structure fully persistent first. However, this can be easily achieved deterministically and without introducing additional amortization using standard methods~\cite{Dietz89a, DriscollSST89}, at the cost of only a polylogarithmic slowdown of updates and queries.} (see, e.g.~\cite[Theorem 1]{LackiS13}).
\begin{lemma}\label{l:offline}
Suppose there is an \emph{incremental} data structure maintaining some information about
the graph $G$ with initialization time $I(n,m)$, worst-case update time $U(n,m)$, and query time $Q(n,m)$.
Then, one can preprocess a sequence of $t$ \emph{fully dynamic} updates to $G$ (given offline) in $\Ot(I(n,m)+t\cdot U(n,m))$ time,
so that queries about any of the $t+1$ versions of the graph are supported in $\Ot(Q(n,m))$ time.
The transformation is deterministic.
\end{lemma}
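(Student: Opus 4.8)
The plan is to reduce the offline fully dynamic problem to the incremental one by a standard divide-and-conquer over the timeline, backed by full persistence of the incremental structure. First I would make the given incremental data structure \emph{fully persistent} using the node-copying / fat-node techniques of~\cite{Dietz89a, DriscollSST89}; this is deterministic and costs only a polylogarithmic overhead on each update and query, so the updated bounds become $\Ot(I(n,m))$ for initialization, $\Ot(U(n,m))$ per incremental update, and $\Ot(Q(n,m))$ per query, while allowing us to branch the structure from any past version. Throughout, note that $m$ here should be read as a fixed upper bound on the number of edges ever present (the original edge set plus all inserted edges), so that all $I,U,Q$ bounds are uniform over the run.

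Next I would set up the divide-and-conquer. Let the offline update sequence be $\sigma_1,\dots,\sigma_t$, each either an edge insertion or an edge deletion. Recursing on a subinterval $[i,j]$ of update indices, I maintain the invariant that, upon entering the recursive call, a persistent incremental structure is in a state containing exactly the edges that are present \emph{for the entire duration} of $[i,j]$ --- i.e.\ edges that are present after $\sigma_{i-1}$ and are never deleted within $\sigma_i,\dots,\sigma_j$ --- ignoring edges already accounted for by ancestors. Concretely, the recursion at $[i,j]$ does the following: let $k=\lfloor (i+j)/2\rfloor$; collect the set $D$ of edges that are deleted somewhere in $[i,k]$ but not in $[k{+}1,j]$ (these are ``stable'' on the right half), insert them via the persistent incremental structure, and recurse on $[k{+}1,j]$; symmetrically, branch again from the state at entry, insert the edges deleted somewhere in $[k{+}1,j]$ but not in $[i,k]$, and recurse on $[i,k]$. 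At a leaf $[i,i]$, the structure now contains every edge present in version $i$ of the graph, so version-$i$ queries are answered directly in $\Ot(Q(n,m))$ time. The key accounting observation is the classic one for segment-tree / persistence arguments: each of the $t$ updates touches $O(\log t)$ recursion nodes along a root-to-leaf path, so the total number of incremental insertions performed across the whole recursion is $\Ot(t)$; hence the preprocessing time is $\Ot(I(n,m) + t\cdot U(n,m))$, and persistence lets sibling recursive calls reuse a common ancestor state without recomputation.

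The step requiring the most care is verifying the invariant and the insertion counting simultaneously: one must check that when we reach a leaf $[i,i]$, the accumulated inserted edges across all ancestors, together with the edges inserted in $\sigma_1,\dots,\sigma_i$ that are never later deleted before time $i$ but do not persist through some ancestor interval, exactly reconstruct $G$'s edge set at time $i$. This is a routine but slightly fiddly induction: an edge $e$ present at time $i$ is present throughout the unique maximal dyadic interval of the recursion containing $i$ on which $e$ is stable, and is inserted precisely once on the path from the root to the leaf $[i,i]$ --- at the shallowest node where $i$'s half is stable for $e$ but the sibling half (or the parent interval) is not. (Edges that are already present in the initial $G$ and never deleted before time $i$ are handled by noting the recursion also starts from the initial loaded graph; an edge inserted and not yet present at time $i$ simply is never inserted on this path.) Summing $O(\log t)$ insertions per update gives the bound; there is no obstacle beyond this bookkeeping, since the persistence machinery and the incremental primitive are used entirely as black boxes, and determinism is preserved because every ingredient --- the persistence transformation, the interval splitting, and the incremental structure --- is deterministic.
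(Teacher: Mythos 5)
Your high-level plan is the standard one and matches what the paper (and the cited~\cite{LackiS13}) have in mind: make the incremental structure fully persistent via~\cite{Dietz89a,DriscollSST89}, build a segment-tree/dyadic recursion over the timeline $[1,t]$, insert at each recursion node the edges that are ``newly stable'' on that node's interval, and answer each time-$i$ query at the leaf $[i,i]$. The $O(\log t)$-insertions-per-edge-lifespan accounting is the right way to get the $\Ot(t\cdot U(n,m))$ preprocessing bound, and determinism clearly survives since every ingredient is deterministic.

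However, the concrete characterization of the set $D$ you insert when descending into a child is wrong as stated. You define $D$ as the edges ``deleted somewhere in $[i,k]$ but not in $[k{+}1,j]$,'' but the set you actually need is: edges present throughout $[k{+}1,j]$ that are \emph{not} present throughout $[i,j]$. These are not the same. Concretely, take an edge $e$ inserted at time $5$ and never deleted, with $i=1$, $k=10$, $j=20$: then $e$ is present throughout $[11,20]$ but not throughout $[1,20]$, so it must be inserted before recursing right, yet $e$ is never deleted in $[1,10]$ and so fails your condition. Your condition only captures edges whose instability in $[i,k]$ is due to a deletion, not edges whose instability is due to a late first insertion. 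Similarly, ``not deleted in $[k{+}1,j]$'' does not imply ``present throughout $[k{+}1,j]$'' (the edge could simply be absent the whole time). The clean fix is to precompute, for each edge, its maximal \emph{presence intervals} (lifespans); decompose each lifespan into $O(\log t)$ canonical dyadic intervals of the recursion tree; and assign the edge to those nodes. Then $D$ at a node is exactly the edges assigned to it, the invariant holds by construction, and the counting argument (at most $m+t$ lifespans, each contributing $O(\log t)$ insertions, with never-deleted initial edges absorbed into $I(n,m)$) gives the claimed bound. You flag the induction as ``routine but slightly fiddly,'' which is fair, but the fiddliness is precisely where your stated condition breaks, so it needs to be repaired rather than waved through.

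Also be slightly careful with the bound on the number of lifespans contributing insertions: initial edges of $G$ that are later deleted start a lifespan at time $0$, so there can be up to $m$ such lifespans in addition to the at-most-$t$ lifespans started by insertions. The ones never deleted can be loaded once into the root (absorbed by $I(n,m)$), and the ones eventually deleted are at most $t$ in number (each consumes a deletion), so the total remains $O(t\log t)$ extra insertions and the bound is unaffected; it's just worth saying explicitly.
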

Offline algorithms are, in turn, important from the hardness perspective -- many of the known conditional
lower bound techniques for dynamic problems (see, e.g,~\cite{AbboudW14, GutenbergWW20}) apply to the offline setting as well,
and thus obtaining a faster offline algorithm can exclude the possibility that a certain conditional lower bound exists.

For incremental reachability, we show the following.
\begin{theorem}\label{t:incremental-reach-sum}
  There exist deterministic incremental reachability data structures supporting:
  \begin{itemize}
    \item single-edge insertions in $\Ot(n^{1+\rho})=O(n^{1.529})$ worst-case time and path-reporting queries in $\Ot(n^{\rho})+O(|P|)=O(n^{0.529}+|P|)$ time,
      where $P$ is the reported simple path,\footnote{We stress that if one only cares about path existence, and not a certificate,
      the $|P|$ term can be omitted. This also applies to other results stated in this section.}
    \item insertions of at most $n$ \emph{incoming edges} of a single vertex, and single-source reachability tree
      queries, both in $O(n^{1.529})$ worst-case time.
  \end{itemize}
\end{theorem}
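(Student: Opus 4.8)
The plan is to reduce incremental reachability with path reporting to the known subquadratic fully dynamic algebraic reachability data structure for DAGs (of Demetrescu--Italiano \cite{DemetrescuI05} and Sankowski \cite{Sankowski04}) plus a path-recovery mechanism that avoids the Zippel--Schwartz lemma. The key observation is that in the \emph{incremental} setting edge insertions can only \emph{merge} strongly connected components, never split them. Hence I would maintain, alongside the vertex set $V$, a partition of $V$ into the current SCCs, and run the algebraic reachability data structure on the condensation (the DAG obtained by contracting each SCC to a single super-vertex). Each edge insertion either stays inside an existing SCC (do nothing in the condensation), or creates a new edge in the condensation, or causes a chain of SCCs lying on a newly closed cycle to be merged into one super-vertex. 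A merge of $k$ super-vertices into one costs $O(k)$ edge updates plus $O(k)$ vertex ``relabel'' operations in the condensation, and since each merge decreases the number of super-vertices, the total number of such condensation updates over the whole insertion sequence is $O(n)$; charging them to individual insertions gives $O(n^{1+\rho})$ worst-case amortized cost, but with the rebuild/rollback trick one can in fact make this worst-case per update (rebuilding the data structure periodically, or simply noting each single insertion triggers at most one merge whose size is bounded by $n$, and we are allowed $\Ot(n^{1+\rho})$ per update anyway).

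For path reporting, the first bullet follows by combining three ingredients: (i) the condensation-DAG reachability structure answers an $s$-$t$ reachability query and, via the sublinear-query path-reconstruction of the algebraic DAG data structure, returns a simple path $P'$ in the condensation in $\Ot(n^{\rho})+O(|P'|)$ time; (ii) within each SCC on $P'$ we need to splice in a short path between the relevant entry and exit vertices --- here I would maintain, for each SCC, a sparse strongly connected certificate subgraph (e.g.\ an in-tree and out-tree rooted at an arbitrary representative, of total size $O(|S|)$), updated incrementally as SCCs merge, so that an intra-component path between any two of its vertices can be extracted in time linear in its length; (iii) concatenating these pieces yields a (not necessarily simple) $s$-$t$ walk, which is then shortcut to a simple path in $O(|P|)$ time by a standard single pass removing cycles. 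The total query time is $\Ot(n^{\rho}) + O(|P|)$ as claimed. The second bullet --- insertion of up to $n$ incoming edges of a single vertex $v$ --- is handled by observing that such a batch update to $G$ corresponds to a single-row (or single-column) update to the relevant matrix in the algebraic data structure, which the dynamic matrix inverse / rectangular matrix multiplication machinery supports in $\Ot(n^{1+\rho})$ worst-case time; the single-source reachability tree is then read off by doing one sublinear-query call per vertex is too slow, so instead I would extract the tree directly from the maintained reachability matrix / inverse in $O(n^{1+\rho})$ time by a BFS-like traversal that uses $O(n^{\rho})$-time reachability tests, or better, by the path-reconstruction subroutine run in a batched fashion over the $n$ targets.

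The main obstacle I expect is making the path-reconstruction subroutine both \emph{deterministic} and compatible with the contracted condensation. The algebraic reachability data structures of \cite{DemetrescuI05, Sankowski04} are stated as decision procedures (``is $t$ reachable from $s$?'') with one-sided error coming from Zippel--Schwartz; the nontrivial work is to show that the \emph{same} algebraic quantities (path counts modulo a prime, or entries of a maintained matrix inverse) can be used to \emph{guide} a deterministic walk from $s$ towards $t$ --- i.e.\ at a vertex $x$ reachable to $t$, find a neighbor $y$ that is still reachable to $t$ --- without the exact arithmetic blowing up. The standard fix is to work over the integers with path counts capped at a threshold (or to use the structure of the maintained inverse over a field of characteristic $0$ with bounded-bit-length entries), and to argue that each reachability test needed during reconstruction costs $O(n^{\rho})$ time against the maintained structure; since the walk has length $O(n)$ before shortcutting, naive reconstruction is $O(n^{1+\rho})$, which already matches the update bound, and a more careful argument using the fact that the reconstructed path visits each super-vertex at most once brings the query time down to $\Ot(n^{\rho}) + O(|P|)$. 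I would present the deterministic path-counting-over-$\mathbb{Z}$ variant in detail, as it most cleanly avoids randomization while keeping the per-operation cost within the stated worst-case bounds, and handles the SCC contraction uniformly since the condensation is always a DAG where path counts are well-defined and polynomially bounded after capping.
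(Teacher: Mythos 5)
Your proposal takes a genuinely different route from the paper, and the route has a fatal gap that the paper's actual construction is specifically designed to avoid.

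The paper does \emph{not} attempt to derandomize or maintain the fully dynamic algebraic reachability data structure. Instead, its Section~\ref{s:incremental-reach} operates in phases of $F=n^{\alpha}$ insertions: at the start of each phase it explicitly rebuilds reachability trees $T_{G_0}(s)$ for all $s\in V$ from the trees of the previous phase using \emph{deterministic boolean rectangular matrix multiplication with deterministic witness computation}~\cite{AlonGMN92} (Lemma~\ref{l:trees-recompute}, which also passes through the condensation, but only statically per phase). Randomization is avoided precisely because nothing algebraic is maintained dynamically; the only heavy step is a static witness-BMM once per phase, amortized (then deamortized by the usual two-copy trick). Queries within a phase are answered combinatorially: for trees, by searching an auxiliary $O(kn)$-edge graph $H_s$ built from the $O(k)$ stored trees plus the inserted edges (Lemma~\ref{l:auxiliary-graph}); for point-to-point paths, by maintaining the inductively defined simple paths $P_{u,v}$ as $O(1)$-concatenable deques~\cite{KaplanT99} and, on query, scanning the $O(F)=O(n^{\rho})$ inserted edges for the minimal splicing index (Lemmas~\ref{l:incr-simple}, \ref{l:concatenate}). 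Your incremental-SCC-merging observation is correct but is not what the paper needs or uses.

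The gap in your proposal is the derandomization step. Path counts in an $n$-vertex DAG can be as large as $2^{\Theta(n)}$, so ``working over $\mathbb{Z}$'' means entries of exponential bit-length; and ``capping path counts at a threshold'' destroys the ring/field structure that the dynamic matrix inverse and King--Sagert update formulas rely on --- a capped product is not invertible and does not propagate through low-rank inverse updates. The dynamic matrix inverse of~\cite{Sankowski04} works over $\mathbb{Z}/p\mathbb{Z}$ with a random $p$ and random variable substitutions exactly because this is the only known way to keep entries small while preserving the algebraic structure; there is no known deterministic replacement for it. Moreover, your claimed $\Ot(n^{\rho})+O(|P|)$ query time from recursive path reconstruction over the condensation DAG is not justified: as Lemma~\ref{l:top-path} shows, the recursive reconstruction uses one reachability test per vertex that lies topologically between $s$ and $t$, giving $O(n^{1+\rho})$ worst case, which is exactly why the paper uses the concatenable-deque mechanism rather than reconstruction-by-queries to reach the $\Ot(n^{\rho})+O(|P|)$ bound.
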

The bounds in Theorem~\ref{t:incremental-reach-sum} match the best-known dynamic matrix inverse-based bounds for \emph{fully dynamic} transitive closure
in the respective update/query variants (that is, single-edge updates/single-pair queries, or incoming edges updates/single-source queries, resp.)~\cite{BrandNS19}.
However, as Theorem~\ref{t:incremental-reach-sum} shows, in the incremental (or offline fully dynamic, by Lemma~\ref{l:offline}) setting, one can reproduce these bounds
deterministically and allow for very efficient certificate reporting.

For incremental $(1+\eps)$-approximate shortest paths, we show the following.
\begin{theorem}\label{t:incremental-appr-sum}
  Let $\eps\in (0,1]$ and suppose $G$ is a weighted digraph with real edge weights in $[1,C]$.
  There exist deterministic incremental $(1+\eps)$-approximate data structures supporting:
  \begin{itemize}
    \item single-edge insertions in $O(n^{1.529}\cdot (1/\eps)\cdot \log(C/\eps))$ worst-case time and
      approximate shortest path-reporting queries in
      $O(n^{0.529}+|P|)$ time,
    \item single-edge insertions in $O(n^{1.407}\cdot (1/\eps)\cdot \log(C/\eps))$ worst-case time and
      approximate shortest path-reporting queries in
      $O(n^{1.407}+|P|)$ time,
  \end{itemize}
      where $P$ is the reported \emph{not necessarily simple} path.
\end{theorem}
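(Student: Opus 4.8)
The plan is to reduce incremental $(1+\eps)$-approximate shortest paths with path reporting to incremental \emph{reachability with path reporting} --- for which Theorem~\ref{t:incremental-reach-sum} already supplies a deterministic subquadratic worst-case data structure --- by running that structure (or, for the second bound, a more finely balanced algebraic substitute) on a family of auxiliary unweighted graphs produced through the weight-scaling and hop-reduction framework of van den Brand and Nanongkai~\cite{BrandN19}. The two things to verify beyond~\cite{BrandN19} are that every ingredient can be made deterministic and that the construction respects the incremental model, i.e.\ that a single edge insertion into $G$ triggers only a bounded batch of edge \emph{insertions} --- never deletions or rebuilds --- into the auxiliary structures.

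Concretely, I would fix a distance scale $R=(1+\eps)^{j}$ for each $j$ up to $O((1/\eps)\log(nC))$ and, for each hop scale $h\in\{1,2,4,\dots,2^{\lceil\log n\rceil}\}$, build an auxiliary unweighted graph whose $s$-$t$ reachability is equivalent to the existence of an $h$-hop $s$-$t$ walk in $G$ of weight at most $(1+O(\eps'))R$, where $\eps'=\Theta(\eps/\log n)$ is a rescaled accuracy chosen so that the errors accumulated over the $O(\log n)$ doublings compound to $(1+\eps)$. For a constant base hop bound this is just the layered graph with $O(1/\eps')$ copies of $V$ obtained by rounding each weight $w(e)\le R$ up to a multiple of $\eps' R$; one insertion into $G$ causes $O(1/\eps')$ insertions there, to which Theorem~\ref{t:incremental-reach-sum} applies directly. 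For the doubling step I would encode the relation $d^{(\le 2h)}(u,v)\lesssim\min_{x}\big(d^{(\le h)}(u,x)+d^{(\le h)}(x,v)\big)$ as reachability in a graph on $V$ blown up by the $O((1/\eps')\log(nC))$ distinct $(1+\eps')$-powers that the rounded $h$-hop distances can take, whose edge set is exactly the reachability facts certified by the level-$h$ structures. The crucial point is monotonicity: incrementally, every $h$-hop approximate distance only decreases, so all auxiliary graphs only gain edges, and the $O(\log n)$-level cascade provoked by one insertion into $G$ is itself incremental and can be fed, level by level, into fresh copies of Theorem~\ref{t:incremental-reach-sum}'s structure (using its batched-insertion mode and standard de-amortization to keep the bound worst-case). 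Summing the $\Ot(n^{1+\rho})$ per-insertion cost over the $O((1/\eps)\,\polylog(nC))$ many (scale, hop-level) pairs and collapsing logarithms as in~\cite{BrandN19} yields the first update bound, while a distance query binary-searches over the scales and issues a single reachability query, for $\Ot(n^{\rho})$ overhead.

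Path reporting follows by unfolding: the reachability query at the top hop level returns a path in the auxiliary graph each of whose edges is either a true edge of $G$ (emitted at the base level) or a ``midpoint splice'', and each splice is replaced recursively by the two lower-level subpaths returned by the corresponding lower-level reachability queries. The result is a not-necessarily-simple $s$-$t$ walk of length $|P|$, assembled in $O(|P|)$ time plus $O(\log n)$ reachability-query overheads, i.e.\ $\Ot(n^{\rho})+O(|P|)$; that this walk has length within $(1+\eps)$ of $d(s,t)$ is precisely the error accounting of the rounding. I expect the second trade-off point to come from keeping the same reduction but maintaining the auxiliary reachability not through Theorem~\ref{t:incremental-reach-sum} but through a more finely balanced algebraic subroutine --- e.g.\ recomputing the transitive closure every $n^{a}$ insertions by rectangular matrix multiplication while precomputing reachability among the $O(n^{a})$ endpoints touched since the last rebuild, and answering a query by traversing those special vertices --- with $a$ tuned through the current bounds on $\omega(1,\cdot,1)$ so that update and query both become $n^{1.407}$; the unfolding for certificates is unchanged.

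The main obstacle will be the incremental bookkeeping: verifying that one insertion into $G$ propagates through the hop-doubling hierarchy as only a bounded batch of edge \emph{insertions} into the auxiliary reachability structures, that the rounding to $(1+\eps')$-powers is carried out consistently enough that monotonicity is never violated, that the accumulated multiplicative error is exactly $(1+\eps)$, and that all the $\poly(1/\eps,\log n,\log C)$ factors can be squeezed into the stated $(1/\eps)\log(C/\eps)$ overhead; de-amortizing the batched auxiliary updates to worst-case (rather than amortized) time is a further standard but non-trivial step. Everything else should be a routine composition of the scaling machinery of~\cite{BrandN19} with Theorem~\ref{t:incremental-reach-sum}.
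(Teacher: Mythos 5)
Your proposal takes a genuinely different route from the paper, but it has a gap that I do not see how to close. The paper does \emph{not} reduce to incremental reachability; instead it keeps, as the central object, an explicit matrix of $(1+\eps')$-approximate distances (with concatenable-deque path pointers), recomputed once per phase of $F$ insertions by applying Zwick's \emph{deterministic} $(1+\eps)$-approximate min-plus product with witnesses (Theorem~\ref{t:zwick-appr}) to a rectangular submatrix indexed by the $O(F)$ endpoints of the phase's edges (Lemma~\ref{l:appr-recompute}). Within a phase it stores paths $P_{i,u},Q_{i,v}$ to/from the inserted endpoints and answers a query by a length-$k$ scan, exactly as in Section~\ref{s:reach-path}. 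Error accumulation over the $\Theta(n/F)$ phases is capped at $(1+\eps')^{O(\log n)}$ via a binary-tree-like hierarchy of matrices $D_b$. Your $O(\log n)$-level error bookkeeping is in a similar spirit, but it serves a different purpose (controlling hop-doubling error rather than phase-to-phase drift).

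The concrete gap is the incremental cascade through your hop-doubling hierarchy. You claim that monotonicity (``approximate $h$-hop distances only decrease'') makes the auxiliary graphs incremental, and that is true \emph{qualitatively}; but the relevant quantity is the \emph{number} of new edges created at each level per insertion into $G$, and nothing bounds it. Inserting a single edge $uv$ into $G$ can decrease the rounded $h$-hop distance $\widetilde d^{(\le h)}(x,y)$ for $\Theta(n^2)$ pairs $(x,y)$ simultaneously. If the level-$2h$ auxiliary graph's edge set is ``exactly the reachability facts certified by the level-$h$ structure'', then after one insertion into $G$ you must (a) detect which of the $\Theta(n^2)$ candidate edges of the level-$2h$ graph are newly present, and (b) feed each of them into the level-$2h$ reachability structure at $\Theta(n^{1+\rho})$ per insertion. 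Step (a) alone costs $\Omega(n^{2+\rho})$ by querying, and step (b) can cost $\Omega(n^{3+\rho})$ in the worst case; de-amortization over a phase does not help, because the work genuinely must be done (the level-$2h$ graph is an input to the level-$4h$ graph, which must be correct at query time). So ``the cascade is itself incremental'' is not enough: incrementality without a volume bound gives you no update-time guarantee. The paper's design sidesteps this entirely by never trying to maintain the $O(\log n)$ intermediate levels edge-by-edge; it recomputes them all at once, once per phase, as one rectangular matrix product whose cost amortizes to $\Ot(n^{\omega(1,\alpha,1)-\alpha})$ per update.

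There is also a secondary, quantitative mismatch in the $\eps$-dependence. Your layered base graph has $\Theta(n/\eps')$ vertices (with $\eps'=\Theta(\eps/\log n)$), and running the Theorem~\ref{t:incremental-reach-sum} structure on a graph with $n'$ vertices costs $\Ot((n')^{1+\rho})$ per insertion, which is $\Ot(n^{1+\rho}\cdot(1/\eps')^{1+\rho})$. Even before multiplying by the $\Theta((1/\eps)\log(nC)\cdot\log n)$ many (scale, hop-level) pairs, this already has a super-linear $(1/\eps)^{1+\rho}$ dependence on $1/\eps$, which cannot be ``squeezed into'' the stated $(1/\eps)\log(C/\eps)$ factor; it is a polynomial, not polylogarithmic, overhead. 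The paper never blows up the vertex set: the $(1/\eps)\log(C/\eps)$ factor comes directly from Zwick's product on $n\times n^\alpha$ matrices, with the graph staying on $n$ vertices throughout. Finally, note that the second trade-off in the theorem ($n^{1.407}$) in the paper comes from a subphase structure (Section~\ref{s:incremental-appr}, second trade-off) where a small auxiliary graph on the $O(n^\beta)$ touched endpoints is searched by Dijkstra; your sketch of a ``more finely balanced algebraic subroutine'' is closer in spirit to this, but it too would inherit the cascade problem if built on the hop-doubling reduction.
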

Interestingly, both trade-offs in Theorem~\ref{t:incremental-appr-sum} match the best-known dynamic matrix inverse-based bounds for fully dynamic
\emph{transitive closure} in the regime of single-edge updates and single-pair queries~\cite{BrandNS19}.
For comparison, the state-of-the-art fully dynamic $(1+\eps)$-approximate all-pairs distances data structure~\cite{BrandN19}
has $O(n^{1.863}/\eps^2)$ update time and $O(n^{0.666}/\eps^2)$ query time.

Whereas in our approximate data structures the reported paths need not be simple,
in the important scenario with $\eps=O(1)$ and $C=O(1)$ (e.g., for unweighted graphs), the reported non-necessarily simple
path (that is nevertheless approximately shortest in terms of length, but not necessarily in terms of hop-length) may contain only a constant factor more edges than the shortest simple path (since the minimum allowed weight is $1$).
As a result, the latter data structure in Theorem~\ref{t:incremental-appr-sum} may also be used
to obtain an $O(n^{1.407})$ worst-case update/query time trade-off for simple-path-reporting incremental \emph{reachability} (by setting, e.g., $\eps=1$).

Finally, for incremental \emph{exact} shortest paths in \emph{unweighted} directed graphs, we obtain the following trade-offs.
\begin{theorem}\label{t:incremental-exact-sum}
  Let $G$ be an unweighted digraph.
  There exist deterministic incremental data structures supporting:
  \begin{itemize}
    \item single-edge insertions and shortest path-reporting queries in $O(n^{1.62})$ worst-case time,
    \item insertions of at most $n$ incoming edges of a single vertex, and single-source shortest paths tree-reporting
      queries in $O(n^{1.724})$ worst-case time.
  \end{itemize}
\end{theorem}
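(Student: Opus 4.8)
The plan is to build on the algebraic path-counting machinery already used for the reachability result of Theorem~\ref{t:incremental-reach-sum}, but now applied to a layered / distance-stratified version of the graph so that we can recover exact hop-distances. The key idea in the unweighted setting is that shortest paths have length at most $n-1$, and that one can encode "number of walks of length exactly $d$ from $u$ to $v$" as an entry of the $d$-th power of the adjacency matrix. Working modulo a sufficiently large random... wait — we must stay deterministic. So instead of path counting over a random prime field, I would use the standard deterministic trick of working over the integers with entries represented in a way that avoids blow-up, or — more in the spirit of the incremental algebraic framework here — maintain, for each relevant distance threshold $d$, a witness structure for the Boolean product $A^{\le d}$ that is updated incrementally. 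Concretely: partition $\{0,1,\dots,n-1\}$ into $O(\log n)$ geometric scales; for each scale maintain an incremental data structure that answers "is $\dist(u,v)\le d$?" together with a single intermediate vertex on a shortest $u$–$v$ walk (a \emph{witness}). Binary search over the $O(\log n)$ scales plus one more binary search inside a scale pins down $\dist(u,v)$ exactly, and recursive witness extraction reconstructs the path in $O(|P|\cdot \text{(query time)})$ overall, which is absorbed into the stated bounds since $|P|\le n$.

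First I would set up the reduction from exact unweighted distances to a small number of "bounded-reachability with witnesses" instances. For a distance bound $d$, reachability within $d$ steps in $G$ is ordinary reachability in the layered graph $G^{(d)}$ on vertex set $V\times\{0,\dots,d\}$ with edges $(u,i)\to(v,i+1)$ for $uv\in E(G)$ and $(v,i)\to(v,i+1)$ for all $v$; this graph is a DAG of size $O(nd)$, so topological-order-based incremental algebraic reachability (as in the DAG case underlying Theorem~\ref{t:incremental-reach-sum}) applies. An edge insertion into $G$ touches one edge per layer, i.e.\ $O(d)$ edges of $G^{(d)}$ — but all of them are incident to a layer-structured pattern, so it is really an "insertion of $\le n$ incoming edges of a structured set of vertices", fitting the second update primitive of Theorem~\ref{t:incremental-reach-sum} applied scale by scale. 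Balancing $O(\log n)$ scales of sizes $2^k$ against the $n^{1+\rho}$-type cost of the largest scale ($d=\Theta(n)$, graph size $\Theta(n^2)$) is what produces an exponent strictly between $n^{1+\rho}$ and $n^2$; I would expect the single-pair variant to land around $n^{1.62}$ after optimizing the rectangular-matrix-multiplication exponent $\omega(1,\alpha,1)$ against the layer count, and the single-source variant around $n^{1.724}$ by the same accounting that gives the $(3+\rho)/2$ loss in Theorem~\ref{t:dags}(2) (reachability-tree reporting costs roughly a $\sqrt{n}$-type extra factor over the point-to-point reporting cost, since one must certify $n$ destinations with a single tree).

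The second main step is deterministic \emph{witness maintenance}: an incremental algebraic reachability structure that, whenever it reports $u\leadsto v$, also outputs an intermediate vertex $w$ with $u\leadsto w$ and $w\leadsto v$ on a path realizing the reachability — and does so deterministically. Here I would use the standard "witness via rectangular products" idea but derandomized in the incremental/offline regime: maintain the reachability matrix as a product of $O(\log n)$ incremental factors (as Theorem~\ref{t:incremental-reach-sum}'s proof presumably does via the $\omega(1,\rho,1)$ recompression), and extract a witness by walking down these factors, each level contributing one vertex of a $u$–$v$ separator; since the factorization is explicit and recomputed from scratch on each rebuild, no randomness enters. On the layered DAG $G^{(d)}$ this extracts, level by level, the vertices of a shortest walk; total extraction cost is $O(|P|)$ query invocations, i.e.\ $O(|P|\cdot n^{\rho})$-ish, again within budget.

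The hard part, I expect, is the \emph{worst-case} (not amortized) update bound in the presence of the periodic recompressions that algebraic incremental structures rely on: the DAG reachability structure of Theorem~\ref{t:incremental-reach-sum} already handles this (it is stated with worst-case bounds), so the real work is verifying that stacking $O(\log n)$ geometric scales and threading the layered-graph reduction through it does not break the worst-case guarantee — i.e.\ that the rebuild schedules of the different scales can be interleaved so that each individual update does only $O(n^{1.62})$ (resp.\ $O(n^{1.724})$) work, rather than a full-rebuild spike. I would handle this with the usual de-amortization by maintaining two copies per scale and migrating work incrementally, charging $O(1/\text{(rebuild period)})$ fraction of each rebuild to each intervening update; the polylogarithmic overhead from the $O(\log n)$ scales is swallowed by the $\Ot$. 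A secondary subtlety is making sure the second update primitive ("$\le n$ incoming edges of a single vertex") is genuinely what the layered reduction needs: inserting $uv$ into $G$ adds the edges $(u,i)\to(v,i+1)$ for all $i$, i.e.\ incoming edges of the $d$ copies $\{(v,i+1)\}_i$, which is not literally a single vertex; one fixes this either by contracting each column $\{(v,i)\}_i$ with appropriate bookkeeping, or by observing that the batch is still a rank-structured update handled by the same algebraic bound. Once these mechanical points are nailed down, the claimed $O(n^{1.62})$ and $O(n^{1.724})$ bounds follow by plugging the current best value of $\rho$ (equivalently, of $\omega(1,\cdot,1)$) into the scale-balancing optimization.
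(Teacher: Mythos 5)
Your proposal takes a genuinely different route from the paper, and unfortunately the route has a gap that I do not see how to close.

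The paper handles exact distances by combining two ingredients: (i) Zwick's deterministic bounded min-plus product $\star_{\le h}$ with witnesses, used in Lemma~\ref{l:exact-recompute} to maintain, at the start of each phase, a matrix of shortest paths of hop-length at most $h = n/F$; and (ii) a deterministically-computed $\Ot(F)$-size hitting set $B$ of all shortest paths of length exactly $h$. The crucial combinatorial observation (Observation~\ref{o:hitting}) is that $B$, augmented with the heads $v_1,\dots,v_k$ of the edges inserted so far in the phase, remains a valid hitting set of all length-$h$ shortest paths in the current graph $G_0 \cup E_k^+$. This makes a query cheap: build a small Dijkstra instance on $\{s,t\}\cup X \cup B$ (or on $V$ with edges from $B \cup Y \cup \{s\}$ for the single-source case) whose weighted edges encode the precomputed short paths. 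The threshold $h$ being polynomially less than $n$ is what makes the recomputation cost $\Ot(h\, n^{\omega(1,\alpha,1)})$ subquadratic per update after amortizing over $F$ insertions, while the hitting set handles all distances longer than $h$.

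Your layered-DAG reduction $G^{(d)}$ on $V \times \{0,\dots,d\}$ does not have an analogous escape from long paths, and that is where it breaks. For the top scale $d = \Theta(n)$, the layered graph has $\Theta(n^2)$ vertices. Running the incremental algebraic reachability structure of Theorem~\ref{t:incremental-reach-sum} on a graph with $N = \Theta(n^2)$ vertices costs $\Ot(N^{1+\rho}) = \Ot(n^{2(1+\rho)}) \approx \Ot(n^{3.06})$ per layered-graph update, and even a single original edge insertion $uv$ produces $\Theta(d)$ layered edges $(u,i)\to(v,i+1)$. Grouping those $d$ edges as ``$\le n$ incoming edges of a single vertex'' does not help: they go into $d$ distinct target vertices $(v,1),\dots,(v,d)$, so the batched update primitive does not apply as stated, and even if it did, the $N^{1+\rho}$ factor is already super-quadratic in $n$. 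There is no choice of geometric scales that fixes this, because the quadratic blowup is in the vertex count, not merely in the number of edges to update. The paper avoids this precisely by never attempting to certify long distances with the algebraic structure; it only certifies distances up to $h\ll n$ algebraically, and pays $O(1)$ Dijkstra hops over the hitting set for the rest.

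A secondary issue: your witness-extraction sketch maintains ``an intermediate vertex'' per level and recursively unwinds it, but the paper instead stores the actual paths as purely functional concatenable deques, produced as witnesses of the bounded min-plus product. This is more than a stylistic difference: the deque representation is what gives $O(1)$-per-edge path output and $O(n^2)$ initialization per phase, and the paper leans on that to keep the query cost linear in the reported path length. Your extraction scheme, as described, would pay a $\Theta(n^\rho)$-type query cost per path vertex, which for the tree-reporting variant (up to $n$ destinations, each with a path of length up to $n$) gives $\Theta(n^{2+\rho})$ rather than the claimed $O(n^{1.724})$; the paper sidesteps this by producing the entire shortest-path tree with two Dijkstra passes on a graph of size $\Ot(n^{1+\alpha})$ and $\Ot(n^{2-\alpha})$ respectively. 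So the missing ideas are (a) the threshold-plus-hitting-set decomposition and its trivial incremental maintainability, and (b) Zwick-style exact bounded min-plus with deterministic witnesses in place of reachability-witness extraction over a layered DAG.
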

The former bound is polynomially smaller than the best-known $O(n^{1.724})$ bound on the update/query
time of a fully dynamic all-pairs distances data structure~\cite{BrandNS19,Sankowski05}.
The latter bound is even more interesting -- it is still not known whether there exists an exact fully dynamic
data structure maintaining single-source distances in an unweighted graph that would achieve
subquadratic update time, even for a fixed source $s$.
By our incremental bound and Lemma~\ref{l:offline}, we obtain that an \emph{offline} fully dynamic data structure with
subquadratic update time is possible.

\section{Preliminaries}
In this paper we deal with (possibly weighted) \emph{directed} graphs.
We write $V(G)$ and $E(G)$ to denote the sets of vertices and edges of $G$, respectively. We omit $G$ when the graph in consideration is clear from the context.
A graph $H$ is a \emph{subgraph} of $G$, which~we~denote by $H\subseteq G$, if
and only if $V(H)\subseteq V(G)$ and $E(H)\subseteq E(G)$.
We write $e=uv\in E(G)$ when referring to edges of $G$ and use $\wei_G(uv)$
to denote the weight of $uv$ (in case of weighted digraphs).
We call $u$ the tail of $e$, and $v$ the head of $e$.
If $uv\notin E$, we assume $\wei_G(uv)=\infty$.

For some set $E'\subseteq V\times V$, we denote by $G\cup E'$ the graph $(V(G),E(G)\cup E')$.
Similarly, we denote by $G\setminus E'$ the graph $(V(G),E(G)\setminus E')$.
We also use the notation $G-e:=G\setminus \{e\}$.

A sequence of edges $P=e_1\ldots e_k$, where $k\geq 1$ and $e_i=u_iv_i\in E(G)$, is called
an $s\to t$ path in~$G$ if $s=u_1$, $v_k=t$ and $v_{i-1}=u_i$ for each $i=2,\ldots,k$.
For brevity, we sometimes also express~$P$ as a sequence of $k+1$ vertices $u_1u_2\ldots u_kv_k$ or as a subgraph of $G$ with vertices $\{u_1,\ldots,u_k,v_k\}$
and edges $\{e_1,\ldots,e_k\}$.
The \emph{length} of a path $P$ equals $\sum_{i=1}^k\wei_G(e_i)$.
The \emph{hop-length} $|P|$ is equal to the number $k$ of edges in $P$.
We also say that $P$ is a \emph{$k$-hop path}.
For convenience, we sometimes consider a single edge $uv$ as a path of hop-length $1$, as well
as zero length sequence is used to denote an empty path.
If $P_1$ is a $u \to v$ path and $P_2$ is a $v \to w$ path, we denote by $P_1\cdot P_2$ (or simply $P_1P_2$) a path obtained by concatenating $P_1$ with $P_2$.

An \emph{out-tree} $T_{\text{out}}\subseteq G$ is a subgraph of $G$ for which there exists a vertex $s\in V(T_\text{out})$ (the root),
such that each vertex $v\in V(T_{\text{out}})\setminus\{s\}$ has precisely one incoming edge in $T_{\text{out}}$,
and $s$ has no incoming edges in $T_{\text{out}}$.
Symmetrically, an \emph{in-tree} $T_{\text{in}}\subseteq G$ is a subgraph of $G$ for which there exists a vertex $t\in V(T_\text{in})$ (the root),
such that each vertex $v\in V(T_{\text{in}})\setminus\{t\}$ has precisely one outgoing edge in $T_{\text{in}}$,
and $t$ has no outgoing edges in $T_{\text{in}}$.

A vertex $t$ is \emph{reachable} from $s$, if there is an $s\to t$ path in $G$.
A \emph{single-source reachability tree} $T_G(s)$ from $s$ is an out-tree in $G$ whose
root is $s$ and $V(T_G(s))$ equals the set of vertices reachable from $s$ in $G$.

Two vertices $u,v\in V$ are strongly connected if both $u$ is reachable from $v$ and $v$ is reachable from~$u$.
Strong connectivity is an equivalence relation.
We use the term \emph{strongly connected components} (SCCs) to refer to the set $\sccs$ of equivalence classes of the strong connectivity relation.

For some partition $\mathcal{V}=\{V_1,\ldots,V_k\}$ of $V$, we define $G/\mathcal{V}$ to
be a graph with vertices $\mathcal{V}$ obtained from $G$ by contracting each subset $V_i$ into a single vertex labeled $V_i$ ($G[V_i]$
does not necessarily need to be a connected subgraph).
For any $xy\in E(G)$ such that $x\in V_i$ and $y\in V_j$, we have a corresponding edge $V_iV_j$ in $G/\mathcal{V}$ if and only if
$V_i\neq V_j$. As a result, $G/\mathcal{V}$ can be a multigraph.

\section{Fully dynamic path reporting in DAGs}\label{s:reach-dags}

In this section we assume that the dynamic graph $G$ remains acyclic at all times
and give dynamic path reporting algorithms under this assumption.
We will first show a fully dynamic algorithm for maintaining the topological order.
This is a crucial element in our point-to-point path reporting algorithm.
Finally, we will use another idea to construct an algorithm for reporting single-source reachability trees.
These results will be lifted to general digraphs in Section~\ref{s:reach-general}.

We will repeatedly make use of the following dynamic transitive closure data structure of~\cite{Sankowski04} that
allows for subquadratic updates and sublinear queries.

\begin{theorem}\label{t:dyn-tr}\textup{\cite{Sankowski04}}
  Let $G$ be a digraph and let $\delta\in (0,1)$.
  There exists a data structure that
  supports single-edge insertions/deletions to $G$ in $O(n^{\omega(1,\delta,1)-\delta}+n^{1+\delta})$
  worst-case
  time, and point-to-point reachability queries in $G$ in $O(n^{\delta})$ worst-case time.
  The data structure is Monte Carlo randomized and produces answers correct with high probability.
\end{theorem}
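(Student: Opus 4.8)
The plan is to follow the dynamic matrix inverse approach of~\cite{Sankowski04}: first reduce dynamic reachability to maintaining (implicitly) the inverse of a matrix that encodes $G$, and then build such a dynamic inverse with the claimed bounds via the Sherman--Morrison--Woodbury identity, lazy rebuilding, and fast rectangular matrix multiplication. For the reduction, fix a prime $p=\poly(n)$ (large enough), give every edge $uv\in E(G)$ an independent uniformly random value $a_{uv}\in\mathbb{Z}_p$, let $A$ be the resulting adjacency matrix, and set $M:=I-A$. Viewed as a rational function of the variables $a_{uv}$, the entry $(M^{-1})_{uv}$ equals the generating function $\sum_{k\ge 0}(A^k)_{uv}$ of walks from $u$ to $v$, so it is identically zero exactly when $v$ is not reachable from $u$; moreover $\det M$ has constant term $1$, hence is not identically zero. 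By the Schwartz--Zippel lemma and a union bound over all $O(n^2)$ vertex pairs (and over $\det M\neq 0$), with high probability $M$ is invertible and $(M^{-1})_{uv}\neq 0$ iff $v$ is reachable from $u$, simultaneously for all $u,v$. An edge insertion or deletion changes a single entry of $M$, and $\log p=O(\log n)$ so each field operation costs $O(1)$; thus it suffices to maintain $M$ under single-entry updates while answering ``is $(M^{-1})_{uv}=0$?'' queries.

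For the dynamic inverse, fix $\delta\in(0,1)$ and rebuild every $L=\Theta(n^{\delta})$ updates. Store explicitly $N:=\widehat M^{-1}$, the inverse of $M$ as of the last rebuild. Within a period the accumulated updates form $M=\widehat M+UV^{\top}$ with $U,V\in\mathbb{Z}_p^{\,n\times L}$ having $O(1)$ nonzeros per column, and by Sherman--Morrison--Woodbury
\[
M^{-1}=N-NU\,(I_L+V^{\top}NU)^{-1}\,V^{\top}N.
\]
I maintain incrementally the $L\times L$ matrix $C=I_L+V^{\top}NU$, the $L\times n$ matrix $R=V^{\top}N$, and --- the key object --- $\widetilde W:=NU\,C^{-1}\in\mathbb{Z}_p^{\,n\times L}$. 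Each update appends a sparse column to $NU$ and a sparse row to $R$ (each an $O(n)$-time product of $N$ with a sparse vector) and borders $C$ by one row and column; a block-inversion computation then shows that $C^{-1}$ changes by a rank-one term plus the new border, so $\widetilde W$ is refreshed in $O(nL)=O(n^{1+\delta})$ time. A query for $(M^{-1})_{uv}$ is answered as $N_{uv}-(\text{row }u\text{ of }\widetilde W)\cdot(\text{column }v\text{ of }R)$ in $O(L)=O(n^{\delta})$ time.

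At a period boundary the new base inverse is $M^{-1}=N-\widetilde W R$, an $n\times L$ by $L\times n$ product costing $O(n^{\omega(1,\delta,1)})$, i.e.\ $O(n^{\omega(1,\delta,1)-\delta})$ amortized per update; together with the $O(n^{1+\delta})$ above this gives the claimed update bound. The step requiring the most care --- and the one I would expect to be the main obstacle --- is de-amortizing this into a worst-case guarantee: one runs the standard lazy-rebuilding schedule with two overlapping copies of the structure, delaying the incorporation of a finished batch of $L$ updates by a full period, so that the $O(n^{\omega(1,\delta,1)})$ rectangular product is computed $\Theta(1)$ columns at a time over the next $L$ steps (the buffer then grows only to $O(L)$, leaving the per-step $O(n^{1+\delta})$ and query $O(n^{\delta})$ bounds unchanged up to constants). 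This yields $O(n^{\omega(1,\delta,1)-\delta}+n^{1+\delta})$ worst-case update and $O(n^{\delta})$ worst-case query time, with the Monte Carlo high-probability guarantee coming from the Schwartz--Zippel argument; the remaining verifications (block-inversion formulas, the combinatorial identity for $(M^{-1})_{uv}$, bookkeeping of the rebuild schedule) are routine, and the full construction is in~\cite{Sankowski04}.
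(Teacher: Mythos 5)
The paper gives no proof of this theorem --- it imports it from Sankowski~(2004) as a black box --- so your reconstruction can only be checked against the cited work. Your plan matches Sankowski's: a Schwartz--Zippel reduction to dynamic matrix inverse over $\mathbb{Z}_p$, a Sherman--Morrison--Woodbury low-rank correction maintained between rebuilds of period $L=\Theta(n^\delta)$, $O(L)$-time queries via the Woodbury expansion, and lazy rebuilding with two overlapping copies for worst-case bounds. (The paper's own discussion in the proof of \cref{l:dyn-tr-general} describes the same device in the multiplicative form $A'=AB$ rather than the additive $A+uv^{\top}$; these are equivalent.) The $O(nL)$-per-step maintenance of $\widetilde W$, $R$, $C$ via block inversion and rank-one corrections is sound, and the per-step and per-query costs you derive match the statement.

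One step, taken literally, does not give the claimed bound: ``the $O(n^{\omega(1,\delta,1)})$ rectangular product is computed $\Theta(1)$ columns at a time over the next $L$ steps.'' The output $N-\widetilde W R$ has $n$ columns while a period has only $L=n^\delta\ll n$ steps, so $\Theta(1)$ columns per step finishes only an $O(L/n)$-fraction of the rebuild; patching this to $\Theta(n/L)$ columns per step by naive matrix--vector products costs $\Theta(n^2)$ per update and destroys the bound, and splitting the output into column slabs does not in general recover the total $O(n^{\omega(1,\delta,1)})$ via rectangular FMM either. The standard fix --- and the one the paper itself invokes in \cref{s:incremental-reach} when converting amortized rebuild costs into worst-case bounds --- is to treat the fast rectangular multiplication as a single straight-line computation and distribute its \emph{running time} over the $L$ steps in chunks of $\Theta(n^{\omega(1,\delta,1)-\delta})$ elementary operations per update, pausing and resuming the routine rather than partitioning its output by columns. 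With that one correction your argument establishes the theorem.
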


It is important to note that as long as the data structure does not err (which
happens with low probability over its random choices for the fixed sequence of $\poly{(n)}$ updates), the produced answers depend only on the current graph (and not the previous answers)
and thus are unique. Therefore, the data structure can be obviously used against an adaptive adversary.

\subsection{Fully dynamic topological order}
Let us identify $V$ with $\{1,\ldots,n\}$.
Let $\pi:V\to \{1,\ldots,n\}$ be a permutation such that $uv\in E$ implies
$\pi(u)<\pi(v)$. Upon initialization, some $\pi$ can be computed in linear
time using one of the classical algorithms.

\newcommand{\dtc}{\mathcal{D}}

We will maintain a data structure $\dtc$ of Theorem~\ref{t:dyn-tr} on $G$ and pass
all the issued edge updates to it after updating the topological order.
Moreover, we will store the vertices $V$ in an array $A$ sorted according to $\pi$,
i.e., we have $A[i]=v$ iff $\pi(v)=i$.
Since $\pi$ is simply an inverse of $A$, we will only focus on maintaining $A$;
all the changes to $A$ can be reflected in $\pi$ in a straightforward way.

Now let us describe how to handle updates. If the update is a deletion of an edge then
we do not have to do anything, since $\pi$ remains a topological order
of $G-e$. So suppose we are inserting an edge $e=uv$.
Again, if we currently have $\pi(u)<\pi(v)$, the topological order does not need to be updated.
Now consider the case that $\pi(u)>\pi(v)$.
We need to modify the topological numbering only for vertices
$w$ currently satisfying $\pi(w)\in [\pi(v),\pi(u)]$, as for
the remaining vertices the topological order can remain the same.
Let us call the set of these vertices $W$.
In other words,~$W$ contains the vertices between $v$ and $u$ in $A$ (including $u$ and $v$).

Let $T\subseteq W$ be the vertices of $W$ reachable from $v$ before the insertion,
including $v$.
Let $S\subseteq W$ be the vertices of $W$ that can reach $u$ before the insertion,
including $u$.
Note that each of $S,T$ can be found by issuing $|W|=O(n)$ queries
to the data structure $\dtc$, i.e., in $O(n^{1+\delta})$ worst-case time.

We will update $A$ in the following way.
Let $Z=W\setminus (S\cup T)$.
The subarray $A[\pi(v)..\pi(u)]$
will be replaced by a sequence of vertices $S\cdot Z\cdot T$,
where each of the sets $S,T,Z$ is ordered according to~$\pi$ (before the insertion).
The correctness of such a change follows from the following lemma.

\begin{lemma}
  Let $S,T,Z$ be defined as above. Then, after the edge insertions:
  \begin{itemize}
  \item no vertex $t \in T$ can reach a vertex $s\in S$,
  \item no vertex $t \in T$ can reach a vertex $z \in Z$,
  \item no vertex $z \in Z$ can reach a vertex $s \in S$.
  \end{itemize}
\end{lemma}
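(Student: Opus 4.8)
The plan is to prove each of the three claims by contradiction, in each case using the fact that the only new edge is $e=uv$ and exploiting where $S$, $T$, $Z$ sit in the old topological order $\pi$. The crucial structural observations are: (i) every vertex of $W=\{w : \pi(w)\in[\pi(v),\pi(u)]\}$ has $\pi(v)\le\pi(w)\le\pi(u)$, so in the old graph no vertex of $W$ could reach $v$ (that would violate $\pi$, since $\pi(v)$ is minimal in $W$) and no vertex of $W$ could be reached from $u$ (that would violate $\pi$, since $\pi(u)$ is maximal in $W$); (ii) after inserting $e$, any new reachability must use the edge $uv$, i.e.\ any path that did not exist before must pass through $u$ then immediately $v$. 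I would state these two observations first as a short preamble.

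For the first bullet, suppose $t\in T$ reaches $s\in S$ after the insertion. If this path already existed before the insertion, then since $t$ reaches $s$ and (by definition of $S$) $s$ reaches $u$, we get that $t$ reaches $u$ in the old graph; but $t\in T\subseteq W$ and by observation (i) no vertex of $W$ is reached from $u$ — wait, we need $t$ reaching $u$, which combined with $u$ reaching $t$? No: here we instead use that $t\in T$ means $v$ reaches $t$ old, so $v$ reaches $u$ old, contradicting $\pi(u)>\pi(v)$ being consistent with acyclicity — actually $v\to\cdots\to u$ is fine topologically; the contradiction is that then $e=uv$ would close a cycle, contradicting that $G$ stays acyclic. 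So the path $t\rightsquigarrow s$ cannot be old: it must use $uv$, hence it decomposes as $t\rightsquigarrow u\to v\rightsquigarrow s$ with the sub-paths $t\rightsquigarrow u$ and $v\rightsquigarrow s$ present in the old graph. Then $v$ reaches $s$ old, so $s\in T$ as well; but $s\in S$, and also $t$ reaches $u$ old so $t\in S$ — in particular $v$ reaches $t$ old ($t\in T$) and $t$ reaches $u$ old, so again $v\to\cdots\to u$ old and $e=uv$ creates a cycle, contradiction. I would present this as a single clean paragraph isolating the key move: \emph{a new $t\rightsquigarrow s$ path forces $v\rightsquigarrow u$ in the old graph, and together with the new edge $uv$ this is a cycle.}

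The second and third bullets follow the same template. For $t\in T$ reaching $z\in Z$ after the insertion: an old path $t\rightsquigarrow z$ combined with $v\rightsquigarrow t$ (old, since $t\in T$) gives $v\rightsquigarrow z$ old, so $z\in T$, contradicting $z\in Z=W\setminus(S\cup T)$; a new path must factor through $uv$ as $t\rightsquigarrow u\to v\rightsquigarrow z$, but $v\rightsquigarrow z$ old again puts $z\in T$, contradiction. For $z\in Z$ reaching $s\in S$ after the insertion: an old path $z\rightsquigarrow s$ together with $s\rightsquigarrow u$ (old, $s\in S$) gives $z\rightsquigarrow u$ old, so $z\in S$, contradiction; a new path factors as $z\rightsquigarrow u\to v\rightsquigarrow s$, and $z\rightsquigarrow u$ old again forces $z\in S$, contradiction. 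The main obstacle — really the only subtle point — is being careful that "new path" genuinely means "uses the edge $uv$, hence passes through $u$ immediately followed by $v$," and that we are entitled to split off the old sub-paths before and after this edge; once that is pinned down, all three cases reduce to the single principle that membership in $T$ (resp.\ $S$) is closed under old-reachability predecessors from $v$ (resp.\ successors to $u$), and any would-be violating new path re-establishes exactly such an old reachability, or else closes a cycle through $uv$, which is forbidden since $G$ remains acyclic.
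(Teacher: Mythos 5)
Your proof is correct and follows essentially the same approach as the paper's: for the first bullet you derive a $v\rightsquigarrow u$ walk which, together with $uv$, closes a forbidden cycle, and for the other two bullets you show the hypothetical reachability would place $z$ in $T$ (resp.\ $S$), contradicting $z\in Z$. The only difference is that you spell out the case split between a violating path that is entirely old and one that uses $uv$, whereas the paper leaves implicit the (easy) fact that inserting $uv$ cannot enlarge the set of $W$-vertices reachable from $v$ or reaching $u$ (any such new path would require a $v\rightsquigarrow u$ in $G+e$, hence a cycle); in your final write-up you should cut the exploratory ``wait, \ldots --- actually \ldots'' passages and simply present the clean version of the argument you eventually land on.
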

\begin{proof}
Note that no vertex $t\in T$ can reach $s\in S$ as otherwise
there would be a path $v\to t\to s\to u$ in $G$, which, along
with the inserted edge $uv$, would form a cycle.
This would contradict that $G$ is acyclic.
In particular, we have $S\cap T=\emptyset$, $v\in T$, and $u\in S$.

Observe that no vertex $z\in Z$ can be reached from a vertex $t\in T$,
since, by definition, $T$ contains all vertices in $W$ reachable from $v$.
Similarly, no vertex in $Z$ can reach a vertex from~$S$.
\end{proof}

Finally, observe that if for two vertices $(x,y)$, one of the following holds: $(x,y)\in S^2$,
$(x,y)\in T^2$, $(x,y)\in Z^2$,
$\pi(x)<\pi(v)$, or $\pi(y)>\pi(u)$ before the insertion,
then the same holds also after the insertion.
This is because we don't change the relative order of vertices in
each of the sets $S$, $T$, $Z$, $\{x:\pi(x)<\pi(v)\}$, $\{y:\pi(y)>\pi(u)\}$.

Let $\rho$ be the smallest number such that $\omega(1,\rho,1)=1+2\rho$.
With the current best known bounds on the values $\omega(1,\cdot,1)$,
we have $\rho\approx 0.529$~\cite{GallU18}.
By setting $\delta=\rho$, we obtain the following lemma.
\begin{lemma}\label{l:topsort}
  Let $G$ be a digraph. There exists a data structure supporting
  fully dynamic single-edge updates to $G$ \emph{that keep $G$ acyclic},
  and maintaining a topological order $\pi$ of $G$ in $O(n^{1+\rho})=O(n^{1.529})$
  worst-case time. The algorithm is Monte Carlo randomized. With high probability,
  the maintained topological order is correct and is uniquely determined by
  the sequence of updates.
\end{lemma}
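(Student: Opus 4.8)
The plan is to wrap up the construction and analysis from the pieces already in place: the update procedure described above, the preceding (unlabeled) lemma on the absence of reachability between $S$, $T$, $Z$, the relative-order observation stated after it, and the dynamic transitive closure structure $\dtc$ of Theorem~\ref{t:dyn-tr} instantiated with $\delta=\rho$. First I would fix the data structure precisely: maintain the array $A$ (equivalently $\pi$) together with one instance of $\dtc$ on the current graph, and forward every issued update to $\dtc$ \emph{after} extracting from it whatever reachability information the update procedure needs. On a deletion, nothing else is done, since removing an edge cannot violate any inequality $\pi(x)<\pi(y)$. On an insertion of $e=uv$ with $\pi(u)<\pi(v)$, again nothing else is done. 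On an insertion with $\pi(u)>\pi(v)$, we query $\dtc$ --- which still reflects $G$, not $G\cup\{e\}$ --- once for each $w\in W=\{w:\pi(w)\in[\pi(v),\pi(u)]\}$ to decide whether $w$ is reachable from $v$ and whether $w$ reaches $u$, thereby computing $S$, $T$, and $Z=W\setminus(S\cup T)$; we then overwrite $A[\pi(v)..\pi(u)]$ by the concatenation $S\cdot Z\cdot T$ (each block kept in its old $\pi$-order), recompute the affected part of $\pi$, and finally pass $e$ to $\dtc$.

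The correctness step is to check that after this rearrangement every edge $xy$ of $G\cup\{e\}$ still has $\pi_{\mathrm{new}}(x)<\pi_{\mathrm{new}}(y)$. Vertices outside $W$ retain their exact positions and mutual order, and each vertex of $W$ lands in some slot of $[\pi(v),\pi(u)]$, so it suffices to consider the new edge $uv$ and the old edges with both endpoints in $W$. Since $S$, $T$, $Z$ partition $W$, every such edge either stays inside one block or crosses from one block to another. For $uv$ we have $u\in S$, $v\in T$, and $S$ precedes $T$; for an old edge inside a block, the preserved old $\pi$-order was a valid topological order of $G[W]$, hence of $G[S]$, $G[Z]$, $G[T]$; and for an old edge crossing blocks, the only dangerous directions are $T\to S$, $T\to Z$, and $Z\to S$, all excluded by the preceding lemma (no reachability implies no edge, and reachability only grows under insertions). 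This combination is where essentially all the content sits, and I expect the one mildly delicate point to be making sure the $\dtc$ queries are issued against $G$ rather than $G\cup\{e\}$; the case analysis itself is short once the preceding lemma and the relative-order observation are invoked.

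Finally, for the running time and the probabilistic guarantee: a deletion costs $O(1)$ plus one update to $\dtc$; an insertion costs $O(n)$ queries to $\dtc$ at $O(n^{\delta})$ each, $O(n)$ extra work to rebuild the affected subarray and the inverse map, and one update to $\dtc$ at $O(n^{\omega(1,\delta,1)-\delta}+n^{1+\delta})$. With $\delta=\rho$, so that $\omega(1,\rho,1)=1+2\rho$, all of these are $O(n^{1+\rho})=O(n^{1.529})$ in the worst case. Conditioning on the high-probability event that $\dtc$ does not err over the fixed $\poly(n)$-length update sequence, all its reachability answers are correct, hence so are $S$, $T$ and the maintained $\pi$; moreover, in that case the answers of $\dtc$ depend only on the current graph, so the maintained $\pi$ is a deterministic function of the update sequence, which yields the stated uniqueness and, in particular, usability against an adaptive adversary.
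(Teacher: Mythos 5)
Your proposal is correct and follows essentially the same approach as the paper: maintain the array $A$ and $\pi$ alongside $\dtc$, do nothing on deletions or non-inverting insertions, rebuild $A[\pi(v)..\pi(u)]$ as $S\cdot Z\cdot T$ on an inverting insertion using the unlabeled lemma and the relative-order observation, and conclude by noting that, conditional on $\dtc$ not erring, the maintained order is a deterministic function of the update sequence. The paper's own proof of the lemma is terser (it defers to the preceding exposition), but the content and bounds match yours, including the careful point about issuing the $\dtc$ queries against $G$ before forwarding the insertion.
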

\begin{proof}
  Note that here randomization is only used inside the data structure $\dtc$.
  Consequently, high probability correctness follows by Theorem~\ref{t:dyn-tr}.
  As the answers produced by that component are uniquely determined by the graph updates (as the graph itself is),
  so is the maintained topological order.
\end{proof}

\subsection{Point-to-point path queries}\label{s:dag-point-to-point}
In order to support point-to-point path queries under fully dynamic edge updates, we will use two
data structures. The first one is a
data structure $\dtc$ of Theorem~\ref{t:dyn-tr} with $\delta$ set to $\rho$. The other data structure is that of Lemma~\ref{l:topsort},
maintaining a topological order $\pi$. No additional information is maintained, and so
in the updates we are simply passing each edge update to those data structures.

Now consider queries. Suppose we are requested to find some $s\to t$ path, where $s,t\in V$ are the query vertices.
Using a single query to $\dtc$ we can check whether such a path exists
in $O(n^\rho)$ time.
If this is not the case, we are done.
Otherwise, we infer that $\pi(s)<\pi(t)$.

The algorithm for constructing an $s\to t$ path (that is known to exist), is recursive.
If $s=t$, then clearly an empty path can be returned.

Otherwise, we scan through the outgoing edges $e_x=sx$ of $s$ in the order
of $\pi(x)$.
Note that for each such edge we have $\pi(x)>\pi(s)$ and the values $\pi(x)$ are distinct.
When scanning the edge $e_x$, we stop if $x$ can reach $t$ in $G$.
Each such test takes $O(n^{\rho})$ time using a single query to $\dtc$.
Then we recursively construct a path $P=x\to t$ and return the path $e_xP$.

Observe that since an $s\to t$ path exists, some edge $e_x$ will surely
lead to a recursive call: this will happen for $e_x$ with minimum $\pi(x)$
such that a path $x\to t$ exists in $G$.

\begin{lemma}\label{l:top-path}
  If $s$ can reach $t$ in $G$ then the above algorithm constructs an $s\to t$ path in\linebreak ${O((\pi(t)-\pi(s))\cdot n^\rho+1)}$
  time.
\end{lemma}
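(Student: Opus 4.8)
The plan is to analyze the recursion tree of the path-construction algorithm and bound the total work by charging the reachability queries to the progress made along the array $A$ indexed by $\pi$. The key structural observation is that each recursive call is invoked on a pair $(x,t)$ where $\pi(s)<\pi(x)\le\pi(t)$, so as we descend the recursion the value $\pi$ of the current ``source'' vertex strictly increases while the target $t$ stays fixed. Hence the hop-length of the returned path is at most $\pi(t)-\pi(s)$, and in particular the recursion depth is $O(\pi(t)-\pi(s))$.

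First I would account for the work done at a single level of the recursion, say the call with current source $c$ and target $t$. This call scans the outgoing edges $cx$ of $c$ in increasing order of $\pi(x)$, issuing one $O(n^\rho)$-time reachability query to $\dtc$ per scanned edge, and stops at the first $x$ with $\pi(x)$ minimal such that $x\to t$ exists. The crucial point is that every scanned edge $cx$ satisfies $\pi(c)<\pi(x)$, and moreover every scanned edge other than the successful one has $\pi(x)$ strictly smaller than the $\pi$-value of the vertex passed to the next recursive call (since we take the $x$ of minimum $\pi(x)$ among those that reach $t$). Therefore, if we list the sequence of ``current source'' vertices $s=c_0, c_1, \ldots, c_k=t$ produced along the unique recursion path, the edges scanned at level $i$ all have heads with $\pi$-values lying in the half-open interval $(\pi(c_i), \pi(c_{i+1})]$, plus possibly the one scanned edge whose head is $c_{i+1}$ itself; since $c_i$ can have at most one outgoing edge to any fixed $\pi$-value, the number of edges scanned at level $i$ is at most $\pi(c_{i+1})-\pi(c_i)+1$, and these count intervals are essentially disjoint.

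Summing over all levels, the total number of reachability queries is $\sum_{i=0}^{k-1}\bigl(\pi(c_{i+1})-\pi(c_i)+1\bigr) = \pi(t)-\pi(s) + k = O(\pi(t)-\pi(s)+1)$, using $k\le\pi(t)-\pi(s)$. Each query costs $O(n^\rho)$, and the remaining bookkeeping per scanned edge (sorting the at most $\deg^+(c)$ outgoing edges of $c$ by $\pi$, or iterating over them in $\pi$-order) can be done within the same budget, e.g. by keeping the adjacency lists sorted or by a single $O(n)$ bucket pass — and in any case $O(n^\rho)$ dominates. This yields the claimed $O\bigl((\pi(t)-\pi(s))\cdot n^\rho + 1\bigr)$ bound.

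I expect the main obstacle to be making the charging argument fully rigorous, in particular verifying that the $\pi$-intervals $(\pi(c_i),\pi(c_{i+1})]$ charged at distinct recursion levels are genuinely disjoint and that no outgoing edge of a given vertex is scanned twice — this follows because each vertex appears at most once as a ``current source'' on the recursion path (the $\pi$-values $\pi(c_0)<\pi(c_1)<\cdots<\pi(c_k)$ are strictly increasing, so no vertex repeats), but it deserves a careful statement. A secondary point to handle cleanly is the base case $s=t$ (empty path, constant work, covered by the additive $+1$) and the guarantee — already noted in the excerpt — that when an $s\to t$ path exists the scan at every level does find a successful edge, so the recursion is well-defined and terminates.
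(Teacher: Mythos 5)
Your proof is correct and takes essentially the same approach as the paper's: the paper proves the bound by induction on $\pi(t)-\pi(s)$ (charging the queries at the current level to $\pi(x)-\pi(s)$ and recursing), and your explicit telescoping sum over the recursion path $s=c_0,c_1,\ldots,c_k=t$ is just the unrolled form of that induction. The disjointness of the intervals $(\pi(c_i),\pi(c_{i+1})]$ that you flag as the crux is exactly what the inductive hypothesis encapsulates, so there is no substantive difference.
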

\begin{proof}
  We proceed by induction on $\pi(t)-\pi(s)$. If $\pi(t)-\pi(s)=0$, i.e., $s=t$, then the algorithm obviously finishes
  with a correct answer in $O(1)$ time.

  Suppose $\pi(s)<\pi(t)$. Consider a path $P=s\to t$ in $G$ whose first edge $e=sx$ has the minimum value
  $\pi(x)$.
  Then, the algorithm issues exactly one query to $\dtc$ for the existence of a path $y\to t$
  for each $y\in V$ with $\pi(y)\in (\pi(s),\pi(x))$. This amounts to at most $\pi(x)-\pi(s)-1$ queries.
  Each such query returns a negative answer.
  Then, the answer to a subsequent $x\to t$ query is affirmative.
  As a result, the algorithm recursively searches for an $x\to t$ path.
  Since $\pi(t)-\pi(x)<\pi(t)-\pi(s)$, by the inductive hypothesis,
  the $x\to t$ path will be constructed in $O((\pi(t)-\pi(x))\cdot n^\rho+1)$ time.
  So the total time needed to construct an $s\to t$ path
  is $O(1+((\pi(x)-\pi(s))+(\pi(t)-\pi(x))\cdot n^\rho+1)=O(1+(\pi(t)-\pi(s))\cdot n^\rho)$, as desired.
\end{proof}

Using the above, we obtain the following lemma.
\begin{lemma}\label{l:dag-path-query}
  Let $G$ be an acyclic digraph. There exists a data structure supporting
  fully dynamic single-edge updates to $G$ \emph{that keep $G$ acyclic},
  and point-to-point path queries, both in $O(n^{1+\rho})=O(n^{1.529})$
  worst-case time. The algorithm is Monte Carlo randomized and produces
  correct answers with high probability and against an adaptive adversary.
\end{lemma}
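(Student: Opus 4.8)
The plan is to combine the two data structures already constructed above: an instance $\dtc$ of the dynamic transitive closure structure of \ref{t:dyn-tr} with $\delta$ set to $\rho$, and an instance of the topological order structure of \ref{l:topsort} maintaining a permutation $\pi$. No further information is stored, so an edge update is handled by simply forwarding it to both structures. Since $\omega(1,\rho,1)=1+2\rho$, the update cost of $\dtc$ is $O(n^{\omega(1,\rho,1)-\rho}+n^{1+\rho})=O(n^{1+\rho})$, which matches the $O(n^{1+\rho})$ update cost of \ref{l:topsort}; hence the overall worst-case update time is $O(n^{1+\rho})$.

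For a query asking for an $s\to t$ path, I would first spend $O(n^\rho)$ time querying $\dtc$ whether $t$ is reachable from $s$; if not, the answer is that no path exists. Otherwise we are guaranteed $\pi(s)<\pi(t)$, and I invoke the recursive out-edge-scanning procedure described in this subsection. By \ref{l:top-path}, this procedure returns an $s\to t$ path in $O((\pi(t)-\pi(s))\cdot n^\rho+1)$ time, and since $\pi$ takes values in $\{1,\dots,n\}$ we have $\pi(t)-\pi(s)\le n$, so the query runs in $O(n^{1+\rho})$ worst-case time.

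Correctness and robustness against an adaptive adversary follow because the only source of randomness is internal to $\dtc$ (the structure of \ref{l:topsort} itself uses nothing but $\dtc$ inside). Conditioned on $\dtc$ not erring over the fixed polynomial-length sequence of operations --- an event of high probability --- every reachability answer it returns is a deterministic function of the current graph, and therefore so is the maintained permutation $\pi$ and every reported path. Consequently, the adversary's future updates cannot be correlated with the internal randomness in any way that breaks correctness, exactly as remarked after \ref{t:dyn-tr}.

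I do not expect a genuine obstacle here: the substantive work was already done in \ref{l:topsort} and \ref{l:top-path}. The one point that warrants a line of care is verifying that the query bound is truly worst-case rather than amortized: the recursion terminates and telescopes because each recursive call strictly decreases $\pi(t)-\pi(\cdot)$, and the out-edges scanned at different recursion levels correspond to pairwise disjoint intervals of $\pi$-values, so the total number of $\dtc$ queries is $O(\pi(t)-\pi(s))=O(n)$. It is also worth noting explicitly that an edge update may change $\pi$ arbitrarily, but this does not affect a query issued afterwards, since correctness of the query relies only on $\pi$ being a valid topological order of the current graph and on $\dtc$ reflecting the current graph --- both of which are guaranteed by \ref{l:topsort} and \ref{t:dyn-tr}, respectively.
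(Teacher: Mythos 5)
Your proposal is correct and follows essentially the same route as the paper's own proof: it combines $\dtc$ (Theorem~\ref{t:dyn-tr} with $\delta=\rho$) with the topological-order structure of Lemma~\ref{l:topsort}, invokes Lemma~\ref{l:top-path} for the $O((\pi(t)-\pi(s))\cdot n^{\rho}+1)$ query bound, and argues adaptive-adversary safety via the determinism of the answers conditioned on $\dtc$ not erring. The extra remarks on telescoping and on $\pi$ changing under updates are harmless elaborations already implicit in Lemma~\ref{l:top-path}.
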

\begin{proof}
  By Lemma~\ref{l:topsort}, with high probability (if the data structure $\dtc$ does not err) the maintained topological
  order is uniquely determined by the updates.
  The query algorithm does not use randomization, so that the produced answer does not reveal any information about the random bits used
  by the internal data structures.

  By Theorem~\ref{t:dyn-tr} and Lemma~\ref{l:topsort}, each edge update is processed
  in $O(n^{1+\rho})$ worst-case time.
  Since we have $\pi(t)-\pi(s)\leq n$ for any query $(s,t)$, by Lemma~\ref{l:top-path}, an $s\to t$ path -- if one exists --
  is constructed in $O(n^{1+\rho})$ worst-case time.
\end{proof}

\subsection{Single-source reachability tree queries}\label{s:dag-tree}
Suppose we want to compute a single-source reachability tree from a query vertex $s\in V$.
We could, in principle, reuse the path reporting procedure developed for point-to-point
queries to construct such a tree. By taking for each $t\in V\setminus\{s\}$
that is reachable from $s$, the ultimate edge $e_t$ on the $s\to t$ path, we obtain
an out-tree rooted at $s$.
Such an edge $e_t=yt$ could be computed by running a single step of the recursive
algorithm on the reverse graph of $G$.
Unfortunately, in general, finding that edge could require $\pi(t)-\pi(y)$ reachability
queries to $\dtc$. This could result in $\Theta(n^{1+\rho})$ time per just a single edge of the tree
when $\pi(t)-\pi(y)=\Theta(n)$.

As a consequence, we need to use a different approach.
Let $S$ be the set of all vertices reachable from $s$ in an acyclic digraph $G$.
Recall that $S$ can be computed using $O(n)$ reachability queries issued to $\dtc$
in $O(n^{1+\rho})$ time. We rely on the following observation that holds for DAGs.

\begin{observation}\label{o:outtree}
  For each $t\in S\setminus\{s\}$, let $e_t=vt$ be an arbitrary edge of $G$ with $v\in S$.
  Then, the edges $\{e_t:t\in S\setminus\{s\}\}$ form an out-tree on $S$ rooted at $s$.
\end{observation}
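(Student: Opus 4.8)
The plan is to verify the two defining properties of an out-tree rooted at $s$ on vertex set $S$: (i) every vertex $t \in S \setminus \{s\}$ has exactly one incoming edge among $\{e_t : t \in S \setminus \{s\}\}$, and $s$ has none; (ii) the resulting subgraph is acyclic, hence a tree. Property (i) is essentially immediate from the construction: we pick exactly one edge $e_t = vt$ per vertex $t \in S \setminus \{s\}$, and its head is $t$, so each $t \in S \setminus \{s\}$ gets precisely one incoming edge in the selected set, while $s$ is never chosen as a head and therefore has in-degree zero. I would also note that $e_t$ exists for every such $t$: since $t$ is reachable from $s$ and $t \neq s$, any $s \to t$ path has a last edge $vt$ with $v$ also reachable from $s$, i.e.\ $v \in S$, so the set of candidate edges is nonempty.

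For property (ii), the key point is that $G$ is acyclic. Any subgraph of $G$ — in particular the subgraph $F$ with vertex set $S$ and edge set $\{e_t : t \in S \setminus \{s\}\}$ — is also acyclic. A finite directed graph in which every vertex has in-degree at most one and which contains no directed cycle must be a disjoint union of out-trees (a "forest of out-trees"): following predecessor edges backwards from any vertex, the in-degree-$\le 1$ condition makes the backward walk a simple path, which by acyclicity cannot repeat a vertex and hence terminates at a unique root (a vertex of in-degree $0$). Here $F$ has $|S|$ vertices and $|S| - 1$ edges, so it has exactly one connected component, i.e.\ it is a single out-tree; and since $s$ is the only vertex of in-degree $0$, $s$ must be its root.

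I do not expect any real obstacle here — the statement is a short structural fact. The one place that warrants a sentence of care is the use of acyclicity of $G$: without it, choosing one incoming edge per vertex could create a cycle (e.g.\ in a $2$-cycle $u \leftrightarrow w$ reachable from $s$, picking $e_u = wu$ and $e_w = uw$ yields a cycle, not a tree), which is exactly why the general-graph case treated later in the paper needs the more elaborate machinery via strongly connected components rather than this simple observation. So the proof is: existence of each $e_t$, then the in-degree count, then "in-degree $\le 1$ plus acyclic plus $|S|-1$ edges $\Rightarrow$ out-tree rooted at the unique in-degree-$0$ vertex $s$."
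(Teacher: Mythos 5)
Your proof is correct. The paper states Observation~\ref{o:outtree} without any proof, so there is nothing to compare against; your argument supplies exactly the structural fact being used. A small comment on your two closing steps: once you know that every vertex of $F$ has in-degree at most one, that $F$ is acyclic, and that the backward walk from any vertex terminates at an in-degree-$0$ vertex, you already have that $F$ is a forest of out-trees whose roots are precisely the in-degree-$0$ vertices; since $s$ is the \emph{only} in-degree-$0$ vertex, there is exactly one tree, so the edge-count $|S|-1$ argument is redundant (though of course also valid). Your remark that acyclicity of $G$ is essential — and your $2$-cycle counterexample showing why this observation fails in general digraphs — is exactly the point the paper makes in Section~\ref{s:tree-general} to motivate the more involved machinery for general graphs.
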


By the above observation, it is enough to pick, for each $t \in S\setminus\{s\}$, \emph{any} incoming
edge with its tail in $S$.
In order to work against an adaptive adversary we will pick
these edges in a consistent deterministic manner.
For example, we could pick for each $t$ an incoming edge $e_t=vt$ where $v\in S$
has the minimum possible label (recall that we identify $V$ with $\{1,\ldots,n\}$).

To achieve that we use an approach reminiscent of the algorithms for computing
minimum witnesses for boolean matrix multiplication~\cite{CzumajKL07}.
Let $\Delta<n$ be an integer parameter to be chosen later.
We will maintain $q=\lceil n/\Delta \rceil$ data structures $\dtc_1,\ldots,\dtc_q$
of Theorem~\ref{t:dyn-tr}, where the underlying graph $G_i$ maintained in $\dtc_i$
is defined as follows.
Let $V',V''$ be two copies of $V$. Denote by $v'\in V'$ and $v''\in V''$ the corresponding copies of a vertex $v\in V$.
We have $V(G_i)=V\cup V'\cup V''$ and $G\subseteq G_i$.
For each edge $uv\in E(G)$, we have $uv\in E(G_i)$ and $u'v''\in E(G_i)$.
Moreover, for each $u\in V$ such that $u\in [(i-1)\cdot \Delta+1,i\cdot \Delta]$,
we add an edge $uu'$ to $G_i$.

\begin{lemma}\label{l:detect-interval}
  Let $u,v\in V(G)$, $u\neq v$. Then, a path $u\to v''$ exists in $G_i$ if and only if there exists
  a path $P=u\to v$ in $G$ such that the ultimate edge $yv$ of $P$ satisfies
  $y\in [(i-1)\cdot \Delta+1, i\cdot \Delta]$.
\end{lemma}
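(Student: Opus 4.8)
The plan is to unpack the definition of $G_i$ and argue both implications directly by tracing edges through the three vertex layers $V$, $V'$, $V''$. The key structural observation is that the only edges incident to $V''$ are of the form $u'v''$ (coming from $V'$), the only edges into $V'$ are the "connector" edges $uu'$ where $u\in [(i-1)\Delta+1, i\Delta]$, and within $V$ the graph is just a copy of $G$; moreover there are no edges leaving $V''$ and none entering $V$ from $V'\cup V''$. Consequently, any $u\to v''$ path in $G_i$ (with $u\in V$) must have a very rigid shape: it stays inside the copy of $G$ on $V$ for a while, then takes a single connector edge $yy'$ for some $y\in [(i-1)\Delta+1, i\Delta]$, then takes a single edge $y'v''$, and stops.

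For the forward direction, I would take a path $Q = u\to v''$ in $G_i$ and let $y'v''$ be its last edge (the penultimate vertex must lie in $V'$ since $v''\in V''$ has in-neighbors only in $V'$). Then the edge entering $y'$ must be the connector $yy'$, which exists only if $y\in [(i-1)\Delta+1, i\Delta]$; and since $yy'$ is the unique in-edge of $y'$, the prefix of $Q$ up to $y$ lives entirely in the $V$-copy, i.e.\ it is a $u\to y$ path in $G$. Appending the edge $yv$ (which exists in $G$ because $y'v''\in E(G_i)$ came from $yv\in E(G)$) yields the desired path $P = (u\to y)\cdot yv$ in $G$ whose ultimate edge has tail $y\in [(i-1)\Delta+1, i\Delta]$. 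For the backward direction I would reverse this: given $P = u\to y \to v$ in $G$ with $y$ in the interval, the $u\to y$ prefix copies verbatim into $G_i$ on $V$, then take $yy'$ (available since $y$ is in the interval) and then $y'v''$ (available since $yv\in E(G)$), giving a $u\to v''$ path in $G_i$.

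The one subtlety I would be careful about is the edge case where $P$ is a single edge, i.e.\ $u=y$ and $P = uv$: then the required path in $G_i$ is simply $uu'\cdot u'v''$, which is valid exactly when $u\in[(i-1)\Delta+1,i\Delta]$, consistent with the statement (and the hypothesis $u\neq v$ ensures $v''\neq u$ so this is a genuine two-edge path). A second point worth stating explicitly is why the prefix of $Q$ before the connector edge cannot wander into $V'$ or $V''$: it cannot enter $V''$ because no edge leaves $V''$, and it cannot enter $V'$ earlier because every in-edge of a $V'$-vertex is a connector $zz'$ whose only out-edge goes to $V''$, again a dead end with respect to reaching more of the graph — so the first time the path leaves the $V$-copy is via the connector that directly precedes the final $V''$-step. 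I do not expect any real obstacle here; it is a routine case analysis on the layered structure of $G_i$, and the main care is in writing the layer-traversal argument cleanly so that the interval constraint on the tail $y$ falls out exactly from the definition of which connector edges $uu'$ are present.
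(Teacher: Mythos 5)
Your proposal is correct and follows essentially the same argument as the paper: both directions are established by the same layered-structure analysis of $G_i$, noting that $v''$ only has in-neighbors in $V'$, that $V'$-vertices only have connector in-edges from $V$, and that no edges return from $V'\cup V''$ to $V$. The only difference is cosmetic (you label the two directions of the iff in the opposite order and spell out the single-edge edge case a bit more explicitly).
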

\begin{proof}
  For the forward direction, let $P=P'\cdot yv$. Then, by $G\subseteq G_i$ it follows that $P'$ is a $u\to y$ path in $G_i$.
  But since ${y\in [(i-1)\cdot \Delta+1, i\cdot \Delta]}$, there exist edges $yy'$ and $y'v''$ in $G_i$ as well.
  Therefore, $G_i$ indeed contains a $u\to v''$ path $P'\cdot yy'\cdot y'v''$.

  Now suppose there is an $u\to v''$ path $Q$ in $G_i$. Since the vertex $v''$ has only incoming edges from $V'$,
  and a vertex from $V'$ has an incoming edge (necessarily one, from $V$) only if that vertex lies in
  $[(i-1)\cdot \Delta+1,i\cdot \Delta]$,
  $Q$ is of the form $Q'\cdot yy'\cdot y'v''$, where $y\in V$ is such that $yv\in E(G)$ and ${y\in [(i-1)\cdot \Delta+1, i\cdot \Delta]}$.
  Moreover, $Q'\subseteq G_i[V]=G$ as $V'\cup V''$ has no outgoing edges to $V$.
  So, since $yv\in E(G)$, $Q'\cdot yv$ is an $u\to v$ path in $G$ and its penultimate
  vertex indeed lies in the desired interval.
\end{proof}

Every issued edge update to $G$ is passed to each of the $q$ data structures $\dtc_i$. Note that
such an update translates to two edge updates to $G_i$:
one in $V\times V$ and one in $V'\times V''$.
As a result, an edge update is processed in $O((n/\Delta)\cdot n^{1+\rho})$ worst-case time.

Now, at query time, in order to find for $t\in S\setminus\{s\}$ an incoming edge
$vt$ such that $v\in S$ and $v$ is minimum possible, we first find the smallest $j\in\{1,\ldots,q\}$
such that there is an $s\to t$ path with the penultimate vertex in the interval $[(j-1)\cdot \Delta+1,j\cdot \Delta]$.
By Lemma~\ref{l:detect-interval}, this can be achieved in $O((n/\Delta)\cdot n^\rho)$ time by issuing $q=O(n/\Delta)$ queries,
one per each of the data structures~$\dtc_i$.
Afterwards, in $O(\Delta)$ time we iterate over at most $\Delta$ edges of $t$ coming from
vertices in the interval $[(j-1)\cdot \Delta+1,j\cdot \Delta]$ in order to locate the desired minimum
labeled vertex $v$.
Since there are $O(n)$ different vertices $t$, finding the desired incoming
edges for all of them costs $O((n/\Delta)\cdot n^{1+\rho}+n\Delta)$ worst-case time.
By setting $\Delta=n^{(1+\rho)/2}$, we obtain the following.

\begin{lemma}\label{l:dag-tree-query}
  Let $G$ be an acyclic digraph. There exists a data structure supporting
  fully dynamic single-edge updates to $G$ \emph{that keep $G$ acyclic},
  and reporting a single-source reachability tree from any query vertex, both in $O(n^{(3+\rho)/2})=O(n^{1.765})$
  worst-case time. The algorithm is Monte Carlo randomized and produces
  answers correct with high probability against an adaptive adversary.
\end{lemma}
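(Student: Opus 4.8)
The plan is to collect the facts already established above into a single data structure. For updates, we forward every single-edge update of $G$ to each of the $q=\lceil n/\Delta\rceil$ data structures $\dtc_1,\dots,\dtc_q$ of Theorem~\ref{t:dyn-tr} (with $\delta=\rho$); as noted, one such update translates to two edge updates in each $G_i$, so the update cost is $O((n/\Delta)\cdot n^{1+\rho})$ worst-case. We additionally bucket, for every vertex $t$, its in-neighbors by the block $B_j=[(j-1)\cdot\Delta+1, j\cdot\Delta]$ to which they belong; these buckets are trivially maintained in $O(1)$ extra time per update.

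For a tree query from $s$, we first compute the set $S$ of vertices reachable from $s$ in $G$. Since $G_1[V]=G$, this takes $O(n)$ reachability queries to $\dtc_1$, i.e., $O(n^{1+\rho})$ time by Theorem~\ref{t:dyn-tr}; we store $S$ as a boolean array. By Observation~\ref{o:outtree}, it then suffices to pick, for every $t\in S\setminus\{s\}$, one incoming edge $e_t=vt$ with $v\in S$, and — to stay deterministic against an adaptive adversary — we fix $e_t$ to be the one whose tail $v$ has the smallest label. To find this $v$ we use the minimum-witness idea: the smallest label among all $v$ with $v\in S$ and $vt\in E(G)$ lies in $B_j$, where $j$ is the smallest index for which some $s\to t$ path has penultimate vertex in $B_j$ (indeed, $v$ satisfies ``$v\in S$ and $vt\in E(G)$'' exactly when $s\to v\to t$ is such a path). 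By Lemma~\ref{l:detect-interval}, this $j$ is located by asking, for $j=1,2,\dots$, whether $G_j$ contains an $s\to t''$ path — that is, by $O(n/\Delta)$ reachability queries of cost $O(n^\rho)$ each. Having found $j$, we scan the (at most $\Delta$) in-neighbors of $t$ in bucket $B_j$, discard those not in $S$, and output as $e_t$ the edge from the minimum-labeled remaining vertex (at least one exists, namely the penultimate vertex witnessing block $j$). This costs $O((n/\Delta)\cdot n^\rho+\Delta)$ per $t$, hence $O((n/\Delta)\cdot n^{1+\rho}+n\Delta)$ over all of $S$.

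Adding the $O(n^{1+\rho})$ cost of computing $S$, both the update and the query time are $O((n/\Delta)\cdot n^{1+\rho}+n\Delta)$. The two terms balance at $\Delta=n^{(1+\rho)/2}$, where each equals $n\cdot n^{(1+\rho)/2}=n^{(3+\rho)/2}$; with $\rho\approx 0.529$ this is $O(n^{1.765})$, as claimed.

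It remains to argue correctness against an adaptive adversary. All randomness resides inside $\dtc_1,\dots,\dtc_q$, and by Theorem~\ref{t:dyn-tr} each of them, over any fixed sequence of $\poly(n)$ updates, errs with probability at most $n^{-c}$ for an arbitrarily large constant $c$; since there are only $q=O(n)$ such data structures, a union bound shows that with high probability none of them ever errs. Conditioned on that event, as remarked after Theorem~\ref{t:dyn-tr}, every answer returned by every $\dtc_i$ is a function of the current graph alone, and the query algorithm above layers only deterministic rules (in particular the minimum-label tie-break) on top of those answers; hence the reported tree is completely determined by the current graph and leaks nothing about the random bits. I do not expect a genuine obstacle here: the only points needing care are this union bound over the $\Theta(n/\Delta)$ randomized copies, and the (immediate, via Lemma~\ref{l:detect-interval}) verification that the block found for $t$ really does contain a valid tail lying in $S$.
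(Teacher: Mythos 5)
Your proposal is correct and follows essentially the same route as the paper's proof: you compute the reachable set $S$, invoke Observation~\ref{o:outtree}, and locate the minimum-label incoming tail via the block-partitioned auxiliary graphs $G_i$ and Lemma~\ref{l:detect-interval}, balancing at $\Delta=n^{(1+\rho)/2}$. The only (harmless) differences are implementation-level: you reuse $\dtc_1$ to compute $S$ rather than keeping a separate $\dtc$, you spell out the in-neighbor bucketing and the ``$v\in S$'' filter when scanning bucket $B_j$, and you make the union bound over the $q$ copies explicit.
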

\begin{proof}
  By Theorem~\ref{t:dyn-tr}, each of the data structures $\dtc_1,\ldots,\dtc_q$ produces only correct answers with high probability.
  The answers produced by our algorithm are uniquely determined by the graph -- we always choose
  a minimum labeled feasible edge for Observation~\ref{o:outtree}. Hence, the answers
  of the algorithm do not reveal any random bits used by the underlying data structures.
\end{proof}

The Lemmas~\ref{l:dag-path-query}~and~\ref{l:dag-tree-query} combined yield Theorem~\ref{t:dags}.
\section{Overview of the remaining results}
\subsection{Fully dynamic path reporting in general digraphs}\label{s:path-overview}
The path finding algorithm behind Lemma~\ref{l:dag-path-query} fails for digraphs with non-trivial (i.e., consisting of at least two vertices)
strongly connected components. In order to deal with this issue, we apply the usual idea of solving
the problem separately on the ``acyclic'' part of the graph, and separately on the individual
strongly connected components. Of course, the acyclic part corresponds to the \emph{condensation} $G/\sccs$,
where $\sccs$ denotes the family of strongly connected components of $G$.

Fully dynamic path reporting inside strongly connected components
against an adaptive adversary is a challenge by itself, and no prior
tools for this task have been developed.
However, the known combinatorial methods~\cite{BernsteinGS20}
allow us to solve the ``strongly connected'' problem in subquadratic amortized time in the \emph{decremental} setting. This
often captures some of the critical difficulties of the fully dynamic setting.
To apply this tool, however, we need to abandon handling edge insertions
and deletions in a uniform way, as is typical in algebraic dynamic graph algorithms
(such as Theorem~\ref{t:dyn-tr})
based on path counting or dynamic matrix inverse.

In comparison to the algorithm for acyclic graphs from Section~\ref{s:dag-point-to-point},
in order to handle insertions,
the algorithm for general graphs operates in phases of $F$ edge insertions.
At the beginning of each phase, a path reporting decremental strongly connected components
data structure $\decscc$ is initialized for the current graph~$G$.
This data structure maintains a graph~$G^-$ defined as the graph at the beginning
of the current phase minus the edges deleted in that phase.
The data structure~$\decscc$ can be extended (in a standard way, see Lemma~\ref{l:decscc-extension})
to also maintain the condensation $G^-/\sccs$ efficiently,
where $\sccs$ denotes the set of strongly connected components of~$G^-$.
If there were no edge insertions issued, given a query $(s,t)$, applying the algorithm of Lemma~\ref{l:dag-path-query} to
the condensation $G^-/\sccs$ could produce, in $\Ot(n^{1+\rho})$ time,
a path $P$ in $G^-/\sccs$ between components $X,Y\in \sccs$ such that $s\in X$ and $t\in Y$.
Then, the data structure $\decscc$ and the condensation itself could be used
to lift the path $P$ to an actual $s\to t$ path in $G$, again in $\Ot(n^{1+\rho})$ time.

However, in presence of insertions,
to compute a desired $s\to t$ path upon query,
one first needs to identify which of the current phase's inserted edges (and in what order) necessarily
appear on a sought $s\to t$ path $P$.
This can be decided in $\Ot(n^{1+\rho}+nF)$ time using a simple but powerful generalization of Theorem~\ref{t:dyn-tr} -- given in Lemma~\ref{l:dyn-tr-general} --
that efficiently maintains an \emph{incremental} subset of rows/columns
of the transitive closure matrix.
Moreover, using Lemma~\ref{l:dyn-tr-general}, one can compute a partition of $G^-$ into clusters,
such that (at most $F+1$) individual maximal subpaths of $P$ entirely contained in $G^-$ can be sought
in separate clusters (see Lemma~\ref{l:query-partition}). This enables constructing them in $\Ot(n^{1+\rho})$ total time,
instead of $\Ot(|F|\cdot n^{1+\rho})$ time, which one would need to pay if each
of the subpaths was computed in the entire graph $G^-$.
For details, see Section~\ref{s:reach-general}.

\subsection{Fully dynamic reachability tree reporting in general digraphs}

Unfortunately, Observation~\ref{o:outtree} does not hold for general graphs:
given some source $s\in V$,
choosing an arbitrary incoming edge (e.g., that with the minimum label) from each vertex reachable from $s$
might lead to a disconnected (in the undirected sense) graph containing cycles.

In order to deal with this problem, we could, again, apply Observation~\ref{o:outtree} to the condensation
$G/\sccs$ (with the source set to the SCC containing $s$). The obtained
out-tree $T'$ in $G/\sccs$ could then be extended to a single-source reachability tree $T$ from $s$ in $G$ in two steps:
first, expand each vertex of $T'$ (i.e., a strongly connected
component $S$ reachable from $s$) into
a \emph{sparse} strongly connected subgraph of $S$ using the data structure $\decscc$.
Then, compute a single-source reachability tree from $s$ in the obtained subgraph of $G$
using any graph search procedure in $O(n)$ time.

For similar reasons as applied to path reporting, we need to operate in phases of edge insertions,
so that the strongly connected components of $G^-$ (defined as in Section~\ref{s:path-overview}) only split.
However, it is not clear how to efficiently handle the condensation $G^-/\sccs$ (which, critically, \emph{does not} include
some of the original edges of the decremental graph $G^-$)
using the algebraic data structure of Theorem~\ref{t:dyn-tr},
so that an interval of length $\Delta$ containing the minimum labeled tail of an incoming edge
can be located efficiently.
Recall that when the strongly connected components split, the condensation undergoes vertex splits, and each
vertex split (revealed online) might require $\Theta(n)$ edges changing endpoints.
Moreover, each of the $O(n^2)$ original edges of $G^-$ can be inserted, at some point,
to $G^-/\sccs$.
Consequently, we might need to perform $\Theta(n^2)$ edge updates on the data structure
of Theorem~\ref{t:dyn-tr} if we want it to reflect $G^-/\sccs$.
However, the updates in Theorem~\ref{t:dyn-tr} are relatively costly and we could not afford
performing $\Theta(n^2)$ such updates within a single phase.

We avoid the above problem by picking, for each vertex of $G^-$ (and not $G^-/\sccs$), the minimum-labeled incoming edge wrt. the topological order $\pi$
of $G^-/\sccs$, instead of a minimal edge wrt. an arbitrary order that is fixed initially.
This, of course, leads to other complications: the topological order of $G^-/\sccs$
evolves in time.
However, we can maintain a topological \emph{labeling} $\pi$ of the dynamic graph $G^-/\sccs$ that admits
a certain \emph{nesting property} (see Lemma~\ref{l:decscc-extension}). This property guarantees
that if the interval $[\pi(S),\pi(S)+|S|-1]$ for $S\in \sccs$ is contained in some interval $I_j$
of the form $[(j-1)\cdot \Delta+1,j\cdot\Delta]$, then we will have $\pi(S')\subseteq I$
for any $S'\subseteq S$ that becomes an SCC of $G^-$ in the future.
We call such an SCC $S$ \emph{non-special}, and all other SCCs \emph{special}.
In other words, the topological order of all vertices within a non-special component $S$ is fixed, up to the interval~$I_j$ it
is currently contained in.
This enables us to locate the minimum labeled edges coming
from non-special SCCs using the approach of Section~\ref{s:dag-tree}.

Luckily, the number of \emph{special} SCCs that do not fall into the above category
is always $O(n/\Delta)$, so we can handle them using a different, more straightforward approach.

The fully dynamic tree-reporting data structure constitutes the most technically involved part
of this paper and heavily relies on the developments of Section~\ref{s:reach-general}. Details can be found
in Section~\ref{s:tree-general}.

\subsection{Deterministic incremental algorithms with subquadratic worst-case bounds}
In order to obtain Theorems~\ref{t:incremental-reach-sum},~\ref{t:incremental-appr-sum},~and~\ref{t:incremental-exact-sum},
we follow the general approach behind the subquadratic trade-offs for
reachability -- based on either path counting~\cite{DemetrescuI05} or dynamic matrix inverse~\cite{BrandNS19, Sankowski04}.
Namely, the algorithms operate in phases of $F=O(n^\alpha)$ updates.
Each phase starts with a computationally heavy recomputation step
based on rectangular matrix multiplication, that, roughly speaking,
recomputes the reachability/shortest paths matrix $A$ for all pairs of vertices $u,v\in V$
in the graph $G_0$ (equal to $G$ when the phase starts) based on the analogously defined matrix $A'$ and the $O(n^\alpha)$ updates from the previous phase.
The cost of such a recomputation step should be thought of being amortized over the
$F$ updates of the phase, and this is the only source of amortization.
However, there is a well-known standard technique (also used in the previous dynamic algebraic graph algorithms, e.g.~\cite{BrandNS19, DemetrescuI05, Sankowski04}) for converting such amortized bounds into worst-case bounds by maintaining two copies of the data structure
that switch their roles every $F/2$ updates.

The queries are answered using the matrix $A$ and possibly some auxiliary data structure
build on the current phase's updates that depends on the trade-off we want to achieve.

In the fully dynamic algorithms for reachability, the matrix $A$ stores path counts modulo a prime \cite{DemetrescuI05},
or values of certain multivariate polynomials at suitable points modulo a prime~\cite{BrandNS19, Sankowski04}.
In the fully dynamic shortest paths algorithms~\cite{BrandN19, BrandNS19, Sankowski05},
the matrix $A$ stores coefficients of low-degree terms of multivariate polynomials with suitably chosen coefficients.
All these fully dynamic algorithms critically rely on the Zippel-Schwartz lemma and thus
are inherently Monte-Carlo randomized.

On the other hand, we show that in the incremental setting a much simpler idea is sufficient.
Roughly speaking, in our incremental algorithms, the matrix $A$ stores \emph{paths} in $G_0$.
These paths are implemented as optimal purely functional concatenable deques of~\cite{KaplanT99} to allow for
convenient and efficient path manipulations, such as concatenation and iteration.
More specifically, for reachability these are simple paths, for approximate shortest paths $A$ stores
approximate shortest paths, whereas for exact shortest paths, $A$ stores short paths of length no more than a certain threshold $h$.

For reachability, the recomputation is performed using boolean matrix multiplication,
for approximate shortest paths, we use the $(1+\eps)$-approximate min-plus product~\cite{Zwick02}, whereas
for exact shortest paths -- the bounded exact min-plus product~\cite{Zwick02}.
Turning these respective (rectangular) matrix products of number matrices into products of path matrices
is possible since each of these products allows for computing \emph{witnesses} deterministically~\cite{AlonGMN92, Zwick02}.

The trade-offs that we obtain for incremental reachability (Section~\ref{s:incremental-reach})
match the best known fully dynamic transitive closure trade-offs
for single-edge and single-vertex-incoming-edges updates obtained via dynamic matrix inverse~\cite{BrandNS19}. However,
these fully dynamic trade-offs seem hopelessly randomized and do not allow for path reporting.

Interestingly, for $(1+\eps)$-approximate shortest paths in the single-edge updates setting (Section~\ref{s:incremental-appr}), we obtain the same trade-offs (up to polyloagrithmic factors) as in the case of reachability.
To this end, we need to carefully control the error and perform recomputations
using a hierarchical binary-tree-like circuit, as opposed to a linear path-like circuit that is sufficient for reachability.

Finally, for incremental exact shortest paths in unweighted graphs (Section~\ref{s:incremental-exact}),
we exploit a very simple observation (also used in e.g.~\cite{KarczmarzL19}), that a deterministically computed $\Ot(F)$-size hitting set of length-$n/F$ paths
remains valid and is $\Ot(F)$-sized throughout a phase of insertions if we simply augment it with the endpoints
of inserted edges.

\section{Fully dynamic strongly connected components}\label{s:scc}
As a warm-up, let us first consider fully dynamic maintenance of strongly connected components.
We start with the following variant of the data structure from Theorem~\ref{t:dyn-tr}
which will be crucial to all our developments in general graphs.
\begin{lemma}\label{l:dyn-tr-general}
  Let $G=(V,E)$ be a digraph. Let $R\subseteq V$.
  There exists a data structure explicitly maintaining the information whether there exists
  an $r\to v$ path in $G$ for each of the pairs $(r,v)\in R\times V$ and supporting each of the following update operations
  in $O(n^{1+\rho}+n\cdot |R|)$ worst-case time:
  \begin{itemize}
    \item single-edge insertions/deletions to $G$,
    \item adding a new vertex to the set $R$,
    \item resetting $R$ to $\emptyset$.
  \end{itemize}
  The data structure is Monte Carlo randomized and produces correct answers with high probability.
\end{lemma}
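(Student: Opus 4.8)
The plan is to reduce to a single instance of the data structure of Theorem~\ref{t:dyn-tr} run on an auxiliary graph $G^\star$, in the same spirit as the construction of $G_i$ in Section~\ref{s:dag-tree}. First I would fix an upper bound $k=|V|=n$ on the eventual size of $R$, and let $V^\dagger=\{v^\dagger : v\in V\}$ be a fresh disjoint copy of $V$. The auxiliary graph $G^\star$ has vertex set $V\cup V^\dagger$; it always contains a copy of $G$ on $V$ (so every edge update $uv$ to $G$ is mirrored by the update $uv$ to $G^\star$), and for every $r$ currently in $R$ it contains the single extra edge $r^\dagger r$. There are no other edges incident to $V^\dagger$, so $r^\dagger$ has out-degree $1$ (to $r$) when $r\in R$ and is isolated otherwise, and every $v^\dagger$ has in-degree $0$. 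The key structural fact, proved exactly as Lemma~\ref{l:detect-interval}, is that for $r\in R$ and $v\in V$ there is an $r^\dagger\to v$ path in $G^\star$ if and only if there is an $r\to v$ path in $G$: the forward direction prepends the edge $r^\dagger r$, and the backward direction observes that any $r^\dagger\to v$ path must start with $r^\dagger r$ (the only edge out of $r^\dagger$) and thereafter stays inside $G^\star[V]=G$, since $V^\dagger$ has no incoming edges.

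Next I would describe how the three update types are implemented on top of the Theorem~\ref{t:dyn-tr} data structure $\dtc^\star$ maintained on $G^\star$ with the parameter $\delta=\rho$, so that each of $\dtc^\star$'s single-edge updates costs $O(n^{\omega(1,\rho,1)-\rho}+n^{1+\rho})=O(n^{1+\rho})$ worst-case and each query costs $O(n^\rho)$. A single-edge insertion or deletion to $G$ is simply forwarded to $\dtc^\star$. Adding a new vertex $r$ to $R$ is implemented by inserting the edge $r^\dagger r$ into $G^\star$. Resetting $R$ to $\emptyset$ is implemented by deleting the (at most $|R|\le n$) edges $r^\dagger r$ currently present, so it costs $O(|R|\cdot n^{1+\rho})$ in this naive accounting; to hit the claimed $O(n^{1+\rho}+n\cdot|R|)$ bound I would instead piggyback the reset on the explicit reachability table described below — resetting $R$ only needs to clear that table, which is $O(n\cdot|R|)$ work, and we may lazily leave the stale edges $r^\dagger r$ in $G^\star$ as "dead" vertices that are never queried, or alternatively note that since we additionally \emph{explicitly maintain} the table, the clean-up of $\dtc^\star$ can be deferred/charged against the insertions that created those edges. (If one prefers a clean worst-case statement without this charging subtlety, the reset can also be realized by rebuilding $\dtc^\star$ from scratch on the current $G$ in $O(n^{1+\rho})\cdot n$ time, which is fine when amortized against the $\Omega(n)$ additions to $R$ that must precede it; I would pick whichever accounting matches how the lemma is invoked in later sections.)

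Finally I would maintain the required \emph{explicit} table $M$ with $M[r][v]\in\{0,1\}$ recording whether $r$ reaches $v$ in $G$, for all $(r,v)\in R\times V$, and argue it is kept current within the stated budget. After any single-edge update to $G$, or after adding a vertex to $R$, I recompute $M$ from scratch by issuing one query to $\dtc^\star$ for the pair $(r^\dagger,v)$ for every $r\in R$ and every $v\in V$; this is $|R|\cdot n$ queries at $O(n^\rho)$ each, i.e.\ $O(|R|\cdot n\cdot n^\rho)$ — too slow. The fix is the usual one: do not recompute the whole table, but observe that for single-edge updates and for a single $R$-insertion only $O(|R|\cdot n)$ entries can change and, more to the point, one only needs to pay $O(n^\rho)$ per queried entry \emph{when actually probed}; since the lemma only requires the information to be maintained explicitly, I would instead recompute exactly one row of $M$ per added vertex ($n$ queries, $O(n^{1+\rho})$ time) and, for edge updates, recompute all $|R|$ rows ($|R|\cdot n$ queries, $O(|R|\cdot n\cdot n^\rho)=O(n^{2+\rho}\cdot|R|/n)$)... which is still too slow. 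The honest resolution, and the step I expect to be the real obstacle, is that the $O(n^{1+\rho}+n|R|)$ bound forces us to spend only $O(n)$ amortized-or-worst-case query-equivalents per $R$-row touched, so one cannot afford an $O(n^\rho)$-time reachability query per cell; instead I would have $\dtc^\star$ expose, per update, the changed \emph{column} of its internal matrix representation for each source of interest in $O(n)$ time (this is exactly the extra capability that the path-counting / matrix-inverse machinery of Theorem~\ref{t:dyn-tr} provides — a single update already touches an $n$-sized slice of the maintained matrix), and read off the row $M[r][\cdot]$ from the $r^\dagger$-slice in $O(n)$ time, for a total of $O(n|R|)$ per edge update and $O(n)$ per $R$-insertion on top of the $O(n^{1+\rho})$ internal update cost. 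Correctness of the reported bits is immediate from the structural equivalence above, and the Monte Carlo high-probability guarantee is inherited verbatim from Theorem~\ref{t:dyn-tr}, since the only randomness is internal to $\dtc^\star$ and, as already noted, its answers over a fixed polynomial-length update sequence are unique and adversary-independent whenever it does not err.
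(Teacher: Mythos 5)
Your auxiliary-graph construction $G^\star$ is an unnecessary layer of indirection, and it creates a genuine gap at the \emph{reset} operation that you notice but do not close. Modeling ``add $r$ to $R$'' as an edge insertion $r^\dagger r$ forces ``reset $R$ to $\emptyset$'' to become $|R|$ edge deletions in $\dtc^\star$, costing $O(|R|\cdot n^{1+\rho})$ worst case, which exceeds the claimed $O(n^{1+\rho}+n|R|)$. Your proposed fixes do not rescue a \emph{worst-case} bound: leaving stale edges $r^\dagger r$ in $G^\star$ means $\dtc^\star$'s state no longer corresponds to the set $R$ you intend (so a subsequent re-addition of the same $r$, or a slice read-off for the ``wrong'' $r^\dagger$, is ill-defined); amortizing the cleanup against past insertions is explicitly an amortized argument, and the lemma is stated and used as worst-case; and the from-scratch rebuild costs $\Theta(n^{2+\rho})$, which is not $O(n^{1+\rho}+n|R|)$ in the worst case either. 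The paper avoids all of this because it never encodes $R$ in the graph at all: it maintains the symbolic adjacency matrix $A$ of $G$ itself and simply tracks the submatrix $(A^{-1})_{R,V}$. With that representation, adding $r$ to $R$ is nothing but reading out one row $(A^{-1})_{\{r\},V}$ (n black-box queries, $O(n^{1+\rho})$), and resetting $R$ is literally discarding the tracked rows in $O(1)$.

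The second issue is that you have correctly diagnosed the crux — one cannot afford $O(n^\rho)$ per cell of the $R\times V$ table after an edge update — but you resolve it by asserting that $\dtc^\star$ ``exposes'' an $n$-sized slice in $O(n)$ time. This is both imprecise and, as stated, wrong: the data structure of Theorem~\ref{t:dyn-tr} does \emph{not} maintain an explicit $n$-slice of $A^{-1}$ per update (that would make it too slow); it maintains $A^{-1}$ implicitly and answers single entries in $O(n^\rho)$. What \emph{is} true, and what the paper proves, is the linear-algebra fact you are gesturing at: a single-entry change $A\mapsto A'$ can be written $A'=AB$ with $B$ supported on the diagonal and one column $j$, so $A'^{-1}=A^{-1}+(B^{-1}-I)A^{-1}$ depends only on the single row $(A^{-1})_{\{j\},V}$. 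Reading that row via the black box costs $O(n\cdot n^\rho)=O(n^{1+\rho})$, not $O(n)$; the subsequent outer-product update to $(A^{-1})_{R,V}$,
\[
  (A'^{-1})_{R,V} \;=\; (A^{-1})_{R,V} \;+\; (B^{-1}-I)_{R,\{j\}}\cdot (A^{-1})_{\{j\},V},
\]
is the $O(n|R|)$ part. Your headline bound happens to come out right, but the mechanism you describe (a free $O(n)$ slice read) is not something Theorem~\ref{t:dyn-tr} supplies as a black box; one has to open the matrix-inverse representation, which is precisely what makes the $G^\star$ detour pointless. In short: drop $G^\star$, track $(A^{-1})_{R,V}$ directly, and make the rank-one-update computation explicit rather than appealing to an unspecified slice interface.
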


\begin{proof}
  We follow the idea of \cite{Sankowski05}, which is an extension
  of the data structure of Theorem~\ref{t:dyn-tr} supporting single
  edge insertions and deletions in $O(n^{1 + \rho})$ worst-case time, and queries in $O(n^\rho)$ time (for $\delta:=\rho$).
  From \cite[Theorem~7]{Sankowski04}, maintaining the reachability
  information reduces, within the same update time and with high probability correctness, to maintaining the
  inverse of a matrix $A$ which is a symbolic adjacency matrix with variables replaced with random elements of the field $\mathbb{Z}/p\mathbb{Z}$ for a sufficiently
  large prime number $p=\Theta(\poly{n})$. Indeed, to maintain the
  reachability information from $R$ to $V$, it is enough to explicitly maintain
  the submatrix $(A^{-1})_{R,V}$ under the sequence of updates.

  The key idea is to write, after a single entry change, the updated matrix
  as the product of two
  matrices $A' = A\cdot B$. Here $B$ has a special form, i.e., it has
  non-zero elements only on the diagonal and one column (say $j$)
  where the update occurs~\cite[Theorem 4]{Sankowski05}. This implies that $B^{-1}$ has similar non-zero structure~\cite[Fact 5.4]{BergamaschiHGWW21full}.
  This way in the multiplication $A'^{-1} = B^{-1}A^{-1}
  = A^{-1} + (B^{-1} - I)A^{-1}$ only the $j$-th row of $A$ is used.
  In particular, we can obtain the submatrix $(A'^{-1})_{R,V}$ by
  the following equation
  $$(A'^{-1})_{R,V} = (A^{-1})_{R,V} + (B^{-1} - I)_{R,\{ j\} } \cdot (A^{-1})_{\{ j\},V}.$$

  The entries of $(A^{-1})_{\{ j\},V}$ can be found in $O(n^{1 + \rho})$ time using Theorem~\ref{t:dyn-tr}.
  Subsequently,
  recomputation of the part of the inverse can be done via a vector-vector
  product in $O(n\cdot |R|)$ time.
  To add a vertex $v$ to the set $R$ we additionally need to query the $v$-th row of
  $A^{-1}$ which again takes $O(n^{1 + \rho})$ time by Theorem~\ref{t:dyn-tr}.
\end{proof}

We will also need a near-optimal decremental strongly connected components data structure.
\begin{theorem}\label{t:decscc} \textup{\cite{BernsteinPW19}} Let $G$ be a directed graph. There exists a Las Vegas randomized data structure maintaining
  the strongly connected components $\sccs$ of $G$ \emph{explicitly}\footnote{That is, the algorithm maintains
  a mapping $s:V\to\sccs$ such that $s(v)$ is the SCC of $v$,
  and an identifier of each $s(v)$ is explicitly stored in a memory cell, so that one is notified
  every time that identifier changes.} subject to edge deletions in $\Ot(n+m)$ expected total time.
\end{theorem}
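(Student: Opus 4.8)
The statement is quoted verbatim from \cite{BernsteinPW19}, so the plan is to recall the structure of that construction at a high level. The starting point is that under edge deletions the partition $\sccs$ only \emph{refines}: each strongly connected component either survives or splits into several smaller ones. Hence the entire history of $\sccs$ is captured by a laminar refinement tree, and if one always thinks of a split of a component $S$ as ``peeling off'' the side containing \emph{fewer edges}, then every vertex (resp.\ edge) lies on a peeled-off side only $O(\log n)$ times. Maintaining the explicit mapping $s\colon V\to\sccs$ is then cheap: when $S$ splits, the peeled side gets a fresh identifier and its vertices are notified, for $O(n\log n)$ total explicit-maintenance work, and the companion bookkeeping for the condensation $G/\sccs$ together with a topological numbering whose intervals nest with the refinement tree (as used later in this paper) adds only $\Ot(n+m)$. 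So the whole difficulty is: detect one split of $S$ and re-set up the data structure on the new pieces within $\Ot(|E(G[S])|)$ time, charged to the smaller pieces.

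To detect splits I would reduce to \emph{decremental single-source reachability} (SSR). For a component $S$ currently believed strongly connected, pick a (random) representative $r\in S$ and run a decremental SSR structure from $r$ in $G[S]$ and another from $r$ in the reverse graph $G[S]^R$; let $R^+(r),R^-(r)$ be the maintained reachable sets. As long as $R^+(r)=R^-(r)=S$, the component is certifiably still strongly connected. The first time some $v\in S$ leaves either set we learn that $S$ has split; then $S_r:=R^+(r)\cap R^-(r)$ is exactly the new component of $r$ and is read off for free, we peel off whichever of $S_r$ and $S\setminus S_r$ has fewer edges, recurse the whole scheme on it, and continue on the larger side — re-initialising the two SSR structures there from a fresh representative when $r$ happened to fall on the small side. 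Because $r$ is a uniformly random vertex of $S$, such a re-initialisation of the large side is needed only with probability proportional to the (edge) size of the small side, so its expected cost can be charged to the peeled-off side, keeping everything within the $O(\log n)$-per-edge budget above.

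The technical heart — and the step I expect to be the main obstacle — is the decremental SSR subroutine itself, since the naive decremental reachability tree from $r$ (a reachability analogue of an Even--Shiloach tree) has depth $\Theta(n)$ and costs $\Theta(mn)$ over all deletions, far above $\Ot(n+m)$. The resolution in \cite{BernsteinPW19} is to maintain reachability from $r$ through a hierarchy of \emph{hop-bounded} reachability trees rooted at a geometrically shrinking family of randomly sampled ``hub'' vertices: a hop-$h$ reachability tree costs only $\Ot(mh)$ over its lifetime, every sufficiently long $r\to v$ walk is w.h.p.\ hit by a hub of the appropriate level, and the levels are balanced so that (number of re-samplings at a level) $\times$ (that level's hop bound) stays $\Ot(m)$. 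Arranging the re-sampling to be rare enough — detecting exactly when a reachability certificate through the current hubs is destroyed and rebuilding only the affected part of the hierarchy — while keeping this accounting consistent with the refinement-tree recursion of the previous paragraph is the delicate part. Once it is done, all randomness lives only in the hub and representative sampling, so the data structure never errs (it is Las Vegas), with expected total update time $\Ot(n+m)$.
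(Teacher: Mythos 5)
Theorem~\ref{t:decscc} is cited from \cite{BernsteinPW19}; the present paper gives no proof of it, only the companion Remark~\ref{r:decscc} on converting the Las Vegas guarantee to a Monte Carlo one. So there is nothing in the paper for your sketch to match or diverge from, and what you are really doing is reconstructing the cited work.

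Your outer skeleton is correct and is shared by the whole decremental-SCC line: SCCs only refine under deletions; charging work to the side of a split with fewer edges bounds each edge's participation by $O(\log n)$; splits of the component containing a uniformly random representative $r$ can be detected via decremental reachability from $r$ in both directions; and the expected cost of re-initialising the large side when $r$ lands in the small side is absorbed by the small side. The gap is exactly in the step you yourself flag as the ``technical heart''. A hierarchy of hop-bounded (BFS-depth-bounded) reachability trees rooted at geometrically thinning random hub sets, with levels balanced so that (number of re-samplings at a level) $\times$ (that level's hop bound) stays $\Ot(m)$, is essentially the Chechik--Hansen--Italiano--Loitzenbauer--Parotsidis scheme, and that scheme yields $\Ot(m\sqrt{n})$, not $\Ot(m)$; tuning the level granularity alone does not remove the $\sqrt{n}$ while the depth parameter is plain BFS depth in $G$. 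The actual new primitive in \cite{BernsteinPW19} is a \emph{generalized Even--Shiloach tree} that runs on the condensation and supports contracting an entire SCC into a single node on the fly, so that the depth parameter governing the cost at each level is the SCC-diameter of the contracted graph rather than a hop length in $G$; it is this, together with a hierarchy keyed to SCC-diameter, that makes the total work telescope to $\Ot(m)$. Your sketch mentions neither the contraction nor this distinct depth notion, so as written the amortization does not close at near-linear. A smaller point: \cite{BernsteinPW19} maintains the SCC hierarchy directly through this machinery rather than via a black-box reduction of SCC maintenance to two decremental SSR instances per component, although your reduction is conceptually sound and worth keeping as intuition.
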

\begin{remark}\label{r:decscc}
  The data structure of Theorem~\ref{t:decscc} can be converted into a Monte Carlo data structure
  that runs in worst-case $\Ot(n+m)$ total time and is correct with high probability.
  Indeed, it is enough to maintain $\Theta(\log{n})$ independent copies of the data structure, such that each of them is terminated
  prematurely when
  its actual time used exceeds the expected running time by more than a fixed constant factor.
  A failure is declared if all of the copies are terminated prematurely. By Markov's inequality, this happens
  with low probability.
\end{remark}

Our algorithm operates in phases. Each phase spans $F$ edge \emph{insertions}.
At the beginning of a phase, we reinitialize a decremental data structure $\decscc$ of Theorem~\ref{t:decscc}
that will be responsible for maintaining
the components that are not affected by the insertions.
The remaining components will be handled using a data structure $\dtc$ of Lemma~\ref{l:dyn-tr-general}
with the (growing) set $R$ storing the (at most~$F$) heads of edges inserted in the current phase.

More specifically, using the data structures $\decscc$ and $\dtc$,
we will build a not very dense graph $H$ which will preserve
the strongly connected components of $G$. The following lemma defines such a graph and proves that it possesses the desired property.

\begin{lemma}\label{l:scc-cert}
  Let $G$ be a directed graph. Let $E^+$ be some set of at most $f$ edges.
  Let a graph $H$ on~$V$ with $O(nf)$ edges contain the following edges:
  \begin{enumerate}
    \item for each strongly connected component $S$ of $G$, a directed cycle on the vertices $S$,
    \item for each $uv\in E^+$, and each $w\in V$, an edge $vw$ if $v$ can reach $w$ in $G\cup E^+$,
      and an edge $wv$ if $w$ can reach $v$ in $G\cup E^+$.
  \end{enumerate}
  Then, for all $u,v\in V$, $u$ and $v$ are strongly connected in $G\cup E^+$ if and only if they
  are strongly connected in $H$.
\end{lemma}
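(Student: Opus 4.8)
The plan is to prove the two implications separately, and the easy direction first sets the tone for the hard one.

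\textbf{From $H$ to $G\cup E^+$.} The key point is that \emph{every} edge $xy$ of $H$ is certified by a walk from $x$ to $y$ in $G\cup E^+$. Indeed, an edge of type (1) joins two vertices of the same strongly connected component $S$ of $G$, hence two vertices mutually reachable already in $G\subseteq G\cup E^+$; and an edge of type (2) is inserted precisely because its tail reaches its head in $G\cup E^+$ (this is true for both the ``$vw$'' and the ``$wv$'' variants). Consequently, any $u\to v$ walk in $H$ expands, edge by edge, into a $u\to v$ walk in $G\cup E^+$. If $u$ and $v$ are strongly connected in $H$, apply this to a $u\to v$ and to a $v\to u$ walk in $H$ to conclude that $u$ and $v$ are strongly connected in $G\cup E^+$.

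\textbf{From $G\cup E^+$ to $H$.} Suppose $u$ and $v$ are strongly connected in $G\cup E^+$ and fix a closed walk $\mathcal{W}$ through both $u$ and $v$ (concatenate a $u\to v$ path with a $v\to u$ path in $G\cup E^+$). If $\mathcal{W}$ uses no edge of $E^+$, then $u$ and $v$ lie in a common strongly connected component $S$ of $G$, and the directed cycle on $S$ guaranteed by (1) already witnesses that $u$ and $v$ are strongly connected in $H$. Otherwise, let $e_1=a_1b_1,\dots,e_k=a_kb_k$ (with $k\ge 1$) be the occurrences of $E^+$-edges along $\mathcal{W}$ in cyclic order, and for each $i$ let $Q_i$ be the walk in $G$ traced by $\mathcal{W}$ from $b_i$ to $a_{i+1}$ (indices mod $k$); since $a_i$ is the last vertex of $Q_{i-1}$, every vertex of $\mathcal{W}$ lies on some $Q_i$. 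The heart of the argument is the claim that for every vertex $x$ on $Q_i$, both $b_ix$ and $xb_{i+1}$ are edges of $H$: a prefix of $Q_i\subseteq G$ witnesses that $b_i$ reaches $x$ in $G\cup E^+$, and $b_i$ is the head of $e_i\in E^+$, so $b_ix$ is of type (2); while a suffix of $Q_i$ followed by $e_{i+1}$ witnesses that $x$ reaches $b_{i+1}$ in $G\cup E^+$, and $b_{i+1}$ is the head of $e_{i+1}\in E^+$, so $xb_{i+1}$ is of type (2). Taking $x=b_i$ shows $b_1\to b_2\to\cdots\to b_k\to b_1$ is a directed cycle in $H$, so all $b_j$ are mutually strongly connected in $H$; and for a general $x$ on $\mathcal{W}$, the edges $b_ix,\,xb_{i+1}$ together with this cycle make $x$ strongly connected to every $b_j$ in $H$. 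Applying this to $x=u$ and $x=v$ gives that $u$ and $v$ are strongly connected in $H$.

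\textbf{Main obstacle.} The ``if'' direction is routine bookkeeping. The only care needed in the ``only if'' direction is setting up the decomposition of $\mathcal{W}$ so that its $G$-portions are cleanly separated by the $E^+$-edges and every vertex of interest is caught by the claim — in particular the ``wrap-around'' segment through $u$ and $v$, the degenerate cases $k=0$ and $k=1$, and tails $a_i$ that happen to coincide with $u$ or $v$ (these are handled uniformly once one observes that $a_{i+1}$ is just the endpoint of $Q_i$). Once one realizes that it suffices to connect every vertex of $\mathcal{W}$ to the cycle on $\{b_1,\dots,b_k\}$ using only type-(2) edges, nothing else is needed; note that the SCC-cycles of (1) are used \emph{only} in the degenerate case $k=0$.
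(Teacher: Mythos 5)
Your proof is correct; the easy direction is handled identically, and your hard direction is a valid, self-contained argument. It differs from the paper's in one structural choice: the paper picks a \emph{single} $E^+$-edge $xy$ on one of the witness paths and observes that its head $y$ is strongly connected with both $u$ and $v$ in $G\cup E^+$; property (2) then instantly yields the four edges $uy,\,yu,\,vy,\,yv$ in $H$, so $u\to y\to v$ and $v\to y\to u$ in $H$, finishing the proof with a single hub vertex. You instead enumerate \emph{all} $E^+$-edge occurrences on the closed walk, build a directed cycle on the heads $b_1,\dots,b_k$, and attach every vertex of the walk to that cycle via type-(2) edges. This establishes the strictly stronger fact that all vertices on the closed walk are mutually strongly connected in $H$, but for the lemma one hub suffices, so the paper's version is shorter. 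Your version is nonetheless clean, and your edge-case accounting ($k=0,1$, empty $Q_i$, cyclic indexing) is correct.
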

\begin{proof}
  First, each edge $xy$ in $H$
  certifies the existence of an $x\to y$ path in $G\cup E^+$.
  As a result, if $u$ and $v$ are strongly connected in $H$, then
  they are strongly connected in $G\cup E^+$ as well.
  This proves the ``$\impliedby$'' direction.

  Now, suppose $u$ and $v$ are strongly connected in $G\cup E^+$.
  If these vertices are strongly connected in $G$ as well, they lie on a single cycle
  in $H$, so they are strongly connected in $H$ as well.

  So consider the case when $u$ and $v$ are not strongly connected in $G$.
  Then, there exists either a $u\to v$, or a $v\to u$ path through some edge $xy=e\in E^+$ --
  if there existed paths in both directions not using any edge from $E^+$, $u$ and $v$ would
  be already strongly connected in $G$.
  It follows that~$y$ is strongly connected with $u$ and $v$ in $G\cup E^+$.
  Consequently, we have edges $yu$, $yv$, $uy$, and $vy$ in~$H$.
  It follows that there exist paths $u\to v$ and $v\to u$ in $H$, so $u$ and $v$ are indeed strongly connected in $H$.
\end{proof}

Let $E^+$ denote the set of edges inserted in the current phase (and not deleted
by one of the updates in that phase). Let $G^-$ denote the graph
$G$ at the beginning of the phase minus the edges deleted in the current phase.
So we have $G=G^-\cup E^+$ at all times.
We maintain the invariant that $\dtc$ stores the graph $G$ (i.e., $\dtc$ is passed
all the fully dynamic updates), and the set~$R$ in the data structure $\dtc$ equals the set of heads
$\{y:xy\in E^+\}$. By Lemma~\ref{l:dyn-tr-general}, this can be guaranteed in $O(n^{1+\rho}+nF)$ worst-case time
per update. The set $R$ is reset to $\emptyset$ when a new phase starts.

The data structure $\decscc$, on the other hand, is only passed the edge deletions and
thus maintains the graph $G^-$.

Now, we leverage Lemma~\ref{l:scc-cert} as follows.
We use $\decscc$ to construct the first type of edges of the graph $H$ of
Lemma~\ref{l:scc-cert} (applied to $G:=G^-$) in $O(n)$ time -- recall that $\decscc$ stores
the SCCs of $G^-$ explicitly.
Moreover, the edges of $H$ of the second type are constructed by reading
the $O(nF)$-size information stored by $\dtc$.
Finally, to compute the strongly connected components of the current $G$
(which equals $G^-\cup E^+$), we run any classical linear-time strongly connected components algorithm
on the graph $H$. Recall that the graph $H$ has size $O(nF)$.

\begin{lemma}
  The strongly connected components of a digraph can be maintained explicitly
  subject to edge insertions and deletions in $\Ot(n^{1+\rho})=O(n^{1.529})$ amortized
  time per update. The algorithm is Monte Carlo randomized and correct
  with high probability.
\end{lemma}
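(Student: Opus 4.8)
The plan is to combine the machinery already assembled in this section: the phase-based framework, the certificate graph $H$ from Lemma~\ref{l:scc-cert}, the data structure $\decscc$ from Theorem~\ref{t:decscc} (in its Monte Carlo worst-case form of Remark~\ref{r:decscc}), and the row/column-tracking extension $\dtc$ of Theorem~\ref{t:dyn-tr} given by Lemma~\ref{l:dyn-tr-general}. First I would fix the phase length to $F = n^\rho$ and argue correctness: at every update, $G = G^- \cup E^+$ where $\decscc$ maintains the SCCs of $G^-$ explicitly and $\dtc$ maintains the graph $G$ together with the set $R = \{y : xy \in E^+\}$ of heads of the current phase's inserted edges, so that $\dtc$ explicitly stores the reachability information from $R$ to $V$ in $G$. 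This is exactly the data needed to build both edge types of the graph $H$ of Lemma~\ref{l:scc-cert} applied with $G:=G^-$ and $E^+$ the current phase's surviving insertions: type-(1) cycles come from the explicit SCC partition of $G^-$ stored by $\decscc$ (one directed cycle per component, $O(n)$ edges total), and type-(2) edges come from reading the $O(nF)$-size reachability-from-$R$ table of $\dtc$. By Lemma~\ref{l:scc-cert}, running a classical linear-time SCC algorithm on $H$ recovers the SCCs of $G = G^- \cup E^+$ exactly, and we store this partition explicitly.

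Next I would account the running time. Per update, passing the edge change to $\dtc$ and keeping $R$ in sync (including adding the new head to $R$ on an insertion) costs $O(n^{1+\rho} + n|R|) = O(n^{1+\rho} + nF)$ worst-case by Lemma~\ref{l:dyn-tr-general}; passing deletions to $\decscc$ contributes $\Ot(n+m)$ \emph{total} over a phase by Remark~\ref{r:decscc}; and recomputing the SCCs from $H$ costs $O(|V(H)| + |E(H)|) = O(nF)$ per update, since $H$ has $O(nF)$ edges. The only amortized component is the per-phase reinitialization of $\decscc$, which costs $\Ot(n+m) = \Ot(n^2)$ and is spread over the $F$ updates of the phase, giving $\Ot(n^2/F)$ amortized. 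With $F = n^\rho$ the per-update bound becomes $O(n^{1+\rho} + nF) + \Ot(n^2/F) = \Ot(n^{1+\rho} + n^{1+\rho} + n^{2-\rho}) = \Ot(n^{1+\rho})$, using $2-\rho < 1+\rho$ (equivalently $\rho > 1/2$, which holds since $\rho \approx 0.529$). I would also note the standard point that $m \le n^2$ so the $\Ot(n+m)$ terms are $\Ot(n^2)$, and that the worst-case cost of any single deletion within the $\decscc$ run is bounded by its total time, so no single update blows up (the reinitialization being the sole place amortization enters).

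Finally, correctness with high probability: $\dtc$ is Monte Carlo correct with high probability by Lemma~\ref{l:dyn-tr-general}, and the worst-case variant of $\decscc$ in Remark~\ref{r:decscc} is correct with high probability; a union bound over the $\poly(n)$ updates keeps the overall failure probability low, and conditioned on no failure the reported SCC partition is exactly that of $G$. The main obstacle I anticipate is purely bookkeeping rather than conceptual: making sure the set $R$ is maintained under the restricted operations that Lemma~\ref{l:dyn-tr-general} permits (single additions and a reset), which is why the phase structure is needed — $R$ only grows within a phase and is reset to $\emptyset$ at phase boundaries — and checking that $|R| \le F$ at all times so the $n|R|$ term never exceeds $nF$. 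Assembling these pieces yields the claimed $\Ot(n^{1+\rho}) = O(n^{1.529})$ amortized update bound.
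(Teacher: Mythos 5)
Your proposal is correct and follows essentially the same route as the paper: phase-based reconstruction of $\decscc$, maintaining $R$ in $\dtc$ via Lemma~\ref{l:dyn-tr-general}, building $H$ from Lemma~\ref{l:scc-cert}, and running a static linear-time SCC algorithm on $H$. The only cosmetic difference is that you set $F = n^{\rho}$ where the paper sets $F = n^{1/2}$; both yield $\Ot(n^{1+\rho})$ since $\rho \geq 1/2$ always (you should use the non-strict inequality to cover the hypothetical $\omega = 2$ case), so this is not a meaningful deviation.
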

\begin{proof}
  Since the data structure $\decscc$ of Theorem~\ref{t:decscc} is rebuilt in each $F$ insertions,
  the amortized cost of maintaining it is no more than $\Ot(n^2/F)$.
  By Lemma~\ref{l:dyn-tr-general}, the data structure $\dtc$ is updated in $O(n^{1+\rho}+nF)$ worst-case time
  per edge update.
  Both data structures are Monte Carlo randomized and correct with high probability (see Remark~\ref{r:decscc}).
  To optimize the running time, set $F=n^{1/2}$.
  Since $\rho\geq 1/2$, the $n^{1+\rho}$ term dominates the amortized update time.
\end{proof}
Finally, we note that as strongly connected components are defined in a unique way, the above algorithm can be used
against an adaptive adversary, as the uniqueness guarantees that the structure of $\sccs$ does not leak randomness.

\section{Fully dynamic path reporting in general digraphs}\label{s:reach-general}
In this section, we will prove the following theorem.
{\renewcommand\footnote[1]{}\tpathreporting*}

We start with the following black-box extension of decremental strongly connected components
maintenance. Similar extensions have been previously used in, e.g.,~\cite{BernsteinGS20, BernsteinGW20},
albeit with the goal of maintaining approximate single-source shortest paths.

\begin{lemma}\label{l:decscc-extension}
  Let $G$ be a digraph and suppose the strongly connected components $\sccs$ of~$G$
  are maintained explicitly as a mapping $s:V\to \sccs$ subject to edge deletions.

  Then, in additional $O(n+m\log{n})$ total time, we can also achieve the following.
  \begin{enumerate}
    \item Explicitly maintain \emph{topological labels} $\pi:\sccs\to [1,n]$ of the SCCs so that:
      \begin{itemize}
        \item All the intervals $[\pi(S),\pi(S)+|S|-1]$, $S\in \sccs$ are disjoint and form a partition of $[1,n]$.
        \item If for some $S'\in\sccs$, $S\neq S'$ there is a path from $S$ to $S'$ in $G$, then $\pi(S)<\pi(S')$.
        \item If an SCC $S$ splits due to an edge deletion
          into some number of smaller SCCs $S_1,\ldots,S_k$, then the intervals $[\pi(S_i),\pi(S_i)+|S_i|-1]$
          form a partition of $[\pi(S),\pi(S)+|S|-1]$.
      \end{itemize}
    \item Explicitly maintain a \emph{condensation} $G/\sccs$ of $G$, which is a \emph{simple} acyclic digraph $(\mathcal{S},E')$
      such that for each $X,Y\in \sccs$ there is an edge $XY\in E'$ if and only if
          there is an edge $e_{X,Y}=xy\in E$ such that $x\in X$ and $y\in Y$.
          Moreover, one such edge $e_{X,Y}\in E$ is maintained as well.
  \end{enumerate}
  The algorithm is deterministic and additionally, the maintained information depends only on the
  initial graph $G$ and the sequence of updates.
\end{lemma}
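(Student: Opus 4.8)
The plan is to maintain both the topological labels $\pi$ and the condensation $G/\sccs$ incrementally, reacting only to the SCC split events reported by the underlying decremental SCC data structure. I will think of the decremental process as producing a laminar family: the SCCs at time $0$ are the coarsest, and each deletion may cause some SCC $S$ to shatter into $S_1,\ldots,S_k$; these refinements over all time form a forest $\mathcal{F}$ whose leaves are the final singleton-or-small SCCs and whose root-level nodes are the initial SCCs. The total size of this forest is $O(n)$ since every split strictly increases the number of SCCs and there can be at most $n-1$ splits in total.

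For part~(1): at initialization I compute a topological order of the initial condensation in $O(n+m)$ time and assign to the $i$-th initial SCC $S$ (in that order) the interval $[\ell_S,\ell_S+|S|-1]$, where $\ell_S$ is one plus the total size of all earlier SCCs; set $\pi(S)=\ell_S$. This makes the intervals a partition of $[1,n]$ and respects reachability in $G$. When $S$ splits into $S_1,\ldots,S_k$, I need to lay these out \emph{inside} $S$'s old interval in an order consistent with the partial order that $G[S]$ (restricted to the remaining edges) induces on $\{S_1,\ldots,S_k\}$. To do this I run a topological sort of the sub-condensation on the vertex set $S$ — but doing this naively on each split is too slow. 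The standard fix, and the source of the $m\log n$ term, is to charge the work of a split to the edges \emph{inside} the smaller resulting parts: when $S$ splits, I re-examine only the edges incident to all-but-the-largest child, which by a standard "smaller half" argument means each edge is re-scanned $O(\log n)$ times over the whole decremental sequence. Within each split I produce a local topological order of the children in time proportional to those re-scanned edges (plus $O(k)$), assign them consecutive sub-intervals of $[\pi(S),\pi(S)+|S|-1]$ in that order, and update $\pi$ for each child (constant work per child since $\pi$ is stored per current SCC via the mapping $s$). The three bulleted invariants follow: disjointness and partition-of-$[1,n]$ by induction on splits (each child interval is carved out of the parent's), monotonicity along paths because a new path from $S_i$ to $S_j$ in $G$ either already existed at the parent level — handled by the initial ordering — or lies within the parent $S$ and is respected by the local topological sort, and the partition-of-parent-interval clause by construction.

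For part~(2): maintaining the simple condensation $G/\sccs=(\sccs,E')$ together with a representative edge $e_{X,Y}$ for each $XY\in E'$ is done by bucketing. For each ordered pair of current SCCs with at least one edge between them I keep a count and one witness edge; the delicate point is that when $S$ splits, all edges formerly internal to $S$ must be redistributed among the children (some become internal to a child, some become new condensation edges between children or between a child and an external SCC). I again scan exactly the edges touched by the split — the same edges charged above — update the per-pair buckets, and emit or retract condensation edges as counts cross zero. Each such edge is handled $O(\log n)$ times total, and each handling is $O(1)$ using hash maps (or $O(\log n)$ with balanced BSTs, absorbed into the stated bound), giving the $O(m\log n)$ figure; the additive $O(n)$ covers touching each current SCC once per split for representative bookkeeping. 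Acyclicity and simplicity of $G/\sccs$ are immediate since SCCs are maximal, and the "depends only on $G$ and the update sequence" claim holds because every choice I make — the initial topological order, the local orders at splits (say, lexicographically-least among valid orders), the witness edges (say, minimum-label) — is a deterministic function of the decremental data structure's output, which is itself deterministic in its input.

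The main obstacle is the split-handling cost: a single split can, in the worst case, require touching $\Theta(n)$ edges, and there can be $\Theta(n)$ splits, so a naive bound is $\Theta(n^2)$. The whole argument hinges on the "scan only the smaller parts" amortization that bounds the number of times any fixed edge is re-examined by $O(\log n)$; getting this charging right — in particular making sure that \emph{both} the topological-sort recomputation within a split and the condensation-bucket updates are confined to those re-scanned edges, and that identifying the largest child and the cross edges can be done within that budget — is the technically careful part, though it is standard (cf.\ the references cited just before the lemma).
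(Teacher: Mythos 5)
Your proposal is correct and follows essentially the same approach as the paper's proof: assign each SCC an interval of $[1,n]$, carve children's sub-intervals out of the parent's using a local topological sort at each split, scan only the edges incident to the non-largest children, and use the smaller-half argument to amortize to $O(m\log n)$; the condensation is maintained by the same edge-redistribution scan (the paper keeps a non-simple multigraph internally while you keep counted buckets, but these are equivalent). The one place you are slightly terse is justifying that the local topological sort of the children suffices for global path-monotonicity of $\pi$: the crucial point, which you gesture at via the laminar structure, is that in a decremental graph any current path between two vertices that once shared an SCC $S$ must stay entirely inside $S$ (otherwise the detour vertex would itself have been in $S$), so relative order among descendants of a common ancestor is fully determined at the split and never needs revisiting.
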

\begin{proof}
  Internally we maintain a condensation $G/\sccs$ that is not simple, i.e., $G/\sccs$ contains a separate edge
  $XY\in E(G/\sccs)$ for each $xy\in E(G)$ such that $x\in X$ and $y\in Y$, and $X\neq Y$ (i.e., $G/\sccs$ contains
  no self-loops).
  The condensation presented to the user, however, identifies all the parallel edges of this kind
  by providing any representative edge $e_{X,Y}$ from each set of parallel edges.

  For each $S\in\sccs$ we maintain a number $h(S)$ equal to the number of vertices in~$S$ in~$G$.
  These numbers can clearly be initialized in $O(n+m)$ time when the process starts.
  Consider some edge deletion of $e$ in~$G$. Let $\sccs'$ denote the strongly connected components
  in $G-e$.
  Suppose some $S\in \sccs$ splits into $S_1,\ldots,S_k\in \sccs'$ as a result of deleting $e$.
  Then, for all~$S_i$, except of maybe one, say wlog. $S_1$, the value $s(v)$ changes for all $v\in S_i$,
  i.e., the algorithm maintaining the mapping~$s$ spends at least $T\geq \sum_{i=2}^k |S_i|$ time on updating~$s$.
  We update the counters $h(S_i)$ for $i>1$ by iterating through all the vertices in these components.
  We then set $h(S_1)=h(S)-\sum_{i=2}^k h(S_i)$. This takes $O(T)$ time, so it can be charged
  to the cost of maintaining the mapping $s$.
  Let us now relabel $S_1,\ldots, S_k$ so that $h(S_1)\geq h(S_2),\ldots,h(S_k)$,
  i.e., we have $h(S_i)\leq h(S)/2$ for all $i=2,\ldots,k$.

  We now show how to update the (non-simple) condensation $G/\sccs$. The vertex $S$ is split into vertices $S_1,\ldots,S_k$
  in $(G-e)/\sccs'$.
  We initially make all the edges $SY$, where $Y\neq S$, previously incident to $S$ in $G/\sccs$,
  incident to $S_1$ in $(G-e)/\sccs'$, by simply relabeling the vertex $S$ to $S_1$.
  Next, we iterate through all edges $xy=e'\in E(G-e)$ such that $x\in \bigcup_{i=2}^k S_i$ or $y\in \bigcup_{i=2}^k S_i$.
  There are a few cases to consider. Suppose $x\in S_k$. If $y\notin S$ and $y\in Y\in \sccs\cap \sccs'$, then there was already
  an edge $SY$ corresponding to $xy$ in $G/\sccs$.
  We disconnect that edge from $S_1$ and add it to $(G-e)/\sccs'$ as an edge $S_kY$.
  If, on the other hand, $y\in S$, then there was no edge corresponding to $xy$ in $G/\sccs$.
  If additionally $y\in S_l$, where $k\neq l$, then we add an edge $S_kS_l$ corresponding
  to $xy$ to the graph $(G-e)/\sccs'$.
  We proceed similarly with the edges $xy$ with $y\in \bigcup_{i=2}^k S_i$.
  It is easy to verify that such a procedure updates the non-simple condensation correctly
  in $O\left(\sum_{i=2}^k\sum_{v\in S_i}\deg(v)\right)$ time.

  To update the topological labels, we simply compute the topological order
  of the induced subgraph $((G-e)/\sccs')[\{S_1,\ldots,S_k\}]$
  using any classical deterministic linear time algorithm.
  The induced subgraph has size (and can be constructed in) $O\left(\sum_{i=2}^k\sum_{v\in S_i}\deg(v)\right)$ time
  since each of its edges has at least one endpoint in $S_2,\ldots,S_k$.
  Finally, after relabeling $S_1,\ldots,S_k$ so that they are sorted
  topologically within $((G-e)/\sccs')[\{S_1,\ldots,S_k\}]$,
  we set $\pi(S_1):=\pi(S)$, and $\pi(S_i):=\pi(S_{i-1})+h(S_{i-1})$ for all $i=2,\ldots,k$.

  Finally, to bound the total update time spent on maintaining the condensation and the topological labels,
  note that whenever we pay $\deg(v)$ time for some
  work involving a vertex $v\in V$ (of~$G$) when processing a deletion of $e$, the size of a component that contains $v$ decreases by a factor of at least $2$
  after removing $e$.
  Hence, in the total update time, for each $v\in V$ we pay $\deg(v)$ at most $O(\log{n})$ times.
  Through all $v\in V$, the total update time is $O(m\log{n})$.
\end{proof}

The algorithm will again operate in phases spanning $F$ insertions forming a set $E^+$, where $F$ is to be chosen
later. Each phase will involve initializing a decremental data structure $\decscc$
as stated in Theorem~\ref{t:path-reporting}, along with the auxiliary data structures
from Lemma~\ref{l:decscc-extension}.
So, $\decscc$ will actually maintain a graph $G^-$ defined
as the graph $G_0$ from the beginning of the phase minus the edges of~$G_0$
deleted in the current phase.
Denote by $\sccs$ the strongly connected components of $G^-$.

Again, we will use a data structure $\dtc$ of Lemma~\ref{l:dyn-tr-general} with the set
$R$ storing the heads of the edges of $E^+$, but in
a slightly different manner.
Namely, when the phase proceeds, $\dtc$ will only accept edge deletions.
As a result, throughout the phase, $\dtc$ will also store the graph $G^-$.
However, when the phase ends, all the edges $E^+$ will be added to~$\dtc$ at once.

Additionally, we use a data structure~$\mathcal{Q}$ of Theorem~\ref{t:dyn-tr} (with $\delta=\rho$) in a similar
way as $\dtc$ in order to enable $O(n^\rho)$-time reachability queries on $G^-$ throughout the phase.

Given the source $s$ and target $t$, the query algorithm will first identify a
minimal subset $E^+_{s,t}$ of edges of $E^+$ such that
a path $P=s\to t$ exists in $G^-\cup E^+_{s,t}$,
along with the order $e_1,\ldots,e_k$ in which the edges $E^+_{s,t}$ appear on $P$.
Then, $P=P_0e_1P_1e_2P_2\ldots e_kP_k$ and each of the paths $P_0,\ldots,P_k$
is contained in $G^-$.
We will show that using $\decscc$ and $\dtc$, \emph{all} of these paths can
be found using $O(n)$ queries to $\mathcal{Q}$ and $O(n)$ path reporting queries
to $\decscc$ that will report paths of total length $O(n)$.

More specifically, we will first partition the graph $G^-$ so that
each of the paths $P_0,\ldots,P_k$ can be searched for in a separate
disjoint region of the graph $G$.
The following lemma describes such a partition in an even more general setting
that will prove useful when reporting a reachability tree instead of a path.

\begin{lemma}\label{l:query-partition}
  Let $E^+=\{u_1v_1,\ldots,u_kv_k\}$, where $k\leq F$. Let $s\in V$ and put $v_0:=s$.
  Let $V_s$ denote the set of vertices reachable from $s$ in $G^-\cup E^+$.

  In $O(n^{1+\rho}+nF)$ time one can partition $V_s$ into possibly empty and pairwise disjoint
  subsets $V_{s,0},\ldots,V_{s,k}$ such that for any $i$ and $z\in V_{s,i}$,
  \emph{every} $v_i\to z$ path in $G^-$ is fully contained in $G^-[V_{s,i}]$.

  Moreover, let $J=\{j:V_{s,j}\neq\emptyset\}$. For every $i\in J\setminus\{0\}$, there is a \emph{parent} $p(i)\in J\setminus\{i\}$
  such that $u_i\in V_{s,p(i)}$ and
  the edges $\{p(j)j:j\in J\setminus\{0\}\}$ form an out-tree on $J$ with root at $0$.
\end{lemma}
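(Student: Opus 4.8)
The plan is to reduce everything to reachability queries on $G^- \cup E^+$ that can be answered efficiently, and then to carve out the partition by a careful scan. First I would use the data structure $\dtc$ of \cref{l:dyn-tr-general} with $R = \{s\} \cup \{v_1,\dots,v_k\}$ maintaining the graph $G^- \cup E^+$; note that $|R| \le F+1$, so after adding the edges of $E^+$ back in (which happens once per phase) the reachability matrix rows for all of $v_0,\dots,v_k$ are available explicitly, and each update costs $O(n^{1+\rho}+nF)$ as claimed. In particular, in $O(n(F+1)) = O(nF)$ time we can read off, for each $z \in V$ and each $i \in \{0,\dots,k\}$, whether $v_i$ reaches $z$ in $G^- \cup E^+$; call this predicate $\operatorname{reach}(i,z)$. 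We also have, via $\mathcal{Q}$ (or $\dtc$ restricted to $G^-$), $O(n^\rho)$-time reachability queries \emph{within} $G^-$.

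The idea for the partition is as follows. For $z \in V_s$, let $d(z)$ be the minimum over all $v_i \to z$ paths in $G^-\cup E^+$ of the number of edges of $E^+$ used; equivalently, the ``level'' of $z$ in a BFS-like layering where crossing an $E^+$-edge $u_j v_j$ means jumping from $u_j$ to the source $v_j$. Then I would assign $z$ to $V_{s,i}$ where $i$ is chosen so that $z$ is reachable from $v_i$ in $G^-$ \emph{alone} and $v_i$ is the last $E^+$-head before $z$ on a minimum-$E^+$-usage path; to make this deterministic and consistent, pick among the valid indices the one whose ``arrival'' is cheapest, breaking ties by, say, smallest index. Concretely: build an auxiliary DAG $\mathcal{H}$ on the $k+1$ ``anchor'' vertices $v_0,\dots,v_k$ where there is an arc $v_i \to v_j$ ($i \ne j$) iff $v_i$ reaches $u_j$ in $G^-$, i.e.\ $\operatorname{reach}_{G^-}(v_i, u_j)$ — this needs $O(F^2)$ queries to $\mathcal{Q}$, costing $O(F^2 n^\rho)$. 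Run a shortest-paths / topological computation on $\mathcal{H}$ from $v_0$ to get, for each $j$, its parent $p(j)$ (the predecessor on the chosen min-hop path in $\mathcal{H}$), giving the out-tree on $J \subseteq \{0,\dots,k\}$ rooted at $0$ with $u_i \in V_{s,p(i)}$ once we argue $V_{s,p(i)}$ is where $u_i$ landed. Finally, for each $z \in V_s$, set $i(z)$ to be the anchor $v_j$ with smallest $\mathcal{H}$-distance from $v_0$ among those with $\operatorname{reach}_{G^-}(v_j, z)$ (ties by index), and put $z \in V_{s,i(z)}$; this is $O(nF)$ work given the precomputed $\operatorname{reach}_{G^-}$ table, which we also read from $\dtc$ restricted to $G^-$ with $R$-rows for the anchors.

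The correctness claim ``every $v_i\to z$ path in $G^-$ lies in $G^-[V_{s,i}]$'' is what needs the real argument. Suppose $z \in V_{s,i}$ and $Q$ is a $v_i \to z$ path in $G^-$; take any intermediate vertex $y$ on $Q$. Then $v_i$ reaches $y$ in $G^-$ (prefix of $Q$) and $y$ reaches $z$ in $G^-$ (suffix of $Q$). I must show $y$ gets assigned to $V_{s,i}$ as well, i.e.\ that the cheapest anchor reaching $y$ in $G^-$ is exactly $v_i$ with the same tie-break. The key monotonicity point: the set of anchors reaching $y$ in $G^-$ is a superset of... no — rather, any anchor $v_j$ reaching $y$ in $G^-$ also reaches $z$ in $G^-$ (compose with the suffix of $Q$), so the candidate set for $y$ is contained in that for $z$; hence the minimum-distance tie-broken choice for $y$ is at least as ``cheap'' as $v_i$. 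Conversely $v_i$ itself reaches $y$, so $v_i$ is a candidate for $y$, forcing $i(y) = i$ by the minimality/tie-break being computed from the same $\mathcal{H}$-distances. That forces all of $Q$ into $V_{s,i}$. The out-tree property follows because $p(i)$ is the $\mathcal{H}$-predecessor of $v_i$, which by construction reaches $u_i$ in $G^-$, and one checks $u_i$'s assigned anchor is precisely $p(i)$ — again by the same distance-minimality argument applied to $u_i$ (its candidate anchors, composed with the edge $u_i v_i$, are exactly the $\mathcal{H}$-in-neighbors contributing to $d(v_i)$).

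The main obstacle I expect is pinning down the tie-breaking / consistency so that the assignment is simultaneously (a) deterministic and oblivious to the internal randomness of $\dtc$ and $\mathcal{Q}$ (which \cref{t:dyn-tr,l:dyn-tr-general} guarantee only return the true reachability relation w.h.p., so the derived quantities are genuine functions of $G^-, E^+, s$), and (b) coherent along every $G^-$-path so the ``fully contained'' property holds — the delicate case being when several anchors tie in $\mathcal{H}$-distance, where one must verify the index tie-break is preserved under taking subpaths. The running time is then $O(n^{1+\rho} + nF)$ for the $\dtc$/$\mathcal{Q}$ setup and row reads plus $O(F^2 n^\rho) = O(nF)$ for building and processing $\mathcal{H}$ (using $F \le n^{1-\rho}$, or simply absorbing it since $F$ will be chosen well below $n$), matching the stated bound.
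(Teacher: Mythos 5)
Your construction is a genuinely different route from the paper's. The paper builds the partition \emph{incrementally}: it starts with $V_{s,0}$ = vertices reachable from $s$ in $G^-$, then repeatedly picks an unprocessed $i$ with $u_i$ already assigned and $v_i$ not, and assigns all not-yet-claimed vertices reachable from $v_i$ in $G^-$ to $V_{s,i}$; correctness falls out of a "first anchor to claim a vertex wins" argument. You instead do a batch assignment: you put a global potential on the anchors (BFS distance in the auxiliary anchor graph $\mathcal{H}$, tie-broken by index), let $C(z)$ be the set of anchors reaching $z$ in $G^-$, and assign $z$ to $\arg\min_{\mathrm{lex}} C(z)$. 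Your key observation — that if $v_i \to y \to z$ in $G^-$ then $C(y)\subseteq C(z)$ and $i\in C(y)$, so the lexicographic minimum is inherited and hence the partition is closed under taking $G^-$-subpaths, including in the tie case you flagged as delicate — is correct, and cleaner to state than the paper's contradiction argument. One nit: "$p(i)$ = predecessor on the chosen min-hop path in $\mathcal{H}$" is not the right definition (shortest paths are not unique), you should just set $p(i):=i(u_i)$; one then checks $p(i)\neq i$ because the $\mathcal{H}$-in-neighbor of $v_i$ on any shortest path gives an anchor in $C(u_i)$ at strictly smaller $\mathcal{H}$-distance, so $d_\mathcal{H}(v_{p(i)})<d_\mathcal{H}(v_i)$ and the parent pointers form an out-tree rooted at $0$.

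There is, however, a real error in your running-time accounting. You claim $O(F^2 n^\rho)=O(nF)$ ``using $F\le n^{1-\rho}$'', but that inequality fails in exactly the regime where the lemma is used: the phase length is set to $F=\sqrt{T(n,m)/n}$, which for the adaptive-adversary instantiation ($T(n,m)=mn^{2/3+o(1)}$, dense $m$) is $F=n^{5/6+o(1)}\gg n^{1-\rho}\approx n^{0.471}$, and then $F^2 n^\rho\approx n^{2.2}$ exceeds the budget $n^{1+\rho}+nF\approx n^{1.83}$. The fix is the one you half-mention: do not use $\mathcal{Q}$-queries for the $F^2$ anchor pairs; instead keep (as the paper does) a $\dtc$ instance of \cref{l:dyn-tr-general} on $G^-$ with $R$ equal to the heads $\{v_1,\dots,v_k\}$ (plus $s$, added at query time for $O(n^{1+\rho})$), whose explicitly maintained $R\times V$ reachability rows let you fill all entries $\operatorname{reach}_{G^-}(v_i,u_j)$ in $O(F^2)=O(nF)$ time and compute every $i(z)$ in $O(nF)$ total. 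With that change the bound $O(n^{1+\rho}+nF)$ holds.
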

\begin{proof}
  We build the sets using the following procedure which gradually extends
  the set $V_s$ of vertices reachable from $s$ in $G^-\cup E^+$
  and inserts new elements to $J$.

  We start with $J=\{0\}$. We set $V_{s,0}$ to be the vertices
  that $s$ can reach in $G^-$. Note that those can be computed using $O(n)$
  queries to $\mathcal{Q}$.
  We initialize $V_s$ to $V_{s,0}$.

  Next, while for some $u_iv_i\in E^+$ we have $u_i\in V_s$ and $v_i\notin V_s$,
  we set $V_{s,i}$ to be the vertices reachable from $v_i$ in $G^-$
  minus those already in $V_s$.
  Moreover, we add $i$ to the set $J$.
  Note that since $v_i\in R$ in the data structure~$\dtc$, $V_{s,i}$ can
  be computed in $O(n)$ time by simply reading the vertices reachable from $v_i$
  from $\dtc$.
  Finally, if $u_i\in V_{s,j}$, then we set the parent $p(i)$ of $i$ to $j$
  and add $V_{s,i}$ to $V_s$.
  Observe that as we have $v_i\in V_s$ afterwards, this step is performed
  at most $|E^+|\leq F$ times.
  As a result, all the steps take $O(nF)$ time in total.

  Since the parent $p(i)$ of each $i$ added to $J$ is set to an element of $J$
  that was added to $J$ before~$i$, the edges $p(i)i$ indeed form an out-tree $T$
  rooted at the first element added to $J$, i.e., $0$, as desired.

  By the construction, it is clear that the sets $V_{s,i}$, $i\in J$, are pairwise
  disjoint. Let us now prove that their union $V_s$ indeed contains precisely
  the vertices reachable from $s$ in $G^-\cup E^+$.
  To prove that each vertex ever put in some $V_{s,i}$ is indeed reachable from $s$
  one can proceed by induction on the depth of $i$ in the tree $T$.
  For $i=0$ this is clear since $V_{s,0}$ contains precisely the vertices
  reachable from $s$ in $G^-\subseteq G^-\cup E^+$.
  If $i\neq 0$, then by the inductive hypothesis, every vertex in $V_{s,p(i)}$,
  in particular $u_{i}$, is reachable from $s$ in $G\cup E^+$.
  So $v_i$ is reachable as well since $u_iv_i\in E^+$.
  But every vertex in $V_{s,i}$ is reachable from $v_i$ in $G^-$,
  so it is also reachable from $s$ in $G^-\cup E^+$.

  Now suppose that $s$ can reach some $t\in V$ in $G^-\cup E^+$.
  Let $h$ be the minimum possible number of edges from $E^+$ on an
  $s\to t$ path in $G^-\cup E^+$.
  We prove that $t\in V_s$ by induction on $h$.
  If $h=0$, this means that $t$ is reachable from $s$ in $G^-$,
  and as a result we have $t\in V_{s,0}$.
  If $h\geq 1$, then let $P$ be some $s\to t$ path in $G^-\cup E^+$
  with exactly $h$ edges from $E^+$ and suppose $e=u_iv_i$ is the edge
  of $E^+$ that appears on $P$ last.
  Let us first observe that $v_i$ will be put, at some point, to
  some set $V_{s,j}$, possibly with $i=j$.
  Indeed, there exists an $s\to u_i$ path in $G^-\cup E^+$
  with less than~$h$ edges in~$E^+$.
  So, by the inductive hypothesis, we have $u_i\in V_s$.
  As a result, the main loop of the algorithm constructing the sets $V_{s,\cdot}$ will ensure that $v_i\in V_{s,j}$
  for some $j$ as well.
  But there is a path $v_i\to t$ in $G^-$, so whenever $v_i$ is included in some $V_{s,j}$, the construction ensures that all
  vertices reachable from $v_i$ (in particular $t$) in $G^-$ are included in $V_s$ as well.

  It remains to prove that if $z\in V_{s,i}$, then every $v_i\to z$ path in $G^-$
  contains vertices of $V_{s,i}$ exclusively.
  Suppose this is not the case and there exists a path $v_i\to w\to z$, where
  $w\in V_{s,j}$ for $j\neq i$.
  Note that $j$ could not be added to $J$ later than $i$ because when $V_{s,i}$
  is built, all vertices reachable from $v_i$ (in particular $w$) that were
  not in $V_s$ before are put into $V_{s,i}$.
  But if $j$ was added to $J$ earlier than $i$, then $w$ is reachable
  from $v_j$ in $G^-$ and so is $z$. So $z$ should be put into $V_s$
  at the time of insertion of $j$ into $J$ (or earlier), a contradiction.
\end{proof}

We will also need the following generalization of the path-finding
algorithm of Section~\ref{s:dag-point-to-point} whose performance was analyzed in Lemma~\ref{l:top-path}.
\begin{lemma}\label{l:path-in-condensation}
  Let $s,t\in V$ and let $W\subseteq V$. Suppose every $s\to t$ path in $G^-$
  lies within the subgraph $G^-[W]$.
  Let $\sccs_W=\{S\in \sccs: S\subseteq W\}$, where $\sccs$ are the strongly connected components of $G^-$.
  Let $S_s,S_t$ be the SCCs of $G^-$ containing $s$ and $t$ respectively.
  Given the relative topological order of $\sccs_W$ within the condensation $G^-/\sccs$,
  one can find an $S_s\to S_t$ path in $G^-/\sccs$ in $O(|W|\cdot n^\rho)$ time.
\end{lemma}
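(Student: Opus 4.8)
Lemma \ref{l:path-in-condensation} — the plan.

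The plan is to generalize the recursive path-finding procedure of Section~\ref{s:dag-point-to-point} (whose running time is analyzed in Lemma~\ref{l:top-path}) to the DAG $G^-/\sccs$, but to scan only within $\sccs_W$, so that the running time ends up governed by $|W|$ rather than by~$n$. First I would record the following structural fact: if an SCC $Y$ of $G^-$ is reachable from $S_s$ and reaches $S_t$ in $G^-/\sccs$, then $Y\in\sccs_W$. Indeed, lifting a path $S_s\rightsquigarrow Y\rightsquigarrow S_t$ in the condensation and using that $Y$ is strongly connected, every $y\in Y$ lies on some $s\to t$ path in $G^-$, hence $y\in W$ by hypothesis, so $Y\subseteq W$. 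In particular, assuming $s$ reaches $t$ at all (which a single query to $\mathcal{Q}$ certifies, and otherwise there is nothing to report), both $S_s$ and $S_t$ belong to $\sccs_W$, and $|\sccs_W|\le|W|$.

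\textbf{Algorithm.} Enumerate $\sccs_W$ as $Z_1,\ldots,Z_m$ in increasing order of $\pi$ --- this is exactly the relative topological order of $\sccs_W$ that is assumed given --- and let $S_s=Z_a$ and $S_t=Z_b$, where necessarily $a\le b$. Starting from $i:=a$, I would repeat the following until $i=b$: scan $j=i+1,i+2,\ldots$ and, picking representative vertices of the SCCs and querying the data structure $\mathcal{Q}$ (each query answered in $O(n^{\rho})$ time), test whether $Z_j$ reaches $S_t$ and whether $Z_i$ reaches $Z_j$ in $G^-$; at the first index $j$ for which both tests succeed, retrieve the edge $Z_iZ_j$ (with its representative $e_{Z_i,Z_j}$) from the explicitly maintained condensation, append it to the output, and set $i:=j$. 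The concatenation of the recorded condensation edges is the returned $S_s\to S_t$ path.

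\textbf{Correctness.} I would argue by induction that $Z_i$ reaches $S_t$ at the start of every iteration: this holds for $i=a$, and is preserved because every chosen $Z_j$ reaches $S_t$. Since $Z_i\ne S_t$, the SCC $Z_i$ has an out-neighbour on a $Z_i\to S_t$ path; that out-neighbour is reachable from $S_s$ and reaches $S_t$, hence lies in $\sccs_W$, so the scan does find a valid $j$. What remains is to see that the first valid $Z_j$ is an out-neighbour of $Z_i$, i.e.\ $Z_iZ_j\in E(G^-/\sccs)$: any intermediate SCC $Z'$ on a $Z_i\rightsquigarrow Z_j$ path is reachable from $S_s$ and reaches $S_t$ (through $Z_j$), hence $Z'\in\sccs_W$ with $\pi(Z_i)<\pi(Z')<\pi(Z_j)$, while $Z_i$ reaches $Z'$ and $Z'$ reaches $S_t$ --- contradicting the minimality of $j$. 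As $i$ strictly increases and is bounded by $b$, the loop terminates, and the recorded edges form a walk --- hence, $G^-/\sccs$ being acyclic, a path --- from $S_s$ to $S_t$.

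\textbf{Running time.} Each scan step performs $O(1)$ reachability queries to $\mathcal{Q}$, costing $O(n^{\rho})$, plus $O(1)$ lookups in the condensation. The iteration starting at index $i$ performs exactly $j^{*}-i$ scan steps, where $Z_{j^{*}}$ is the next chosen SCC; summing over the run these telescope to $b-a\le m=|\sccs_W|\le|W|$, for a total of $O(|W|\cdot n^{\rho})$. The one point I expect to require care is precisely what dictates scanning $\sccs_W$ in $\pi$-order rather than walking the adjacency list of $G^-/\sccs$: an SCC can have $\Theta(n)$ out-neighbours lying outside $\sccs_W$, so the naive adjacency-list scan could squander $\Theta(n)$ queries in a single step, whereas scanning $\sccs_W$ together with the structural fact above (every skipped SCC genuinely fails to reach $S_t$) keeps the cost at $O(|W|\cdot n^{\rho})$.
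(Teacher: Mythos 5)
Your proof is correct and follows the same overall strategy as the paper — reduce to the DAG routine of Section~\ref{s:dag-point-to-point}, scan $\sccs_W$ in $\pi$-order, and reuse the telescoping time analysis of Lemma~\ref{l:top-path} so that the per-step cost of $O(n^\rho)$ accumulates to $O(|\sccs_W|\cdot n^\rho)\le O(|W|\cdot n^\rho)$. The one place you diverge is the per-candidate filter. The paper tests, for each $Z_j$, whether the \emph{edge} $Z_iZ_j$ is present in the explicitly maintained condensation (an $O(1)$ lookup, via Lemma~\ref{l:decscc-extension}) and whether $Z_j$ reaches $S_t$ (one $\mathcal{Q}$-query). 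You instead issue two reachability queries to $\mathcal{Q}$ (does $Z_i$ reach $Z_j$? does $Z_j$ reach $S_t$?) and then argue — correctly, via the intermediate-SCC contradiction and the structural fact that any SCC reachable from $S_s$ and reaching $S_t$ lies in $\sccs_W$ — that the topologically first $Z_j$ passing both tests must actually be a direct out-neighbour of $Z_i$. Both variants are $O(n^\rho)$ per scan step, so the final bound is identical; yours trades the constant-time edge lookup for one extra $\mathcal{Q}$-query per step and a short additional correctness argument. A nice side benefit of your writeup is that you make explicit the structural fact (and its consequence $S_s,S_t\in\sccs_W$), which the paper's proof leaves implicit in the hypothesis that every $s\to t$ path lies in $G^-[W]$.
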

\begin{proof}
  Recall the recursive algorithm of Section~\ref{s:dag-point-to-point} that worked for acyclic graphs.
  Given a source~$s$
  and a target $t$, the algorithm searched for the topologically earliest vertex $x$ such that (1) there is an edge $sx$,
  and (2) there exists an $x\to t$ path.

  We apply the same algorithm to the acyclic graph $G^-[W]/\sccs_W$.
  We first find the components $G^-[W]/\sccs_W$ in $O(|W|)$ time using
  the explicitly stored mapping from vertices to SCCs,
  and their relative topological order using the labels $\pi$ from Lemma~\ref{l:decscc-extension}.
  Since the condensation $G^-/\sccs$ is explicitly maintained (Lemma~\ref{l:decscc-extension}) along with $\sccs$,
  for any $X\in \sccs$, we can query for the existence of an edge $SX$ in $G^-[W]/\sccs_W$ (as required by~(1)) in constant time\footnote{Technically speaking, such an efficient access requires the edges of the condensation maintained by Lemma~\ref{l:decscc-extension} also be stored in a hash table. Doing this introduces no additional asymptotic overhead.}.
  Note that a path between two vertices $X,Y$ of $G^-/\sccs$ exists
  if and only if a path between arbitrary $x\in X$, $y\in Y$
  exists in $G^-$. As a result, we can handle (2) using
  a single query to the data structure $\mathcal{Q}$.

  Recall that the algorithm of Section~\ref{s:dag-point-to-point} worked in time proportional to
  the number of vertices in the graph times $O(n^{\rho})$.
  So in our case, the algorithm takes $O(|\sccs_W|\cdot n^{\rho})=O(|W|\cdot n^{\rho})$ time to complete.
\end{proof}

Let us now describe the algorithm reporting an $s\to t$ path more formally.
In the first step, we compute a partition $V_{s,0},\ldots,V_{s,k}$  of Lemma~\ref{l:query-partition},
along with the set~$J$ and the tree structure on it
in $O(n^{1+\rho}+nF)$ time.
If $t\notin \bigcup_{i=0}^k V_{s,i}$, $t$ is not reachable from $s$ in $G$, so we are done.
Otherwise, suppose $t\in V_{s,j}$, and let $0=y_0,\ldots,y_\ell=j$ be the ancestors
of $j$ (including $j$) in the tree $J$, furthest to nearest.
Let the edges $u_iv_i\in E^+$, $i=1,\ldots,k$, be defined as in Lemma~\ref{l:query-partition},
By Lemma~\ref{l:query-partition}, there exists an $s\to t$ path
$P=P_0e_{y_1}\cdot P_1e_{y_2}\cdot \ldots \cdot P_{\ell-1}e_{y_\ell}\cdot P_\ell$ in $G^-\cup E^+=G$,
where $u_{y_i}v_{y_i}=e_{y_i}$, $v_{y_0}:=s$ and $u_{y_{\ell+1}}:=t$,
such that every $v_{y_{i}}\to u_{y_{i+1}}$ path in $G^-$ (for $i=0,\ldots,\ell$), in particular $P_{i}$, is fully contained
in $G^-[V_{s,y_{i}}]$.

The above reduces our problem to computing, for all $i=0,\ldots,\ell$,
some $p\to q$ path in $G^-[V_{s,y_i}]$, where $p:=v_{y_{i}}$ and $q:=u_{y_{i+1}}$.
We accomplish this for each $i$ separately.
Let $X,Y\in\sccs$ be such that $p\in X$ and $q\in Y$.
Since every $p\to q$ path in $G^-$ lies in
$G^-[V_{s,y_{i}}]$, by Lemma~\ref{l:path-in-condensation} we can find an $X\to Y$ path $P'$
in $G^-/\sccs$ in $O(|V_{s,y_{i}}|\cdot n^\rho)$ time.
We now discuss how such a path can be lifted to an actual $p\to q$
path in $G$.

From the auxiliary data structures of Lemma~\ref{l:decscc-extension}
we can obtain, for each edge $e'$ of $P'$ connecting some $A,B\in \sccs$,
an edge $ab=e\in E(G^-)$ such that $a\in A$ and $b\in B$.
As a result, from $P'$ we obtain a sequence of edges $a_lb_l$, $l=1,\ldots,g$, such that
for each $l=0,\ldots,g+1$, (1) $b_{l-1}$ and $a_l$ lie in the same strongly connected
component of $G^-$, (2) $a_l$ and $b_l$ lie in distinct strongly connected
components of $G^-$ (where $b_0:=p$ and $a_{g+1}:=q$).
In order to convert this sequence into a $p\to q$ path in~$G^-$,
we query the data structure $\decscc$ for the respective $b_{l-1}\to a_l$ paths.
Note that all the requested paths lie within single but distinct strongly connected components of $G^-$.
As the paths returned by $\decscc$ are simple and pairwise disjoint,
and all are fully contained within $G^-[V_{s,y_{i}}]$,
the total time needed to compute them all will be $O(|V_{s,y_i}|\cdot n^\rho)$.

We conclude that the total time needed to report an $s\to t$ path
is $O\left(n^{\rho}\cdot \sum_{i=0}^\ell |V_{s,y_i}|\right)$, which,
since the sets $V_{s,i}$ are pairwise disjoint,
is $O(n^{1+\rho})$.

We stress that the above algorithm is deterministic, modulo the operation
of data structures $\dtc$ and $\mathcal{Q}$, whose output is also
unique with high probability.
As a result, we can report paths against an adaptive adversary
if and only if the paths within strongly connected components of $G^-$
can be reported against an adaptive adversary.

Finally, note that a new data structure $\decscc$ is initialized once per every $F$ insertions.
Thus, the amortized update time can be bounded as $\Ot(n^{1+\rho}+nF+T(n,m)/F)$.
To obtain Theorem~\ref{t:path-reporting}, it is enough
to set $F=\sqrt{T(n,m)/n}$.

\begin{corollary}
  There exist data structures supporting path queries between arbitrary vertices of a \emph{fully dynamic} graph
  with the following amortized update time and worst-case query time bounds:
  \begin{enumerate}
    \item $\Ot(m^{1/2}\cdot n^{5/6+o(1)}+n^{1+\rho})=O(n^{1.834})$ against an adaptive adversary,
    \item $\Ot(n^{1+\rho})=O(n^{1.529})$ against an oblivious adversary.
  \end{enumerate}
\end{corollary}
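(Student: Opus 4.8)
The plan is to obtain both items as direct corollaries of Theorem~\ref{t:path-reporting}, by instantiating the decremental strongly connected components data structure $\decscc$ appropriately in each adversary model and then substituting the resulting total update time $T(n,m)$ into the update/query bound $\Ot\!\left(\sqrt{T(n,m)\cdot n}+n^{1+\rho}\right)$, using $m\le n^2$.

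For item~1 I would use the deterministic decremental SCC data structure of~\cite{BernsteinGS20}, which (as already recorded after the statement of Theorem~\ref{t:path-reporting}) maintains $\sccs$ explicitly and reports a simple path $P$ between strongly connected $u,v$ in $|P|\cdot n^{o(1)}=O(|P|\cdot n^{\rho})$ time, with total update time $T(n,m)=mn^{2/3+o(1)}$. Then $\sqrt{T(n,m)\cdot n}=m^{1/2}\cdot n^{5/6+o(1)}$, which is the first term of the claimed bound; bounding $m\le n^2$ gives $m^{1/2}\cdot n^{5/6+o(1)}\le n^{1+5/6+o(1)}=n^{11/6+o(1)}$, and since $11/6<1.834$ (with the $o(1)$ exponent vanishing as $n\to\infty$) this is $O(n^{1.834})$, which also dominates $n^{1+\rho}=O(n^{1.529})$. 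As $\decscc$ is deterministic — hence works against an adaptive adversary — Theorem~\ref{t:path-reporting} yields a fully dynamic path-reporting data structure against an adaptive adversary.

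For item~2 I would instead plug in the near-optimal decremental SCC data structure of~\cite{BernsteinPW19} (Theorem~\ref{t:decscc}, turned into a worst-case Monte Carlo data structure via Remark~\ref{r:decscc}), with $T(n,m)=\Ot(n+m)$, after the standard extension that also lets it report a simple path between two strongly connected vertices within the $O(|P|\cdot n^{\rho})$ per-query budget that Theorem~\ref{t:path-reporting} demands. Now $\sqrt{T(n,m)\cdot n}=\Ot\!\left(\sqrt{(n+m)n}\right)=\Ot\!\left(n+\sqrt{mn}\right)$, which is $\Ot(n^{3/2})$ since $m\le n^2$; because $\rho\ge 1/2$ we have $1+\rho\ge 3/2$, so the $n^{1+\rho}$ term dominates and the amortized update / worst-case query bound becomes $\Ot(n^{1+\rho})=O(n^{1.529})$. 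This $\decscc$ is randomized and correct only against an oblivious adversary, which is precisely why item~2 is stated in that model.

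The only step that requires care — and the place I would double-check — is verifying that both plugged-in data structures genuinely satisfy requirement~(2) of Theorem~\ref{t:path-reporting}, i.e., that they expose simple-path reporting between strongly connected vertices at per-edge cost $O(n^{\rho})$, and not merely the partition $\sccs$ itself. For~\cite{BernsteinGS20} this is exactly the property quoted after Theorem~\ref{t:path-reporting}; for~\cite{BernsteinPW19} it should follow from the in-/out-tree structure that algorithm maintains within each component (concatenate a $u\to r$ path with an $r\to v$ path, then shortcut to a simple path), adding no asymptotic overhead. Everything else is the routine arithmetic of plugging $T(n,m)$ into Theorem~\ref{t:path-reporting} and bounding $m$ by $n^2$.
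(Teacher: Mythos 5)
Your proof matches the paper's own argument: both items are obtained by instantiating Theorem~\ref{t:path-reporting} with the deterministic decremental SCC data structure of~\cite{BernsteinGS20} (total update time $mn^{2/3+o(1)}$, adaptive) for item~1 and the near-optimal one of~\cite{BernsteinPW19} (total update time $\Ot(n+m)$, oblivious) for item~2, and then substituting into $\Ot(\sqrt{T(n,m)\cdot n}+n^{1+\rho})$. The only difference is that you spell out the arithmetic and the verification of requirement~(2) more explicitly than the paper does, which is fine.
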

\begin{proof}
  For the former item,
  there exists a \emph{deterministic} decremental strongly connected components data structure
  with total update time $mn^{2/3+o(1)}$ that supports path queries within an SCC
  in $O|P|\cdot n^{o(1)}$ time, where $P$ is the reported simple path~\cite{BernsteinGS20}.
  For the latter, we note that the near-optimal data structure of~\cite{BernsteinPW19} (Theorem~\ref{t:decscc}) can be
  also adjusted to support path queries within a strongly connected component, but only against an oblivious adversary.
\end{proof}

\section{Fully dynamic reachability tree reporting in general digraphs}\label{s:tree-general}
In this section, we show the following theorem.

\ttreereporting*
Our overall strategy will be to generalize the approach of Section~\ref{s:dag-tree} to general directed graphs.
Unfortunately, Observation~\ref{o:outtree} does not hold for general graphs:
given some source $s\in V$,
choosing an arbitrary incoming edge (e.g., that with the minimum label) from each vertex reachable from $s$
might lead to a disconnected (in the undirected sense) graph containing cycles.

In order to deal with this problem, we could apply Observation~\ref{o:outtree} to the condensation
$G/\sccs$ (with the source vertex set to the SCC containing $s$). The obtained
out-tree $T'$ in $G/\sccs$ could be then extended to an out-tree $T$ in $G$ spanning the vertices
reachable from $s$ in two steps. First, expand each vertex of $T'$ (i.e., a strongly connected
component $S$ reachable from $s$) into
a sparse strongly connected subgraph of $S$, e.g., consisting of a pair of reachability
trees from/to a single vertex in~$S$.
Then, compute a single-source reachability tree from $s$ in the obtained subgraph of $G$
using any graph search procedure in $O(n)$ time.

However, it is not clear how to efficiently operate on the condensation $G/\sccs$ (which is a DAG)
of a decremental graph $G$
using the algebraic data structures of Theorem~\ref{t:dyn-tr}~and~Lemma~\ref{l:dyn-tr-general}.
Recall that when the strongly connected components split, the condensation undergoes vertex splits, and each
vertex split (revealed online) might require $\Theta(n)$ edges changing endpoints.

We will nevertheless avoid this problem. Let us start with the following technical lemma.
\begin{lemma}\label{l:reach-subgraph}
  Let $\pi:V\to [1,n]$ be such that: (a) if $x,y\in V$ lie in distinct strongly connected
  components of $G$ and there exists an $x\to y$ path in $G$, then $\pi(x)<\pi(y)$,
  and (b) if $x$ and $y$ are strongly connected, then $\pi(x)=\pi(y)$.

  Let $s\in V$ and let $V_s\subseteq V$ be the vertices reachable from $s$ in $G$.
  Moreover, let $\sccs_s$ be the set of strongly connected components reachable from $s$ in $G$.
  Let $\sccs_s'$ be an arbitrary subset of $\sccs_s$.

  Suppose $H\subseteq G[V_s]$ with $V(H)=V_s$ satisfies the following:
  \begin{enumerate}[label=(\arabic*)]
    \item for each $t\in V_s\setminus\{s\}$, $H$ contains an edge $vt$ such that $\pi(v)$ is minimal possible
      among $v\in \bigcup \sccs_s'$.
    \item for all $(X,Y)\in (\sccs_s\setminus \sccs_s')\times \sccs_s$, $H$ contains an edge $xy\in E(G)\cap (X\times Y)$,
      if such an edge exists.
    \item for each $S\in \sccs_s$, $H[S]$ is strongly connected.
  \end{enumerate}
  Then, all vertices of $V_s$ are reachable from $s$ in $H$.
\end{lemma}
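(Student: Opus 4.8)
The plan is to show that every vertex $t\in V_s$ is reachable from $s$ in $H$ by induction on $\pi(t)$, or more precisely on the position of $t$'s strongly connected component in the topological order induced by $\pi$. The base case is the component $S_s$ containing $s$: since $S_s\subseteq V_s$ and, by condition~(3), $H[S_s]$ is strongly connected, every vertex of $S_s$ is reachable from $s$ inside $H$. For the inductive step, fix $t\in V_s\setminus S_s$ and let $S_t$ be its SCC; assume every vertex lying in an SCC strictly earlier than $S_t$ in the topological order (equivalently, with strictly smaller $\pi$-value) is reachable from $s$ in $H$. Again by condition~(3) it suffices to reach a single vertex of $S_t$, so we may as well assume $t$ is chosen so that the edge supplied by condition~(1), $vt\in E(H)$, witnesses reachability — but we must argue $v$ itself is reachable.

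The key case split is whether the edge $vt$ of condition~(1) has its tail $v$ in $\bigcup\sccs_s'$ or not. First I would handle the situation where such an incoming edge from $\bigcup\sccs_s'$ exists: then $H$ contains $vt$ with $v\in X$ for some $X\in\sccs_s'\subseteq\sccs_s$. Since $X\to S_t$ is an edge of the condensation and $X\neq S_t$ (the tail of the edge is in a different SCC, as $t$ only has same-SCC neighbours inside $S_t$, which are handled by condition~(3)), property~(a) of $\pi$ gives $\pi(v)<\pi(t)$, so by the inductive hypothesis $v$ is reachable from $s$ in $H$, hence so is $t$, hence all of $S_t$ by~(3). The remaining case is that $t$ has \emph{no} incoming edge from $\bigcup\sccs_s'$ at all; then I need to produce some edge into $S_t$ whose tail is reachable. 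Here I would use that $t\in V_s$, so there is some edge $xy\in E(G)$ with $y\in S_t$ and $x$ in an SCC $X$ with $X\to S_t$ an edge of the condensation (take the last edge on a path from $s$ to $t$ that enters $S_t$; its tail lies in a strictly earlier SCC $X\in\sccs_s$). If $X\notin\sccs_s'$, condition~(2) guarantees $H$ contains \emph{some} edge from $X$ into $S_t$, and its tail has $\pi$-value equal to $\pi(x)<\pi(y)=\pi(t)$ by properties (a),(b), so it is reachable by induction. If instead $X\in\sccs_s'$, then $xy$ is itself an incoming edge of $y\in S_t$ with tail in $\bigcup\sccs_s'$, contradicting the assumption of this case — unless $y\neq t$, but condition~(1) is applied per-vertex, so I would instead apply the first case to $y$ rather than $t$ (reaching $y$, hence all of $S_t$, hence $t$). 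So in every case some vertex of $S_t$ is reached, and~(3) finishes the step.

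The main obstacle I anticipate is bookkeeping the interaction between conditions~(1) and~(2): condition~(1) is stated per target vertex $t$ and only promises an edge with minimal $\pi$-value among tails in $\bigcup\sccs_s'$ \emph{when such a tail exists}, while condition~(2) only fires for source SCCs outside $\sccs_s'$. One must be careful that for a given SCC $S_t$ there is \emph{always} at least one usable incoming edge in $H$ whose tail lies in a topologically earlier reachable SCC — combining the two conditions requires checking that if \emph{every} edge of $G$ entering $S_t$ from outside has its tail in $\bigcup\sccs_s'$, then condition~(1) (applied to any vertex of $S_t$ that is the head of such an edge) supplies it, and otherwise condition~(2) does; and that in the mixed situation we still land on a vertex reachable by the inductive hypothesis. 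The clean way to organize this is to phrase the induction over SCCs in topological order and, for each reachable SCC $S\neq S_s$, to exhibit one incoming edge of $H$ from a strictly earlier reachable SCC, doing the case analysis on whether that earlier SCC is in $\sccs_s'$; the $\pi$-monotonicity properties (a) and (b) are exactly what make ``strictly earlier'' coincide with ``strictly smaller $\pi$''.
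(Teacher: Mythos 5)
Your overall strategy matches the paper's: reduce via condition~(3) to showing every SCC $Y\in\sccs_s\setminus\{S^*\}$ has an incoming edge in the condensation $H/\sccs_s$, then induct along the topological order (the paper packages this DAG argument as an invocation of Observation~\ref{o:outtree} rather than an explicit induction), splitting on whether the source SCC of the chosen incoming edge lies in $\sccs_s'$ (use condition~(1)) or not (use condition~(2)). So the final ``clean way to organize'' paragraph is essentially the paper's proof.

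However, your first-pass per-vertex case~1 contains a real gap: the parenthetical justifying ``$X\neq S_t$'' is not valid. Condition~(1) minimizes $\pi(v)$ over \emph{all} of $\bigcup\sccs_s'$, and if $S_t\in\sccs_s'$ while all in-edges of that particular $t$ with tail in $\bigcup\sccs_s'$ come from within $S_t$, the edge supplied by~(1) has $\pi(v)=\pi(t)$ and never leaves $S_t$ — it yields no progress. What actually rescues the argument is precisely the $\pi$-minimality together with the right choice of target vertex: pick a $y\in S_t$ that is the head of some edge entering $S_t$ from a strictly earlier SCC $X$ (such $y$ exists because $S_t$ is reachable from $S_s$). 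If $X\notin\sccs_s'$, condition~(2) supplies the needed edge; if $X\in\sccs_s'$, then $y$ has an in-neighbor in $\bigcup\sccs_s'$ with $\pi$-value $<\pi(y)$, so the $\pi$-minimizer guaranteed by~(1) also has $\pi$-value $<\pi(y)$, forcing its tail outside $S_t$ by property~(b). The paper avoids the confusion by arguing at the SCC level from the start and never invoking~(1) on an unconstrained vertex of $S_t$; when writing this up you should do the same and drop the per-vertex case~1 as stated.
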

\begin{proof}
  By item~(3), it is enough to prove that in the condensation $H/\sccs_s$, every $Y\in \sccs_s$
  is reachable from the vertex (component) $S^*$ containing $s$.
  Since $H/\sccs_s$ is a DAG, by Observation~\ref{o:outtree}, we only need to
  prove that every $Y\in \sccs_s\setminus\{S^*\}$ has an incoming edge in that graph.

  Note that $Y$ is reachable from $S^*$ in $G[V_s]/\sccs_s$.
  Consequently, $Y$ has an incoming edge from some $X\in \sccs_s$, $X\neq Y$, in $G[V_s]/\sccs_s$.
  If there exists such an $X$ that $X\notin \sccs_s'$, then by item~(2), we also have an edge $xy\in E(G)\cap (X\times Y)$
  in $H$. So, then in $H/\sccs_s$ indeed $Y$ has an incoming edge.

  So assume that for all $X$ such that $Y$ has an incoming edge $XY$ in $G[V_s]/\sccs_s$, $X\in \sccs_s'$.
  There exists an edge $xy\in E(G)\cap (X\times Y)$ and we have $x\in (\bigcup\sccs_s')\setminus Y$
  and $\pi(x)<\pi(y)$.
  Consequently, by item~(1), $H$ contains an edge $vy$ with $v\in \bigcup\sccs_s'$ such that
  $\pi(v)\leq \pi(x)<\pi(y)$.
  By $\pi(v)<\pi(y)$, $v$ is in a different strongly connected component than $y$ and $v$
  is reachable from $s$ as well.
  This proves that in $H/\sccs_s$ the vertex $Y$ indeed has an incoming edge.
\end{proof}
Generally speaking, the query procedure will construct a subgraph $\mathcal{H}\subseteq G$ of moderate size satisfying
the requirements of~Lemma~\ref{l:reach-subgraph} and then produce a single-source
reachability tree from $s$ in $H$ (and thus also in $G$) in $O(|E(\mathcal{H})|)$ time.
To allow constructing such a subgraph we will require a number of components.

We reuse much of the notation and developments from Section~\ref{s:reach-general}.
Again, the algorithm will operate in phases spanning $F$ edge insertions.
The data structure $\decscc$, accompanied with the auxiliary data structures of Lemma~\ref{l:decscc-extension},
is reinitialized at the beginning of each phase.
Recall that $E^+$ denotes the set of edges inserted in the current phase, and
$G^-$ is the graph from the beginning of the phase minus the edges deleted in the current phase,
so that at all times the current graph $G$ satisfies $G=G^-\cup E^+$.
Moreover, similarly as in Section~\ref{s:reach-general}, we use a pair of data structures
$\mathcal{Q},\dtc$ of Theorem~\ref{t:dyn-tr}~and~Lemma~\ref{l:dyn-tr-general} respectively.
Both these data structures maintain the graph $G^-$ when a phase proceeds, and are passed
the edge insertions of $E^+$ as late as when a new phase is started.

We will maintain a labeling $\pi:V\to [1,n]$ based on the topological labels $\pi$ of the strongly connected components
$\sccs$ of $G^-$, as given by Lemma~\ref{l:decscc-extension} -- we set $\pi(v):=\pi(S)$ if $v\in S\in\sccs$.
\newcommand{\ivals}{\mathcal{I}}

Finally, for $\Delta$ to be chosen later, we maintain $q=\lceil n/\Delta \rceil$ data structures $\dtc_1,\ldots,\dtc_q$
of Theorem~\ref{t:dyn-tr}. Recall that in Section~\ref{s:dag-tree}, $\dtc_i$ was responsible for detecting
paths whose penultimate vertex had its label in $[(i-1)\cdot \Delta+1,i\cdot \Delta]$:
this was possible since the $V\times V'$ layer of the underlying graph~$G_i$ contained only
edges $vv'$ where $v\in [(i-1)\cdot \Delta+1,i\cdot \Delta]$.
Here, we proceed a bit differently: the data structure $\dtc_i$ will be responsible for
detecting paths whose penultimate vertex's \emph{topological label} $\pi(y)$ will
\emph{surely remain} in the interval $[(i-1)\cdot \Delta+1,i\cdot \Delta]$ till the end of the current phase.
In particular, not all possible penultimate vertices will be assigned to one of $\dtc_i$.
For example, it may happen
that for some vertex, the range of its possible topological labels after the future updates of the phase
is too large.

More formally, for each $i=1,\ldots,q$, we maintain a \emph{growing} subset $Y_i\subseteq V$
such that the underlying graph $G_i=(V\cup V'\cup V'',E')$ of $\dtc_i$ satisfies
$uv\in E'$ and $u'v''\in E'$ for each $uv\in E(G^-)$, and moreover $yy'\in E'$
if and only if $y\in Y_i$.
$Y_i$ will contain the vertices whose topological label will stay in the interval
$[(i-1)\cdot \Delta+1,i\cdot \Delta]$ till the end of the current phase.
By proceeding identically as in the proof of Lemma~\ref{l:detect-interval},
one shows that a $u\to v''$ path exists in $G_i$ if and only
if there exists a $u\to v$ path in $G^-$ whose penultimate vertex is in $Y_i$.

Observe that when a new element $z$ gets added to $Y_i$, we can update
$\dtc_i$ in $O(n^{1+\rho})$ time -- this amounts to processing a single edge insertion
$zz'$ to $G_i$.
Similarly as in Section~\ref{s:dag-tree}, every edge update to $G$ can be translated to two edge updates in each $\dtc_i$.
Analogously as in Section~\ref{s:reach-general}, each edge deletion to $G$ is immediately passed to each of $\dtc_i$,
whereas the insertions are executed on these data structures only when the phase ends.

We will maintain the following invariant: for any $S\in\sccs$ and $i=1,\ldots,q$, if $[\pi(S),\pi(S)+|S|-1]\subseteq [(i-1)\cdot \Delta+1,i\cdot \Delta]$
then $S\subseteq Y_i$, and otherwise $S\cap Y_i=\emptyset$.
Note that by the properties of $\pi$ from Lemma~\ref{l:decscc-extension},
if $S\subseteq Y_i$ holds at some point for $S\in\sccs$, $v\in Y_i$ will
hold for all $v\in S$ till the end of the phase. This is because $\pi(v)\in  [(i-1)\cdot \Delta+1,i\cdot \Delta]$ will hold.

We call the strongly connected components $S\in \sccs$ such that
we have $S\cap Y_i=\emptyset$ for all $i=1,\ldots,q$ \emph{special}.
The above invariant will be maintained as follows: when some special component $S$ breaks
into smaller components $S_1,\ldots,S_k$, we check for each of them
if it is special. If $S_j$ is not special, i.e.,
 $[\pi(S_j),\pi(S_j)+|S_j|-1]\subseteq [(i-1)\cdot \Delta+1,i\cdot \Delta]$
 for some $i$, we add the elements of $S_j$ to $Y_i$.
 This requires $|S_j|$ edge insertions issued to $\dtc_i$.
 Afterwards, none of the vertices of $S_j$ will get inserted to any other $Y_l$ anymore in the current phase.

Note that through initialization and all component splits in a phase, the cost of maintaining the invariant
is $O(n\cdot n^{1+\rho})=O(n^{2+\rho})$ since each vertex is added
to some single $Y_i$ at most once.
For the same reason, emptying the sets $Y_i$ once the current phase is
finished (we require them empty when the next phase starts) can be performed in $O(n^{2+\rho})$ time as well.

\begin{observation}\label{o:special}
  At all times, there exist $O(n/\Delta)$ special strongly connected components $S\in\sccs$.
\end{observation}
\begin{proof}
  Let $j$ be such that $\pi(S)\in [(j-1)\cdot \Delta+1,j\cdot\Delta]$.
  Since $S\cap Y_j=\emptyset$,
  $[\pi(S),\pi(S)+|S|-1]\not\subseteq {[(j-1)\cdot \Delta+1,j\cdot \Delta]}$
  which implies that $\pi(S)+|S|-1>j\cdot \Delta$.
  Hence, $j\cdot \Delta\in [\pi(S),\pi(S)+|S|-1]$.
  Since there are $O(n/\Delta)$ numbers of the form $j\cdot \Delta$
  in $[1,n]$, and the intervals $[\pi(S'),\pi(S')+|S'|-1]$
  and $[\pi(S''),\pi(S'')+|S''|-1]$ are disjoint for any $S',S''\in\sccs$, $S'\neq S''$,
  only $O(n/\Delta)$ strongly connected components $S$
  may satisfy $j\cdot \Delta\in [\pi(S),\pi(S)+|S|-1]$ for some $j$.
\end{proof}

Having defined all the needed components, we now describe how to construct a desired
subgraph $\mathcal{H}\subseteq G$.
First of all, we compute the partition $V_{s,0},\ldots,V_{s,k}$ and the tree structure $J$
of Lemma~\ref{l:query-partition}, which takes $O(n^{1+\rho}+nF)$ time.
The subgraph $\mathcal{H}$ will consist of the edges $E^+=\{u_1v_1,\ldots,u_kv_k\}$ and
the union of individual subgraphs $\mathcal{H}_i\subseteq V_{s,i}$ for all $i=0,1,\ldots,k$
(recall that $v_0=s$ in Lemma~\ref{l:query-partition}).
Due to the existence of the tree structure $J$ on the indices $i$ with $V_{s,i}\neq\emptyset$,
it is enough to guarantee that each $\mathcal{H}_i$
contains a reachability tree from $v_i$ in $G^-[V_{s,i}]$.

Let $\sccs_i$ contain the subset of the strongly connected components of $\sccs$ of $G^-$
contained in $G^-[V_{s,i}]$.
For each $Z\in \sccs_i$ that is special, and all other $X\in \sccs_i\setminus\{Z\}$,
if there exists an edge $XZ$ in $G^-[V_{s,i}]/\sccs$, we include in $\mathcal{H}_i$
the edge $e_{X,Z}\in E(G^-)$,
as defined and maintained by Lemma~\ref{l:decscc-extension}.

For each vertex $t\in V_{s,i}\setminus\{v_i\}$,
we find the smallest $j$ (if one exists) such that a path $v_i\to t''$
exists in~$G_{j}$.
Note that such a $j$ can be found using at most $q=O(n/\Delta)$ reachability
queries to the data structures $\dtc_1,\ldots,\dtc_q$.
The cost of these queries is clearly $O((n/\Delta)\cdot n^{\rho})$.
We then add to $\mathcal{H}_i$ all the $t$'s incoming
edges (in $G^-[V_{s,i}]$) from the vertices of the set $Y_j$, whose size is no more than $\Delta$.

Finally, for each $Z\in \sccs$ (regardless of whether it is special or not),
using the assumed decremental strongly connected components data structure $\decscc$ maintaining $G^-$, we compute a strongly connected subgraph $Z'$ of $G^-[Z]$ with $V(Z')=Z$ in $O(|Z|\cdot n^{(1+\rho)/2})$
time and add it to $\mathcal{H}_i$.

The following lemma proves the above construction correct.
\begin{lemma}\label{l:subgraph-corect}
  Let $V_{s,i}\neq\emptyset$. Then, in the subgraph $\mathcal{H}_i$, $v_i$ can reach every vertex of $V_{s,i}$.
\end{lemma}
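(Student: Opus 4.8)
The plan is to instantiate Lemma~\ref{l:reach-subgraph} with the digraph $G^-[V_{s,i}]$, source vertex $v_i$, the labeling $\pi$ restricted to $V_{s,i}$, and with $\sccs_s'$ taken to be the set of \emph{non-special} strongly connected components contained in $V_{s,i}$. Its conclusion is exactly Lemma~\ref{l:subgraph-corect}, so all the work lies in checking the hypotheses of Lemma~\ref{l:reach-subgraph} and verifying that $\mathcal{H}_i$ meets its conditions (1)--(3) (and that $\mathcal{H}_i\subseteq G^-[V_{s,i}]$ with $V(\mathcal{H}_i)=V_{s,i}$, which will be immediate once (3) is established).

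First I would record the structural facts about $G^-[V_{s,i}]$. Using the containment property of Lemma~\ref{l:query-partition} --- every $v_i\to z$ path in $G^-$ with $z\in V_{s,i}$ stays inside $G^-[V_{s,i}]$ --- one shows that (i) every vertex of $V_{s,i}$ is reachable from $v_i$ inside $G^-[V_{s,i}]$, and (ii) restricting $G^-$ to $V_{s,i}$ neither splits nor merges strongly connected components, so the SCCs of $G^-[V_{s,i}]$ are exactly $\sccs_i=\{S\in\sccs:S\subseteq V_{s,i}\}$. Properties (a) and (b) demanded of $\pi$ in Lemma~\ref{l:reach-subgraph} are then inherited directly from the topological-label invariants of Lemma~\ref{l:decscc-extension}: members of a single SCC share a $\pi$-value, and a path in the condensation forces a strict increase of $\pi$.

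Next I would verify conditions (1)--(3) for $\mathcal{H}_i$. Condition (3) is immediate: the last step of the construction inserts, for each $Z\in\sccs_i$, a strongly connected subgraph $Z'$ of $G^-[Z]$ with $V(Z')=Z$, and extra edges do no harm. Condition (2), which asks for a representative edge between each component not in $\sccs_s'$ --- i.e., each special component --- and every component adjacent to it in the condensation $G^-[V_{s,i}]/\sccs$, is precisely what the first step of the construction supplies via the edges $e_{\cdot,\cdot}$ maintained by Lemma~\ref{l:decscc-extension}. The crux is condition (1): for every $t\in V_{s,i}\setminus\{v_i\}$ one must exhibit in $\mathcal{H}_i$ an edge $vt$ whose tail $v$ lies in a non-special SCC and minimizes $\pi(v)$. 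Here I would argue that the admissible tails are exactly the in-neighbors of $t$ in $G^-[V_{s,i}]$ belonging to some $Y_j$; that, by the $G_j$-reachability characterization (the analogue of Lemma~\ref{l:detect-interval}) combined with the containment guarantee of Lemma~\ref{l:query-partition}, querying $\dtc_1,\dots,\dtc_q$ returns the least $j$ for which $t$ has an in-neighbor in $Y_j$; and that, since the $Y_j$ partition the non-special vertices into consecutive $\pi$-intervals with index increasing in $\pi$, adding \emph{all} in-edges of $t$ from that $Y_j$ captures in particular the one minimizing $\pi$. With the hypotheses and (1)--(3) in hand, Lemma~\ref{l:reach-subgraph} yields that $v_i$ reaches all of $V_{s,i}$ in $\mathcal{H}_i$.

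The step I expect to be the main obstacle is exactly condition (1). The data structures $\dtc_j$ maintain the \emph{global} decremental graph $G^-$, so a priori a $v_i\to t''$ path in $G_j$ only certifies a penultimate vertex reachable from $v_i$ in $G^-$, which need not lie in $V_{s,i}$; matching this up with ``in-neighbor of $t$ inside $V_{s,i}$'', and conversely arguing that no non-special in-neighbor of $t$ is overlooked, is precisely where the containment property of Lemma~\ref{l:query-partition} is essential, and it is the one place where the interplay between the vertex partition and the algebraic reachability data structures is genuinely exploited. Everything else is bookkeeping.
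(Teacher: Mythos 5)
Your proposal is correct and follows essentially the same route as the paper: instantiate Lemma~\ref{l:reach-subgraph} with $G:=G^-[V_{s,i}]$, $s:=v_i$, $H:=\mathcal{H}_i$, $\sccs_s:=\sccs_i$, $\sccs_s'$ the non-special SCCs, and verify properties~(1)--(3), with property~(1) being the only nontrivial check. The paper phrases the verification of property~(1) as a proof by contradiction (assume the minimizing edge $vt$ is missing, derive that the query either returned $j$ equal to the index $l$ of $v$'s interval, forcing $vt$ into $\mathcal{H}_i$, or returned some $j<l$, contradicting minimality of $\pi(v)$), whereas you argue it directly via the characterization of the smallest returned index $j$ together with the monotonicity of the $Y_j$ intervals in $\pi$ --- these are the same argument.
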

\begin{proof}
  We prove that $\mathcal{H}_i$ satisfies the properties (1)-(3) from Lemma~\ref{l:reach-subgraph}
  applied to $G:=G^-[V_{s,i}]$, $s:=v_i$, $H:=\mathcal{H}_i$, $\sccs_s:=\sccs_i$, and $\sccs_s'$ equal
  to the \emph{non-special} strongly components in $\sccs_i$.
  Then, the lemma will follow by Lemma~\ref{l:reach-subgraph}.

  Let $t\in V_{s,i}\setminus\{v_i\}$. For contradiction, suppose that property~(1) is not satisfied,
  i.e., $\mathcal{H}_i$ does not contain an edge $vt$ such that $v$ is in a non-special strongly connected
  component $Z\in \sccs_i$ and $\pi(v)$ is minimal.
  Since $Z$ is non-special, by the maintained invariant we have $Z\subseteq Y_l$ for some $l$.
  Moreover, since $v$ is reachable from $v_i$ in $G^-[V_{s,i}]$, there exists a path $v_i\to t$
  in $G^-[V_{s,i}]$ whose penultimate vertex is $v$.
  As a result, there exists a path $v_i\to t''$ in $G_l$, i.e., the query about a $v_i\to t''$ path
  in~$\dtc_l$ returns true.
  So, the minimal $j$ such that a path $v_i\to t''$ exists in $G_j$ satisfies $j\leq l$.
  If $j=l$, then all edges $wt\in E(G^-)$ with $w\in Y_l$, in particular $vt$, are added
  to $\mathcal{H}_i$, which contradicts our assumption.
  On the other hand, if $j<l$, then there exists a $v_i\to t$ path in $G^-[V_{s,i}]$ whose
  penultimate vertex $w$ lies in $Y_j$. But since $j<l$, $wt\in E(G^-[V_{s,i}])$, $\pi(w)<\pi(v)$ and $w$ lies
  in a non-special strongly connected component as well, so $\pi(v)$ was not minimal possible -- a contradiction.

  Property~(2) from Lemma~\ref{l:reach-subgraph} follows easily by construction: for each special
  strongly connected component $X$, we add to $\mathcal{H}_i$ an outgoing edge to every other component of $G^-[V_{s,i}]$ (if it exists).

  Property~(3) also follows trivially by construction.
\end{proof}

\begin{lemma}\label{l:subgraph-size}
  The total time needed to compute the subgraph $\mathcal{H}$ is $O(n^{2+\rho}/\Delta+n\Delta+n^{(3+\rho)/2}+nF)$.
\end{lemma}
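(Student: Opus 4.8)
The plan is to charge the cost of constructing $\mathcal{H}$ step by step, following the description of the query procedure, and then simplify the resulting sum. The key structural observation that makes everything go through is that the sets $V_{s,0},\ldots,V_{s,k}$ produced by Lemma~\ref{l:query-partition} are pairwise disjoint: consequently, all the ``per-vertex'' and ``per-SCC'' work done across the individual subgraphs $\mathcal{H}_0,\ldots,\mathcal{H}_k$ is, in aggregate, spread over at most $n$ vertices and at most $n$ strongly connected components, rather than being multiplied by $k+1=O(F)$.

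First I would account for the preprocessing. Computing the partition $V_{s,0},\ldots,V_{s,k}$ together with the tree structure $J$ via Lemma~\ref{l:query-partition} takes $O(n^{1+\rho}+nF)$ time; the $nF$ term appears in the claimed bound, and $n^{1+\rho}$ is dominated by $n^{(3+\rho)/2}$ since $\rho<1$. Next, for the edges contributed by \emph{special} strongly connected components: by Observation~\ref{o:special} there are only $O(n/\Delta)$ special SCCs overall, and these are distributed among the families $\sccs_i$. For each special component $Z$, deciding which of the (at most $n$) other components $X$ of the relevant condensation $G^-[V_{s,i}]/\sccs$ has an edge $XZ$, and retrieving the witness edge $e_{X,Z}$ maintained by Lemma~\ref{l:decscc-extension} (using its hash table for constant-time edge lookups), costs $O(n)$ time; summed over all special SCCs this is $O(n^2/\Delta)$, which is dominated by $n^{2+\rho}/\Delta$.

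Then I would handle the dominant per-vertex work. For each $t\in V_{s,i}\setminus\{v_i\}$ we find the smallest index $j$ admitting a $v_i\to t''$ path in $G_j$ by issuing $q=O(n/\Delta)$ reachability queries, one to each of $\dtc_1,\ldots,\dtc_q$; by Theorem~\ref{t:dyn-tr} each such query costs $O(n^\rho)$, so this is $O((n/\Delta)\cdot n^\rho)$ per vertex. Having found $j$, we add the at most $|Y_j|\le\Delta$ incoming edges of $t$ from $Y_j$ to $\mathcal{H}_i$, at cost $O(\Delta)$ per vertex. Since the $V_{s,i}$ are pairwise disjoint there are at most $n$ such vertices $t$ in total, so summing yields $O\!\left(n\cdot((n/\Delta)\cdot n^\rho+\Delta)\right)=O(n^{2+\rho}/\Delta+n\Delta)$. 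Finally, for every strongly connected component $Z$ of $G^-$ we query $\decscc$ for a strongly connected subgraph $Z'$ of $G^-[Z]$ with $V(Z')=Z$, which by the assumed bound takes $O(|Z|\cdot n^{(1+\rho)/2})$ time; as $\sum_{Z\in\sccs}|Z|=n$, this step costs $O(n^{(3+\rho)/2})$ in total.

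Combining all the contributions, the running time is $O(n^{1+\rho}+nF+n^2/\Delta+n^{2+\rho}/\Delta+n\Delta+n^{(3+\rho)/2})$, which collapses to $O(n^{2+\rho}/\Delta+n\Delta+n^{(3+\rho)/2}+nF)$ using $\rho<1$ and $n^2/\Delta\le n^{2+\rho}/\Delta$, as desired. The only mildly delicate point is the bookkeeping in the per-vertex paragraph: one must use the disjointness of the $V_{s,i}$'s to get the factor $n$ (total number of vertices) rather than a naive $(k+1)\cdot n$; beyond that, the argument is routine term-chasing.
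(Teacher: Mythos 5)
Your proof is correct and follows essentially the same approach as the paper: charge the cost of each step of the construction of $\mathcal{H}_i$, use the pairwise disjointness of the $V_{s,i}$ to replace the naive $(k+1)\cdot n$ factor with $\sum_i |V_{s,i}|\le n$, and use Observation~\ref{o:special} to bound the aggregate work over special SCCs by $O(n^2/\Delta)$. The only cosmetic difference is that you isolate the special-SCC contribution as a separate term while the paper folds it into a per-$i$ bound $O(r_i\cdot|V_{s,i}|)$ before summing; the resulting bound and the simplification via $\rho<1$ are identical.
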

\begin{proof}
  First, recall that computing the partition $V_{s,0},\ldots,V_{s,k}$ takes $O(n^{1+\rho}+nF)$ time.

  Let $r_i$ denote the number of special strongly connected components in $\sccs_i$.
  Then, the total time needed to compute $\mathcal{H}_i$ is:
  \begin{equation*}
    O\left(r_i\cdot |V_{s,i}|+|V_{s,i}|\cdot \left(\frac{n}{\Delta}\cdot n^{\rho}+\Delta\right)+|V_{s,i}|\cdot n^{(1+\rho)/2}\right).
  \end{equation*}
  Summing through all $i=0,\ldots,k$, and taking into account that there are $O(n/\Delta)$ special strongly connected components in total (see Observation~\ref{o:special}), we bound the time as follows:
  \begin{equation*}
    O\left(\left(\frac{n^{1+\rho}}{\Delta}+n^{(1+\rho)/2}+\Delta\right)\cdot \sum_{i=1}^k|V_{s,i}|+\sum_{i=1}^k|V_{s,i}|\cdot r_i\right)=O\left(\frac{n^{2+\rho}}{\Delta}+n\Delta+n^{(3+\rho)/2}+\frac{n^2}{\Delta}\right).\qedhere
  \end{equation*}
\end{proof}
  Note that the time bound from Lemma~\ref{l:subgraph-size} also bounds the size of $\mathcal{H}$,
  and thus the query time.

\begin{lemma}
  The amortized update time is $O\left(T(n,m)/F + n^{2+\rho}/\Delta + n^{2+\rho}/F\right)$.
\end{lemma}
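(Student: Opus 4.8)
The plan is to separate the bookkeeping into (i) costs paid once per phase, which I amortize over the $F$ insertions of the phase, and (ii) costs paid on every single update in the worst case, and then to simplify the resulting sum. A phase spans $F$ insertions; at its start we reinitialize $\decscc$ together with the auxiliary structures of Lemma~\ref{l:decscc-extension}, which together run in $\Ot(T(n,m))$ total time over the phase (the $O(n+m\log n)$ overhead of Lemma~\ref{l:decscc-extension} being absorbed into $T(n,m)=\Omega(m+n)$), contributing $\Ot(T(n,m)/F)$ amortized.

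Next I would bound the cost of maintaining the invariant tying the sets $Y_i$ to $\sccs$. The key point is that, by the \emph{nesting property} of the topological labels $\pi$ from Lemma~\ref{l:decscc-extension}, once $S\subseteq Y_i$ holds for some $S\in\sccs$ it keeps holding (with $S$ only splitting further inside $Y_i$) until the phase ends, and a vertex is moved into some $Y_i$ only when a split of a \emph{special} component produces a \emph{non-special} child. Hence over the whole phase each vertex is inserted into at most one $Y_i$, every such insertion being a single edge insertion $zz'$ into the corresponding $\dtc_i$ at cost $O(n^{1+\rho})$ by Theorem~\ref{t:dyn-tr}; the total is $O(n\cdot n^{1+\rho})=O(n^{2+\rho})$, i.e.\ $O(n^{2+\rho}/F)$ amortized, and emptying all $Y_i$ at the end of the phase is bounded in exactly the same way. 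I would also account for flushing the $\le F$ edges of $E^+$ at phase end into $\dtc$ (cost $O(F(n^{1+\rho}+nF))$) and into each of the $q=\lceil n/\Delta\rceil$ copies $\dtc_1,\ldots,\dtc_q$ (cost $O((n/\Delta)\cdot F\cdot n^{1+\rho})$); amortized over $F$ these become $O(n^{1+\rho}+nF)$ and $O(n^{2+\rho}/\Delta)$.

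For the per-update side, every edge update is forwarded to $\mathcal{Q}$ at cost $O(n^{1+\rho})$ (Theorem~\ref{t:dyn-tr}), to $\dtc$ either as an edge deletion or as an add-to-$R$ operation at cost $O(n^{1+\rho}+nF)$ since $|R|\le F$ (Lemma~\ref{l:dyn-tr-general}), and, turned into two edge updates, to each of the $q$ data structures $\dtc_i$, for a total of $O((n/\Delta)\cdot n^{1+\rho})=O(n^{2+\rho}/\Delta)$. Adding these to the amortized per-phase costs yields $O(T(n,m)/F + n^{2+\rho}/F + n^{1+\rho} + nF + n^{2+\rho}/\Delta)$. Finally I would simplify: $n^{1+\rho}=O(n^{2+\rho}/\Delta)$ since $\Delta\le n$, and $nF=O(n^{2+\rho}/F)$ whenever $F\le n^{(1+\rho)/2}$, which holds for the parameter choice $F=\Theta(\sqrt{T(n,m)/n})$ used later to instantiate the theorem (as $T(n,m)=O(n^{2+\rho})$ for the decremental SCC data structures we plug in), and in any case never dominates. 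This leaves exactly $O(T(n,m)/F + n^{2+\rho}/\Delta + n^{2+\rho}/F)$.

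The step I expect to be the crux is the $O(n^{2+\rho}/F)$ term: one must be sure that, over an entire phase, each vertex enters at most one $Y_i$ and that no $Y_i$ ever needs an element removed before the phase ends. This is precisely what the nesting guarantee of Lemma~\ref{l:decscc-extension} delivers, and it is the same counting that underlies Observation~\ref{o:special}; once it is made precise, everything else is routine bookkeeping over the phase.
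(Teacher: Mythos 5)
Your proposal is correct and follows essentially the same three-part cost decomposition as the paper: amortize the $\decscc$/Lemma~\ref{l:decscc-extension} reinitialization over $F$ updates to get $O(T(n,m)/F)$, charge $O(n^{2+\rho}/\Delta)$ per update for passing each update to $\dtc,\dtc_1,\ldots,\dtc_q$, and bound the $Y_i$ maintenance (each vertex entering at most one $Y_i$, each insertion costing $O(n^{1+\rho})$) at $O(n^{2+\rho})$ per phase, hence $O(n^{2+\rho}/F)$ amortized. You are in fact a bit more scrupulous than the paper's terse proof in tracking the $O(n^{1+\rho}+nF)$ per-update contributions from $\mathcal{Q}$, $\dtc$, and the phase-end flush of $E^+$, and in verifying that these extra terms are dominated by $n^{2+\rho}/\Delta$ and $T(n,m)/F+n^{2+\rho}/F$ under the parameter choice $\Delta=F$ that the theorem ultimately uses.
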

\begin{proof}
  Initializing the data structure $\decscc$
  along with the auxiliary data of Lemma~\ref{l:decscc-extension}
  happens once per $F$ insertions, this gives $O(T(n,m)/F)$ amortized time per update.
  Since each edge update is passed to all the data structures $\dtc,\dtc_1,\ldots,\dtc_q$,
  which gives $O(n^{2+\rho}/\Delta)$ amortized cost per update.
  As discussed earlier, handling changes to the sets $Y_i$ (and also their initialization)
  in data structures $\dtc_1,\ldots,\dtc_q$
  costs $O(n^{2+\rho})$ time per phase, which gives $O(n^{2+\rho}/F)$ amortized time per update.
\end{proof}

To balance the update and query time, we set $\Delta=F=\max\left(\sqrt{\frac{T(n,m)}{n}},n^{(1+\rho)/2}\right)$.
Again, all the components except the subgraphs certifying strong connectivity
inside the components of $\sccs$ are deterministic, so the algorithm works against an
adaptive adversary if and only if these subgraphs are reported against an adaptive adversary.

\begin{corollary}
  There exist data structures supporting single-source reachability tree queries on a \emph{fully dynamic} graph
  with the following amortized update time and worst-case query time bounds:
  \begin{enumerate}
    \item $\Ot(m^{1/2}\cdot n^{5/6+o(1)}+n^{(3+\rho)/2})=O(n^{1.834})$ against an adaptive adversary,
    \item $\Ot(n^{(3+\rho)/2})=O(n^{1.765})$ against an oblivious adversary.
  \end{enumerate}
\end{corollary}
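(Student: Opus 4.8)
The plan is to derive this corollary directly from Theorem~\ref{t:tree-reporting} by instantiating its abstract decremental strongly connected components data structure $\decscc$ with two concrete choices, one deterministic and one randomized. Recall that Theorem~\ref{t:tree-reporting} takes any decremental $\decscc$ that (1)~maintains the SCCs $\sccs$ explicitly and (2)~reports, for a chosen $S\in\sccs$, a strongly connected subgraph $Z\subseteq G[S]$ with $V(Z)=S$ in $O(|S|\cdot n^{(1+\rho)/2})$ time, and turns it into a fully dynamic single-source reachability tree reporting structure with amortized update time and worst-case query time $\Ot\!\left(\sqrt{T(n,m)\cdot n}+n^{(3+\rho)/2}\right)$, inheriting the adversary model of $\decscc$. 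So the whole proof is a matter of choosing $\decscc$ and simplifying.

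For the adaptive bound, I would plug in the deterministic decremental SCC algorithm of~\cite{BernsteinGS20}, which has total update time $T(n,m)=mn^{2/3+o(1)}$ and, as noted in the discussion following Theorem~\ref{t:tree-reporting}, supports strongly connected subgraph reporting in $|S|\cdot n^{o(1)}$ time; since $n^{o(1)}=O(n^{(1+\rho)/2})$, requirement~(2) holds with room to spare, and determinism yields the adaptive guarantee for free. Substituting into $\sqrt{T(n,m)\cdot n}$ gives $\sqrt{mn^{5/3+o(1)}}=m^{1/2}n^{5/6+o(1)}$, so the update/query bound becomes $\Ot\!\left(m^{1/2}n^{5/6+o(1)}+n^{(3+\rho)/2}\right)$; for $m\le n^2$ this is $O(n^{1+5/6+o(1)})=O(n^{1.834})$, the $n^{(3+\rho)/2}=O(n^{1.765})$ term being absorbed.

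For the oblivious bound, I would instead use the near-optimal data structure of~\cite{BernsteinPW19} (Theorem~\ref{t:decscc}) with $T(n,m)=\Ot(n+m)$. As stated it only maintains $\sccs$ explicitly, so the one step that needs a short argument is augmenting it to also report, within $|S|\cdot n^{o(1)}$ time, a sparse strongly connected \emph{subgraph} on $S$ --- for instance, the union of an in-tree and an out-tree rooted at a fixed vertex of $S$, obtained by tracing the reachability paths the data structure already maintains inside $S$ --- but only against an \emph{oblivious} adversary, since those traced paths can leak the internal randomness. Then $\sqrt{T(n,m)\cdot n}=\Ot(\sqrt{n^2+mn})=\Ot(n^{1.5})$ for $m\le n^2$, which is dominated by $n^{(3+\rho)/2}$, giving amortized update time and worst-case query time $\Ot(n^{(3+\rho)/2})=O(n^{1.765})$ against an oblivious adversary.

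The main obstacle is not the arithmetic but the bookkeeping behind requirement~(2) of Theorem~\ref{t:tree-reporting} for each candidate $\decscc$: confirming that~\cite{BernsteinGS20} genuinely exposes strongly connected \emph{subgraph} reporting (not merely path reporting) deterministically, and that the algorithm of~\cite{BernsteinPW19} can be adapted to do the same, obliviously, without increasing its $\Ot(n+m)$ total update time by more than $n^{o(1)}$ factors. Once these are in place, both stated bounds follow by substituting $T(n,m)$ into $\sqrt{T(n,m)\cdot n}+n^{(3+\rho)/2}$ and simplifying under $m\le n^2$, and the adversary guarantees are inherited verbatim from Theorem~\ref{t:tree-reporting}.
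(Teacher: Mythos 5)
Your approach matches the paper's: both bounds follow by instantiating Theorem~\ref{t:tree-reporting} with the decremental SCC data structures of~\cite{BernsteinGS20} (deterministic, hence adaptive) and~\cite{BernsteinPW19} (Las Vegas, oblivious), with identical arithmetic and adversary inheritance. For the step you explicitly defer --- that~\cite{BernsteinGS20} supports sparse strongly connected subgraph reporting in $|S|\cdot n^{o(1)}$ time --- the paper fills it in by picking a root $r\in S$ and incrementally growing an in-tree (and symmetrically an out-tree) via the data structure's worst-case $n^{o(1)}$-per-edge path-tracing queries, stopping each trace as soon as it hits the current tree so the total work is charged to newly spanned vertices; and for~\cite{BernsteinPW19} the paper notes that a spanning in-tree and out-tree per SCC are already maintained internally, so no tracing is required (your ``trace the maintained reachability paths'' phrasing is a slight overstatement of the work needed, but harmless).
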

\begin{proof}
  For the former item,
  we note that the \emph{deterministic} decremental strongly connected components data structure
  with total update time $mn^{2/3+o(1)}$~\cite{BernsteinGS20}
  is actually capable of finding the first edge on some simple $s\to t$ path, where $s$ and $t$ are strongly connected,
  in $n^{o(1)}$ \emph{worst-case time}.
  As a result, for any strongly connected component $S$, we can find a sparse strongly connected subgraph
  of $G[S]$ spanning vertices $S$ in $|S|\cdot n^{o(1)}$ time as follows.
  Pick some root $r\in S$. The subgraph will consist of an out-tree from $r$, and an in-tree to $r$,
  both within $G[S]$.
  Let us focus on finding an in-tree $T_S$, since an out-tree can be found by proceeding identically
  on the reverse graph.
  Initially, $T_S=(\{r\},\emptyset)$. While there exists some $x\in S\setminus V(T_S)$, pick
  an arbitrary such $x$.
  Issue a query about an $x\to r$ path in $G[S]$ to $\decscc$ -- the query will produce, one by one,
  the subsequent edges $e_1=x_1y_1,e_2=x_2y_2,\ldots,$ of some $x\to r$ path,
  each in $n^{o(1)}$ worst-case time.
  We stop when the algorithm outputs, for the first time, an edge such $e_j$ such that $y_j\in V(T_S)$.
  Observe that then the algorithm terminates in $j\cdot n^{o(1)}$ time.
  Moreover, by connecting the path $e_1\ldots e_j$ to the tree $T_S$, we make
  $T_S$ span $j$ more vertices.
  As a result, $V(T_S)$ will finally grow to size $|S|$ and thus constructing
  $T_S$ will require $|S|\cdot n^{o(1)}$ time, which is $O(|S|\cdot n^{(1+\rho)/2})$, as desired.

  Consider item 2.
  The near-optimal data structure of~\cite{BernsteinPW19} (Theorem~\ref{t:decscc})
  actually maintains, for each strongly connected component $S\in\sccs$, a spanning in-tree and a spanning out-tree.
  Combined, those can serve as a sparse subgraph of $G[S]$ spanning $S$ that is strongly connected.
  Unfortunately, these maintained trees might reveal the random choices made by the data structure.
  So the tree reporting in this case works against an oblivious adversary only.
\end{proof}

\section{Incremental reachability}\label{s:incremental-reach}
In this section we show deterministic incremental reachability algorithms with subquadratic update and query bounds.
We start with the following technical lemma.

\begin{lemma}\label{l:trees-recompute}
  Suppose for each $s\in V$ we are given a reachability tree $T_G(s)$ from $s$ in $G$.
  Let $W$ be some subset of $V$ of size $O(n^\alpha)$, where $\alpha\in [0,1]$.
  Let $E^+\subseteq V\times W$.
  Then, one can construct the reachability trees $T_{G'}(s)\subseteq G\cup E^+$ for all $s\in V$ deterministically
  in $\Ot(n^{\omega(1,\alpha,1)})$ time.
\end{lemma}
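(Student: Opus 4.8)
The plan is to reduce the recomputation of all reachability trees after inserting the edge set $E^+\subseteq V\times W$ to a single rectangular Boolean matrix product, using the fact that every new path must pass through $W$ (since all new edges have heads in $W$), and then to extract the actual trees using deterministic witness computation for Boolean matrix multiplication.

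First I would form the $n\times n$ Boolean reachability matrix $M$ of $G$, where $M[u][v]=1$ iff $v\in V(T_G(u))$; this is read off directly from the given trees in $O(n^2)$ time. The key structural observation is that in $G'=G\cup E^+$, a vertex $v$ is reachable from $s$ iff either $v$ is already reachable from $s$ in $G$, or there is some $w\in W$ reachable from $s$ in $G'$ with $v$ reachable from $w$ in $G$. Iterating this, reachability in $G'$ restricted to the $|W|=O(n^\alpha)$ ``portal'' vertices can be computed by taking the transitive closure of the $|W|\times|W|$ matrix that records, for $w,w'\in W$, whether $w'$ is reachable from $w$ in $G$ together with using one edge of $E^+$ out of the component reachable from $w$ — this is an $O(n^{\alpha})$-dimensional transitive closure, computable in $\Ot(n^{\omega(1,\alpha,\alpha)})=\Ot(n^{\omega(1,\alpha,1)})$ time (and $\omega(1,\alpha,\alpha)\le\omega(1,\alpha,1)$). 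Call the resulting $V\times W$ matrix $N$, where $N[s][w]=1$ iff $w$ is reachable from $s$ in $G'$; computing $N$ from $M$ and the closure above is an $n\times n^\alpha$ by $n^\alpha\times n^\alpha$ Boolean product, again within the claimed bound. Finally, $M'[s][v]=M[s][v]\vee \bigvee_{w\in W}(N[s][w]\wedge M[w][v])$, which is one $n\times n^\alpha$ by $n^\alpha\times n$ Boolean product, costing $\Ot(n^{\omega(1,\alpha,1)})$.

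To actually output trees, not just the matrix $M'$, I would use deterministic witnesses for each of the Boolean products above (as cited in the paper, e.g.\ \cite{AlonGMN92}): a witness for $M'[s][v]=1$ via the product is a $w\in W$ with $N[s][w]=M[w][v]=1$, and witnesses for $N[s][w]=1$ trace back through the $O(n^\alpha)$-dimensional closure to an edge of $E^+$ and a portal-to-portal path. Concatenating a witnessing path $s\rightsquigarrow w$ (through portals, each leg either a segment of an old tree or an $E^+$ edge) with the path $w\rightsquigarrow v$ inside $T_G(w)$ gives, for every reachable $v$, a concrete $s\to v$ walk in $G'$; taking for each $v$ the last edge of this walk yields the edge set of $T_{G'}(s)$, exactly as in Observation~\ref{o:outtree}'s spirit but now we must argue acyclicity/tree structure explicitly — picking, for each $v\ne s$ reachable from $s$, the predecessor edge along a BFS/DFS of the already-computed reachable set using the witness edges, one gets a genuine out-tree. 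Each tree is assembled in $O(n)$ time once the witness matrices are available, so the total is $\Ot(n^{\omega(1,\alpha,1)})+O(n^2)=\Ot(n^{\omega(1,\alpha,1)})$, using $\omega(1,\alpha,1)\ge\omega(1,0,1)=2$.

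The main obstacle I expect is the bookkeeping for witnesses of the small transitive closure: a single step $M'[s][v]$ only gives \emph{one} portal $w$, but the walk $s\to w$ itself may need several $E^+$ edges, so one must recursively unwind witnesses in the $O(n^\alpha)\times O(n^\alpha)$ closure without the recursion depth or total work exceeding $O(n)$ per source. This is handled by noting that each such walk, after shortcutting, visits each portal at most once, so it has $O(n^\alpha)$ portal-legs and $O(n)$ total edges; a second, smaller concern is making sure the chosen predecessor edges form an acyclic tree rather than a subgraph with cycles (which, unlike the DAG case of Observation~\ref{o:outtree}, is not automatic in a general digraph), but this is resolved by selecting predecessors during a single graph search over the reachable vertex set $V(T_{G'}(s))$ rather than independently per vertex.
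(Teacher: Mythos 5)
Your algebraic skeleton — a transitive closure on the $O(n^\alpha)$ portals $W$, rectangular Boolean products with $M$, and deterministic witnesses in the spirit of~\cite{AlonGMN92} — is sound, fits the stated time budget, and mirrors the matrix-multiplication core of the paper's argument. The gap is in the last step, turning per-vertex witnesses into actual reachability \emph{trees}. If for each reachable $v$ you store only one incoming edge $e_v$ (the last edge of some witnessing $s\to v$ walk, i.e.\ the parent edge of $v$ in $T_G(w_{s,v})$), the resulting edge set need not be a tree rooted at $s$: take $b,c$ strongly connected in $G$ with $bc,cb\in E(G)$, both in $V_s$ but neither reachable from $s$ in $G$, and suppose they are assigned distinct portals $w_b\neq w_c$ with $c$ the parent of $b$ in $T_G(w_b)$ and $b$ the parent of $c$ in $T_G(w_c)$; then $\{e_b,e_c\}=\{cb,bc\}$ is a $2$-cycle disconnected from $s$. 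You flag this yourself, but the proposed remedy, ``selecting predecessors during a single graph search over the reachable vertex set,'' is not a complete algorithm: a search over the $O(n)$ witness edges fails on the example above, while a search over $G'[V_s]$ costs $\Theta(n^2)$ per source and $\Theta(n^3)$ in total, which exceeds $\Ot(n^{\omega(1,\alpha,1)})$ since $\omega(1,\alpha,1)\le\omega<3$. The claim ``each tree is assembled in $O(n)$ time'' is therefore not justified, and I do not see a way to repair it via a vertex ordering alone — the natural ``smallest admissible portal'' order breaks down precisely because the portals themselves can be mutually strongly connected in $G\cup E^+$.

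The missing ingredient is strongly connected component contraction. The paper first computes the SCCs $\sccs$ of $G^+=G\cup E^+$ once in $O(n^2)$ time, applies Observation~\ref{o:outtree} to the condensation $G^+/\sccs$ (a DAG, where picking an arbitrary in-edge per reachable vertex \emph{does} give a valid out-tree), and does the portal-closure and rectangular product over the $O(n^\alpha)$ SCCs meeting $W$ rather than over $W$ itself. Each SCC $S$ is then expanded via a fixed out-tree plus in-tree from/to a representative $v_S\in S$, and a single DFS on the resulting $O(n)$-size subgraph produces $T_{G^+}(s)$ for every $s\in S$. SCC contraction is exactly what sidesteps the cycle problem your proposal runs into and certifies the $O(n)$-per-source assembly cost; without it, the reduction to Observation~\ref{o:outtree} does not go through.
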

\begin{proof}
  Denote by $G^+$ the graph $G\cup E^+$.
  Let us first compute the strongly connected components of $\sccs$ of $G^+$.
  This takes $O(n^2)$ time.
  The overall idea is to compute the reachability trees in $G^+/\sccs$
  and then expand them in $O(n^2)$ time to form reachability trees in $G^+$.

  Indeed, suppose that we are given a reachability tree $T_{G^+/\sccs}(C)$ from some SCC $C\in \sccs$.
  Each edge $XY$ of $T_{G^+/\sccs}(C)$ corresponds to some edge $xy=e_{X,Y}\in E(G^+)$
  with $x\in X$ and $y\in Y$.
  For each $S\in\sccs$, fix some arbitrary vertex $v_S\in S$.
  Clearly, in $O(n^2)$ total time we can compute, for all $S\in \sccs$,
  a reachability out-tree $T_{G^+[S]}(v_S)^\text{out}$, and
  a reachability in-tree $T_{G^+[S]}(v_S)^\text{in}$.
  Now, to expand $T_{G^+/\sccs}(C)$ into a reachability tree $T_{G^+}(s)$ for
  an arbitrary $s\in C$, we can simply run depth-first search on
  the graph
  \begin{equation*}
    \bigcup_{C\in\sccs}\left( T_{G^+[C]}(v_C)^\text{out}\cup T_{G^+[C]}(v_C)^\text{in}\right)\cup E(T_{G^+/\sccs}(C)),
  \end{equation*}
  which clearly has $O(n)$ size. Through all $C\in\sccs$ and $s\in C$, this expansion takes $O(n^2)$ time.

  The graph $G^+/\sccs$ is a DAG. Therefore, by Observation~\ref{o:outtree}, in order
  to compute a reachability tree $T_{G^+/\sccs}(C)$ from $C\in\sccs$,
  it is enough enough to pick, for each $X\in \sccs\setminus \{C\}$,
  any incoming edge $e_{C,X}=YX\in E(G^+/\sccs)$ such that $Y$ is reachable from $C$ in $G^+/\sccs$.
  We achieve that as follows.

  First, we compute, for each $C\in\sccs$, a reachability tree $T_{G/\sccs}(C)$ from $C$
  in the graph $G$ with every SCC of $G^+$ contracted.
  Observe that the edges of $T_{G/\sccs}(C)$ can be picked from, say, $T_{G}(v_C)$
  in $O(n)$ time. Through all SCCs $C$, this amounts to $O(n^2)$ time.
  Note that if for some $X\in\sccs$ we have $X\in V(T_{G/\sccs}(C))$, then
  $e_{C,X}$ can be taken as the incoming edge of $X$ in $T_{G/\sccs}(C)$.

  But it might happen that despite $X\notin V(T_{G/\sccs}(C))$, $X$ is reachable
  from $C$ in $G^+/\sccs$. Observe that this is only possible if
  every $C\to X$ path in $G^+/\sccs$ goes through a vertex $Q\in\sccs$
  such that $Q\cap W\neq \emptyset$, or more specifically, if every path from a vertex of $C$ to a vertex of $X$
  in $G^+$ goes through an edge of $E^+$.
  We now discuss how to find the edges $e_{C,X}$ that fall into this case.

  Let $\mathcal{T}=\{C\in\mathcal{S}:C\cap W\neq\emptyset\}$ and set $\ell=|\mathcal{T}|$.
  Let $L$ be a boolean matrix with rows $\sccs$ and columns~$\mathcal{T}$ such that
  \begin{equation*}
    L_{C,D}=[\text{there exists an }C\to D\text{ path in }G^+/\sccs].
  \end{equation*}
  We now show that the matrix $L$ can be computed in $O(n^{\omega(1,\alpha,1)})$ time.

  Let the boolean matrix $A$ with rows and columns $\sccs$ be such that
  \begin{equation*}
    A_{X,Y}=[\text{there is an }X\to Y\text{ path in }G/\sccs\text{ or }xy\in E^+\text{ for some }x\in X\text{ and }y\in Y] \text{ for all }X,Y\in \sccs.
  \end{equation*}
  Note that the matrix $A$ can be filled by first setting $A_{X,Y}=[Y\in T_{G/\sccs}(X)]$  for all $X,Y$.
  Then it can be easily updated to capture the edges $E^+$ in $|E^+|=O(n\cdot |W|)=O(n^2)$ time.

  Now, to obtain the matrix $L$, we compute the boolean product $A[\sccs,\mathcal{T}]\cdot A[\mathcal{T},\mathcal{T}]^{2\ell+1}$,
  where by $A[I,J]$ we denote the submatrix of $A$ including only the rows of $I$ and columns of $J$.
  This can be done in $O(n^{\alpha\cdot \omega}\log{n}+n^{\omega(1,\alpha,1)})=\Ot(n^{\omega(1,\alpha,1)})$
  time using fast matrix multiplication and repeated squaring for exponentiation.  Let us prove this is correct.

  Define $M = A[\sccs,\mathcal{T}]\cdot A[\mathcal{T},\mathcal{T}]^{2\ell+1}$.  We need to show
  $L_{C,D}$ is true if and only if $M_{C,D}$ is, for all $C \in \sccs$ and $D \in \mathcal{T}$.
  First, observe that $A[\mathcal{T},\mathcal{T}]^{2\ell+1}_{C,D}$ is true if and only
  if there exists a $C\to D$ path in $G^+/\sccs$ since any $C\to D$ simple path in $G^+/\sccs$
  can be divided into at most $2\ell+1$ maximal subpaths
  either fully contained in $G/\sccs$, or consisting of an edge corresponding
  to some element of~$E^+$.
  This follows from the fact that the number of edges from $E^+$ on such
  a simple path is no larger than $|\mathcal{T}|=\ell$, and there can be no
  more than $\ell+1$ subpaths of the other type.
  Second, note that each $C\to D$ path in $G^+/\sccs$ contains a prefix
  fully contained in $G/\sccs$ whose other endpoint is in $\mathcal{T}$.

  Let the boolean matrix $B$ with rows $\mathcal{T}$ and columns $\sccs$ be such that
  \begin{equation*}
    B_{X,Y}=[A_{X,Y}=1\text{ and } X\neq Y]\text{ for all }X\in \mathcal{T}, Y\in \sccs.
  \end{equation*}

  Finally, we compute the boolean product $L\cdot B$ along with the corresponding witnesses.
  Specifically, for all $X,Y\in \sccs$, if $\left(L\cdot B\right)_{X,Y}=0$
  then $w_{X,Y}=\perp$, and otherwise, $w_{X,Y}\in \mathcal{T}$ is such
  that $L_{X,w_{X,Y}}=1$ and $B_{w_{X,Y},Y}=1$, which implies $w_{X,Y}\neq Y$.
  The witnesses for boolean matrix multiplication can be computed deterministically
  in nearly matrix multiplication time~\cite{AlonGMN92}.
  This applies also to rectangular matrix multiplication (see, e.g.,~\cite{GrandoniILPU21}).
  As a result, computing all $w_{X,Y}$ takes $\Ot(n^{\omega(1,\alpha,1)})$ time.

  Finally, for each $e_{C,X}$, where $C,X\in\sccs$, that has not been yet set,
  we will set it to the incoming edge of $X$ in $T_{G/\sccs}(w_{C,X})$ provided that $w_{C,X}\neq\perp$.

  Indeed, note that $w_{C,X}\neq\perp$ if and only if there
  exists a path $C\to w_{C,X}$ in $G^+/\sccs$, $w_{C,X}\neq X$, and there exists either (a) a $w_{C,X}\to X$ path in $G/\sccs$,
  or (b) an edge $xy\in E^+$ such that $x\in w_{C,X}$ and $y\in X$.
  Clearly, in both cases the respective paths certify that a path $C\to X$ exists in $G^+/\sccs$.
  In the other direction, if a $C\to X$ path in $G^+/\sccs$ goes through some last
  $Y\in \mathcal{T}$, and $Y$ is the final vertex of $\mathcal{T}\setminus\{X\}$ on this path,
  then $L_{C,Y}=1$ and $A[Y,X]=1$, and thus $w_{C,X}\neq\perp$.
  Finally, note that the incoming edge of $X$ in $T_{G/\sccs}(w_{C,X})$ comes
  from a component $Z\in \sccs$, $Z\neq X$ that $w_{C,X}$ can reach in $G/\sccs$, and
  thus from a component that $C$ can reach in $G^+/\sccs$, as desired.
\end{proof}

The incremental reachability data structures in this section (as well as the shortest paths
data structures in the following sections) will operate in phases of updates.
In the following, let $F=n^\alpha$
denote the phase length, to be set later. Let $G_0$ denote the graph $G$
at the beginning of the phase.
The $i$-th update of the phase ($i=1,\ldots,F$) inserts some incoming edges to a vertex $v_i$.
Let~$E_i$ denote these edges.
Define $E_k^+=\bigcup_{i=1}^k E_i$.

At all times, we store a collection of reachability trees $T_{G_0}(s)$ for all $s\in V$.
When a new phase starts, let $G_0'$ and $E'$ denote the graph $G$
at the beginning of the previous phase, and the set of edges inserted in the previous phase,
respectively.
The new reachability trees $T_{G_0}(s)$ from all $s\in V$
are recomputed based on the reachability
trees $T_{G_0'}(\cdot)$ and the edges $E'$.
Since the edges $E'$ have at most $F$ distinct heads, by Lemma~\ref{l:trees-recompute},
this takes $O(n^{\omega(1,\alpha,1)})$ time.
Since such a recomputation happens once every $n^\alpha$ updates,
the amortized per-update time spent on this is $O(n^{\omega(1,\alpha,1)-\alpha})$.

The amortization in all the data structures in this section (and in the following sections)
will come only from performing a certain costly rebuilding step that takes $T(n)$ worst-case time once per
a phase of updates; all the remaining computation will have desired worst-case bounds.
There is a well-known standard technique (also used in the previous dynamic algebraic graph algorithms, e.g.~\cite{BrandNS19, DemetrescuI05, Sankowski04}) for converting such
amortized bounds into worst-case bounds by maintaining two copies of the data structure
that switch their roles every $F/2$ updates. One copy is for handling
at most $F/2$ updates and answering queries, and the other is being gradually
reinitialized in chunks of $\Theta(T(n)/F)$ time.
Thus, in the following we will assume this technique is used wherever applicable
and state our bounds as worst-case instead of amortized.

We now describe two different trade-offs for incremental path reporting.

\subsection{Single-source reachability tree reporting}
In this section, we wish to support single-source reachability tree queries.
Suppose $k$ updates have already happened in this phase.

The algorithm is very simple. Let $s$ be a query vertex.
Let $H_s$ be a graph obtained by taking the union of the trees $T_{G_0}(v)$ for
$v\in \{s,v_1,\ldots,v_k\}$ and extending it with edges $E_k^+$.
To obtain a single-source reachability tree from $s$, we simply
perform a graph search on $H_s$.
This takes time linear in the size of $H_s$, which is $O(k\cdot n)=O(nF)=O(n^{1+\alpha})$.
The below lemma proves this correct.
\begin{lemma}\label{l:auxiliary-graph}
  For any $v\in V$, there exists an $s\to v$ path in $G$ if and only if there exists an $s\to v$ path in $H_s\subseteq G$.
\end{lemma}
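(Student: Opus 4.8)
The plan is to prove both directions of the equivalence, with the nontrivial direction being the forward one.

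The backward direction is immediate: since $H_s \subseteq G$ by construction (each tree $T_{G_0}(v)$ is a subgraph of $G_0 \subseteq G$, and the added edges $E_k^+$ are all present in the current graph $G$), any $s \to v$ path in $H_s$ is already an $s \to v$ path in $G$.

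For the forward direction, suppose $P = s \to v$ is a path in $G$. Since $G = G_0 \cup E_k^+$, I would decompose $P$ at the edges of $E_k^+$ that it uses. Write $P = P_0 e_1 P_1 e_2 \cdots e_r P_r$, where $e_1, \ldots, e_r$ are the edges of $E_k^+$ appearing on $P$ (in order) and each $P_i$ is a (possibly empty) path entirely within $G_0$. Each $e_j$ is an edge incident to some $v_{i_j} \in \{v_1, \ldots, v_k\}$ as its head, so $e_j \in E_k^+ \subseteq E(H_s)$. The key observation is that every maximal $G_0$-subpath of $P$ starts either at $s$ itself (for $P_0$) or at the head of some inserted edge, i.e., at some vertex in $\{v_1, \ldots, v_k\}$. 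By induction on $j$, I claim that the prefix of $P$ up to the end of $P_j$ is traceable in $H_s$: the base case is $P_0$, which is a path from $s$ inside $G_0$, hence contained in $T_{G_0}(s)$'s reachable set — more carefully, $T_{G_0}(s)$ contains \emph{some} $s \to w$ path for every $w$ reachable from $s$ in $G_0$, in particular for the endpoint $w$ of $P_0$, so there is an $s \to w$ path in $T_{G_0}(s) \subseteq H_s$. For the inductive step, after traversing $e_j$ we reach $v_{i_j}$, and $P_j$ is a path in $G_0$ from $v_{i_j}$ to the tail of $e_{j+1}$ (or to $v$ if $j = r$); since $T_{G_0}(v_{i_j}) \subseteq H_s$ contains an $s'$-to-that-endpoint path — precisely, contains a $v_{i_j} \to (\text{endpoint})$ path because that endpoint is reachable from $v_{i_j}$ in $G_0$ — we can continue. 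Concatenating these witnesses yields a walk from $s$ to $v$ in $H_s$, and any walk contains a path, so an $s \to v$ path exists in $H_s$.

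The main subtlety — the step I would be most careful about — is that $H_s$ is built as a \emph{union of reachability trees}, so it does not literally contain $P_j$ as a subpath; it only contains, for each tree $T_{G_0}(u)$, a canonical path from $u$ to each vertex reachable from $u$ in $G_0$. The argument therefore must route through the right tree at each stage (the tree rooted at the head of the most recently used inserted edge, or at $s$ initially) rather than trying to follow $P$ edge by edge. This is exactly why the heads $v_1, \ldots, v_k$ of inserted edges must be included as roots in the definition of $H_s$, and it is the only place where the structure of the construction genuinely matters. Everything else is bookkeeping about concatenating walks and extracting a path.
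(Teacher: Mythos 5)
Your proof is correct and follows essentially the same decomposition-and-rerouting argument as the paper: split $P$ at the $E_k^+$-edges, note each $G_0$-segment starts at $s$ or at a head $v_{i_j}$, and replace each segment by the corresponding path in $T_{G_0}(s)$ or $T_{G_0}(v_{i_j})$, all of which live in $H_s$. The only cosmetic difference is that the paper first passes to an inclusion-minimal path (so the used inserted edges have distinct heads and the construction literally yields a path), whereas you concatenate walks and then extract a path at the end --- both are fine, and your explicit remark about why one must reroute through the trees rather than follow $P$ verbatim correctly identifies the one genuinely load-bearing point.
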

\begin{proof}
  For the forward direction,  if the $s\to v$ path in $G$ existed at the beginning of the phase then
  the path is contained in the tree $T_{G_0}(s)$ itself and so is in $H_s$.  Otherwise, the path must have
  been created after some updates during the current phase.  Let an inclusion-wise minimal
  (in terms of edges of $E_k^+$) $s\to v$ path in $G$ use
  some $k' \le k$ newly inserted edges (with distinct heads) in some specific order
  $e_{i_1}, e_{i_2}, \ldots, e_{i_{k'}}$.
  We can construct this path also in $H_s$ as follows.  Starting at $s$ we continue on
  $T_{G_0}(s)$ till $u_{i_1}$,  use the edge $e_{i_1}= u_{i_1}v_{i_1}$ (as $E_k^+ \subseteq H_s$)
  and then continue on $T_{G_0}(v_{i_1})$ till we reach $e_{i_2}$ and so on.
  Since $H_s\subseteq G$, the other direction is immediate.
\end{proof}
The first trade-off is summarized as follows.
\begin{lemma}\label{l:reach-tree}
  Let $\alpha\in (0,1)$.
  There exists a deterministic incremental data structure supporting insertions of multiple incoming edges
  of a single vertex in $\Ot(n^{\omega(1,\alpha,1)-\alpha})$ worst-case time, and reporting a single-source reachability tree
  from any vertex in $O(n^{1+\alpha})$ worst-case time.
  In particular, for $\alpha=\rho$, both updates and queries can be performed in $\Ot(n^{1+\rho})=O(n^{1.529})$ worst-case time.
\end{lemma}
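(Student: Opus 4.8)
The plan is to assemble the components already developed above. The data structure proceeds in phases of $F = n^\alpha$ edge-set insertions, and throughout a phase it stores, for every $s \in V$, a reachability tree $T_{G_0}(s)$ in $G_0$, where $G_0$ denotes the graph at the start of the current phase. An update --- the insertion of some incoming edges $E_i$ of a vertex $v_i$ --- requires no work beyond appending $E_i$ to the buffer $E_k^+$ of edges inserted so far in the phase and recording $v_i$. When a phase ends and a new one begins, the new graph $G_0$ equals the previous phase's $G_0'$ augmented by the previous phase's inserted edge set $E'$; since $E'$ has at most $F = n^\alpha$ distinct heads, Lemma~\ref{l:trees-recompute} (with $W$ the set of these heads) recomputes all new trees $T_{G_0}(s) = T_{G_0' \cup E'}(s)$ deterministically in $O(n^{\omega(1,\alpha,1)})$ time from the old trees and $E'$.

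For the update bound, the only costly operation is this once-per-phase recomputation, costing $O(n^{\omega(1,\alpha,1)}/F) = O(n^{\omega(1,\alpha,1)-\alpha})$ amortized per update. To turn this into a worst-case bound I would invoke the standard two-copy technique described above: maintain two copies switching roles every $F/2$ updates, one answering queries and absorbing updates while the other is reinitialized in chunks of $\Theta(n^{\omega(1,\alpha,1)}/F) = \Theta(n^{\omega(1,\alpha,1)-\alpha})$ time per update. The one thing to check is that this does not inflate the query cost: the active copy always holds complete trees for a graph that is at most $F$ updates behind the current graph (one phase plus at most a half-phase of catch-up), so the number $k$ of buffered updates relevant to a query stays $O(F)$, which is all the query procedure needs.

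For the query bound, on a query for source $s$ I would form $H_s = E_k^+ \cup \bigcup_{v \in \{s, v_1, \ldots, v_k\}} T_{G_0}(v)$, run any linear-time graph search from $s$ on $H_s$, and output the resulting search tree. Correctness --- that this is indeed a reachability tree from $s$ in $G$ --- is exactly Lemma~\ref{l:auxiliary-graph} together with $H_s \subseteq G$. Since $k \le F = n^\alpha$, each tree has $O(n)$ edges, and $|E_k^+| = O(nF)$, the graph $H_s$ has $O(nF) = O(n^{1+\alpha})$ edges, so the query runs in $O(n^{1+\alpha})$ worst-case time.

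Finally I would specialize $\alpha = \rho$: by definition $\rho$ satisfies $\omega(1,\rho,1) = 1 + 2\rho$, hence $\omega(1,\rho,1) - \rho = 1 + \rho$, which matches the query exponent $1 + \alpha = 1 + \rho$, so both bounds become $\Ot(n^{1+\rho}) = O(n^{1.529})$ by the bound $\rho \approx 0.529$ of~\cite{GallU18}. I do not expect a genuine obstacle here --- all the substance sits in Lemmas~\ref{l:trees-recompute}~and~\ref{l:auxiliary-graph} --- the only point needing mild care is the bookkeeping in the amortized-to-worst-case conversion, namely confirming that the number of buffered updates a query must fold into $H_s$ stays $O(F)$ so that the $O(n^{1+\alpha})$ query bound survives the transformation.
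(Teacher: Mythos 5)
Your proof is correct and follows the same route as the paper: phases of $F=n^\alpha$ updates with per-phase tree recomputation via Lemma~\ref{l:trees-recompute}, worst-case conversion via the standard two-copy technique, and query answering by graph search on $H_s$ whose correctness is exactly Lemma~\ref{l:auxiliary-graph}. Your extra check that the two-copy deamortization leaves the number of buffered updates at $O(F)$ (so the $O(n^{1+\alpha})$ query bound survives) is a correct and useful observation that the paper leaves implicit.
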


\subsection{Path reporting with linear cost or better}\label{s:reach-path}
In this section, we only support less general \emph{single-edge} updates,
i.e., $E_i=\{e_i\}$, where $e_i=u_iv_i$.

Let us start by defining a certain family $(P)_{u,v\in V}$ of paths in $G$.
Fix some moment $k$ of the phase.
For any $u,v\in V$, let $l_{u,v}$ be the minimal $l\in \{0,\ldots,k\}$ such that a path $u\to v$ exists in $G_0\cup E_{l}^+$.
If no such $l$ exists, then set $l_{u,v}=\infty$.
The paths $P_{u,v}$ are defined inductively as follows.
If $l_{u,v}=0$, then $P_{u,v}$ is the $u\to v$ path in $T_{G_0}(u)$.
Otherwise, for $l=l_{u,v}$, we have $P_{u,v}=P_{u,u_{l}}\cdot e_l \cdot P_{v_{l},v}$.
Clearly, in this case we have $l_{u,{u_l}}<l$ and $l_{v_{l},v}<l$.

\begin{lemma}\label{l:incr-simple}
  For any $u,v\in V$, $P_{u,v}$ is a simple path.
\end{lemma}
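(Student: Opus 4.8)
I would prove Lemma~\ref{l:incr-simple} by induction on $l_{u,v}$, following the recursive definition of $P_{u,v}$. The base case $l_{u,v}=0$ is immediate: $P_{u,v}$ is a path in the tree $T_{G_0}(u)$, and tree paths are simple. For the inductive step, write $l=l_{u,v}\geq 1$, so $P_{u,v}=P_{u,u_l}\cdot e_l\cdot P_{v_l,v}$ with $l_{u,u_l}<l$ and $l_{v_l,v}<l$; by the inductive hypothesis both $P_{u,u_l}$ and $P_{v_l,v}$ are simple. It remains to show that the concatenation does not revisit a vertex, i.e.\ that $V(P_{u,u_l})$ and $V(P_{v_l,v})$ are disjoint, and that $e_l=u_lv_l$ does not close a repetition (which reduces to the same disjointness statement together with $u_l\notin V(P_{v_l,v})$ and $v_l\notin V(P_{u,u_l})$).

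The key observation making the disjointness work is the minimality in the definition of $l_{u,v}$. Suppose for contradiction some vertex $x$ lies on both $P_{u,u_l}$ and $P_{v_l,v}$. Then $P_{u,u_l}$ contains a $u\to x$ subpath and $P_{v_l,v}$ contains an $x\to v$ subpath; splicing them gives a $u\to v$ walk, hence a $u\to v$ path. Crucially, every edge used lies in $G_0\cup E^+_{l-1}$: the path $P_{u,u_l}$ only uses edges with index $<l$ by definition (it is $P$ of a pair whose $l$-value is $<l$, and one shows by a sub-induction that $P_{a,b}$ uses only edges of $E^+_{l_{a,b}}$ and of $G_0$), and likewise for $P_{v_l,v}$. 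So we obtain a $u\to v$ path in $G_0\cup E^+_{l-1}$, contradicting $l_{u,v}=l$. The same argument kills the possibility $u_l\in V(P_{v_l,v})$ (giving a $u\to u_l\to v$ path avoiding $e_l$, in $G_0\cup E^+_{l-1}$) and $v_l\in V(P_{u,u_l})$ (giving a $u\to v_l\to v$ path in $G_0\cup E^+_{l-1}$, again avoiding $e_l$ since $P_{u,u_l}$ and $P_{v_l,v}$ use only earlier edges). Hence the concatenation is simple.

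The one piece of bookkeeping I would make explicit as an auxiliary claim — and which I expect to be the only mildly delicate point — is the statement ``$P_{a,b}$ uses only edges of $G_0$ together with edges $e_i$ for $i\leq l_{a,b}$.'' This also follows by induction on $l_{a,b}$ directly from the recursive definition ($P_{a,b}=P_{a,u_l}\cdot e_l\cdot P_{v_l,b}$ with the two recursive pieces indexed by values $<l$), and it is exactly what lets me assert that any walk assembled from $P_{u,u_l}$ and $P_{v_l,v}$ lives in $G_0\cup E^+_{l-1}$ rather than merely in $G_0\cup E^+_l$. With that claim in hand, the contradiction with minimality of $l_{u,v}$ closes every case, so $P_{u,v}$ is simple.
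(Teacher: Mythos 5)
Your proof is correct and follows the same induction on $l_{u,v}$ as the paper: in the inductive step, a shared vertex $x$ on $P_{u,u_l}$ and $P_{v_l,v}$ would splice into a $u\to v$ path living in $G_0\cup E^+_{l-1}$, contradicting the minimality of $l_{u,v}$. The only difference is that you make explicit the auxiliary bookkeeping claim that $P_{a,b}$ uses only edges of $G_0$ and of $E^+_{l_{a,b}}$; the paper uses this fact implicitly when asserting the spliced path lies in $G_0\cup E^+_{l-1}$, so your version is simply a more self-contained rendering of the identical argument. (Also note that the separate cases $u_l\in V(P_{v_l,v})$ and $v_l\in V(P_{u,u_l})$ you enumerate are already instances of the disjointness claim, since $u_l$ and $v_l$ are themselves vertices of the respective subpaths.)
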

\begin{proof}
  We prove this by induction on $l_{u,v}$. The lemma is clear for $l_{u,v}=0$.
  Suppose $l=l_{u,v}\geq 1$. Then, $P_{u,v}=P_{u,u_{l}}\cdot e_l\cdot P_{v_{l},v}$.
  Since $u_{l}\neq v_{l}$, and by the inductive assumption, $P_{u,v}$ could be non-simple if
  we had $x\in V(P_{u,u_{l}})\cap V(P_{v_{l},v})\neq \emptyset$ for some $x$.
  But then, the $u\to x$ prefix of $P_{u,u_{l}}$ along with the $x\to v$ suffix
  of $P_{v_{l},v}$ would form, concatenated, a $u\to v$ path in $G_0\cup E_{l-1}^+$.
  This contradicts the definition of $l_{u,v}$.
\end{proof}

Apart from the reachability trees in $G_0$ (which encode all paths $P_{u,v}$ with $l_{u,v}=0$), the algorithm will also maintain
some of the paths~$P_{u,v}$ with $l_{u,v}\geq 1$.
We will have the following invariant (I): if for some $u\in V$ and $i=1,\ldots,k$,
$l_{u,u_i}<i$, then the path $P_{u,u_i}$ is stored.
Symmetrically, if for some $v\in V$ and $i=1,\ldots,k$,
$l_{v_i,v}<i$, then the path $P_{v_i,v}$ is stored.
The below lemma shows how these paths can be useful for answering arbitrary path queries.

\begin{lemma}\label{l:concatenate}
  Let $s,t\in V$ be such that $l_{s,t}\geq 1$. Then, a path $s\to t$ exists in $G=G_0\cup E_k^+$ if and only if for
  some $i=1,\ldots,k$ we have $l_{s,u_i}<i$ and $r_{v_i,t}<i$.

  Moreover, if $i$ is minimal such that $l_{s,u_i}<i$ and $r_{v_i,t}<i$, then $P_{s,t}=P_{s,u_i}\cdot e_i\cdot P_{v_i,t}$.
\end{lemma}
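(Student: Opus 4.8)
The plan is to prove both directions of the equivalence, then establish the structural claim about $P_{s,t}$, leaning on the inductive definition of the paths $P_{u,v}$, Lemma~\ref{l:incr-simple}, and invariant (I). For the ``$\Longleftarrow$'' direction: if $l_{s,u_i}<i$ then (by definition of $l$) there is an $s\to u_i$ path in $G_0\cup E_{i-1}^+\subseteq G$; similarly there is a $v_i\to t$ path in $G$; concatenating these with the edge $e_i=u_iv_i\in E_k^+$ yields an $s\to t$ walk in $G$, hence a path, so $l_{s,t}<\infty$ and the path exists. For the ``$\Longrightarrow$'' direction: suppose $l=l_{s,t}\geq 1$. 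By the definition of $l_{s,t}$, there is an $s\to t$ path $Q$ in $G_0\cup E_l^+$ but none in $G_0\cup E_{l-1}^+$; hence $Q$ must use at least one edge of $E_l^+\setminus E_{l-1}^+=\{e_l\}$, i.e.\ $e_l$ lies on $Q$. Writing $Q=Q_1\cdot e_l\cdot Q_2$ (using the \emph{first} occurrence of $e_l$ on $Q$, so that $Q_1$ avoids $e_l$), $Q_1$ is an $s\to u_l$ path avoiding $e_l$, hence contained in $G_0\cup E_{l-1}^+$, so $l_{s,u_l}\leq l-1<l$. By an entirely symmetric argument on the \emph{last} occurrence of $e_l$, $l_{v_l,t}<l$. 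Thus $i:=l$ witnesses the right-hand side. (A small subtlety: the last edge of $E_l^+$ inserted is exactly $e_l=u_lv_l$ by definition of the phase's single-edge updates, so $E_l^+\setminus E_{l-1}^+=\{e_l\}$ exactly; I should state this explicitly at the start.)

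Next I would argue that the minimal such $i$ equals $l_{s,t}$. Let $i^*$ be minimal with $l_{s,u_{i^*}}<i^*$ and $l_{v_{i^*},t}<i^*$. From the ``$\Longleftarrow$'' argument there is an $s\to t$ path in $G_0\cup E_{i^*}^+$, so $l_{s,t}\leq i^*$. Conversely, the ``$\Longrightarrow$'' argument applied with $l=l_{s,t}$ shows $i=l_{s,t}$ satisfies the two inequalities, so by minimality $i^*\leq l_{s,t}$. Hence $i^*=l_{s,t}=:l$. Now the inductive definition of $P_{s,t}$ (for the case $l_{s,t}=l\geq 1$) gives directly $P_{s,t}=P_{s,u_l}\cdot e_l\cdot P_{v_l,t}$, which is exactly $P_{s,u_{i^*}}\cdot e_{i^*}\cdot P_{v_{i^*},t}$. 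Here I would also note that since $l_{s,u_l}<l$ and $l_{v_l,t}<l$, invariant~(I) guarantees these two subpaths are actually stored, so the formula is algorithmically usable (this is the point the lemma will be invoked for). Lemma~\ref{l:incr-simple} ensures the result is a simple path, so no reconciliation of overlaps is needed.

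The main obstacle I anticipate is handling the ``first/last occurrence of $e_l$ on $Q$'' bookkeeping cleanly, together with making sure the decomposition $Q=Q_1\cdot e_l\cdot Q_2$ really places $Q_1$ and $Q_2$ inside $G_0\cup E_{l-1}^+$ and not merely inside $G_0\cup E_l^+$ --- this is where the argument could silently break if one takes an arbitrary occurrence of $e_l$ rather than the extremal ones. Everything else (the two concatenation directions, the minimality synchronization, and reading off the recursive formula) is routine given the definitions and the two prior lemmas.
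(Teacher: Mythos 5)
Your proof is correct and follows essentially the same route as the paper's: both take $i^*=l_{s,t}$, argue $l_{s,u_{i^*}},l_{v_{i^*},t}<i^*$, and derive the formula directly from the recursive definition of $P_{s,t}$ plus a minimality check. The only difference is that you explicitly justify the two strict inequalities via the first/last occurrence of $e_l$ on a witness path, whereas the paper treats this as immediate from the inductive definition ("clearly, in this case we have $l_{u,u_l}<l$ and $l_{v_l,v}<l$"); your fleshing-out is accurate and handles the (allowed-by-definition) possibility of repeated edges correctly. (Note: the $r_{v_i,t}$ in the statement is a typo for $l_{v_i,t}$, as your reading presumes.)
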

\begin{proof}
  The ``$\impliedby$'' direction is trivial. For the ``$\implies$'' part,
  let $i=l_{s,t}\leq k$. It is enough to recall that $P_{s,t}=P_{s,u_i}\cdot e_i\cdot P_{v_i,t}$,
  and in such a case $l_{s,u_i},l_{v_i,t}<i$.
  If for some $j<i$ we had $l_{s,u_j}<j$ and $r_{v_j,t}<j$, then we would have $l_{s,t}\leq j$, a contradiction.
\end{proof}

To allow for easy operation on the stored and reported paths $P_{u,v}$, each
$P_{u,v}$ is implemented using a purely functional concatenable deque with
$O(1)$ worst-case operation costs~\cite{KaplanT99}.
Then, paths can be concatenated in $O(1)$ worst-case time and iterated through in $O(1)$
worst-case time per edge as well (either forward or backwards; this corresponds to repeatedly popping
the first or the last element of the deque).
Note that when a phase starts, we can in fact initialize all the deques $P_{u,v}$
for $l_{u,v}=0$ in $O(n^2)$ worst-case time: for all vertices $v$ in $T_{G_0}(u)$ in the order
of their depth, if $e_{u,v}=zv$ is the incoming edge
of $v$ in $T_{G_0}(u)$, then $P_{u,v}$ can be set to~$P_{u,z}\cdot e_{u,v}$, which
costs a single concatenation.

If the invariant (I) is satisfied, to handle a query $(s,t)$, we first check whether
we already store~$P_{s,t}$. If so, we return it.
Otherwise, we search
for the minimum $i=1,\ldots,k$ such that both paths $P_{s,u_i}$ and $P_{v_i,t}$ are stored.
By Lemma~\ref{l:concatenate}, if no such $i$ is found, then no $s\to t$ path exists in~$G$.
Otherwise, we return $P_{s,u_i}\cdot e_i\cdot P_{v_i,t}$.
Clearly, this path can be assembled using two concatenate operations, and by Lemma~\ref{l:concatenate}
it is equal to $P_{s,t}$. The path $P_{s,t}$, in turn, is simple by Lemma~\ref{l:incr-simple}.
All the subsequent edges of the returned path $P_{s,t}$ can be clearly
returned in $O(1)$ worst-case time per edge.
As a result, the worst-case query cost is $O(k)=O(F)=O(n^{\alpha})$ plus linear in the
number of initial edges of $P_{s,t}$ that we want to report.

Finally, we discuss how the update procedure fixes invariant (I) after inserting an edge $e_i$.
We only need to ensure that we store the paths $P_{u,u_i}$ for all $u\in V$ and $P_{v_i,v}$ for each $v\in V$,
if the respective paths exist.
To this end, it is enough to run the query procedure $O(n)$ times and store
the returned ($O(1)$-space representation of the) path,
without iterating through its individual edges.
Consequently, the worst-case update time is $O(nF)=O(n^{1+\alpha})$.
We have thus proved the following.

\begin{lemma}\label{l:reach-path}
  There exists a deterministic incremental data structure supporting insertions of single edges
  in $\Ot(n^{1+\rho})=O(n^{1.529})$ time and path queries in $O(n^{\rho}+|P|)=O(n^{0.529}+|P|)$
  worst-case time, where $|P|$ is the size of the returned \emph{simple} path.
\end{lemma}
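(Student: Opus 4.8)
The plan is to instantiate the phase-based framework set up earlier in this section, with phase length $F=n^\alpha$, and to fix $\alpha=\rho$ at the very end. Since we only allow single-edge insertions, $E_i=\{e_i\}$ with $e_i=u_iv_i$. Throughout a phase I keep the collection of reachability trees $T_{G_0}(s)$ for all $s\in V$; at each phase boundary these are recomputed by Lemma~\ref{l:trees-recompute} applied with $W$ equal to the set of at most $F$ heads of the previous phase's inserted edges, in $O(n^{\omega(1,\alpha,1)})$ time. Amortized over the $F$ updates of a phase this is $O(n^{\omega(1,\alpha,1)-\alpha})$ per update, and the standard two-copies trick (already invoked generically in this section) turns it into a worst-case bound. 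For $\alpha=\rho$ we have $\omega(1,\rho,1)=1+2\rho$, so the recomputation costs $O(n^{1+\rho})$ worst-case per update.

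On top of the trees I would also maintain some of the paths $P_{u,v}$ defined inductively above, each stored as a purely functional concatenable deque with $O(1)$ worst-case operations~\cite{KaplanT99}, and I would enforce invariant~(I): after the $i$-th insertion of the current phase, for every $u\in V$ with $l_{u,u_i}<i$ the path $P_{u,u_i}$ is stored, and symmetrically for every $v\in V$ with $l_{v_i,v}<i$ the path $P_{v_i,v}$ is stored. At the start of a phase all deques $P_{u,v}$ with $l_{u,v}=0$ are built in $O(n^2)$ total time: for each $u$, process the vertices of $T_{G_0}(u)$ by increasing depth and set $P_{u,v}:=P_{u,z}\cdot e_{u,v}$, where $e_{u,v}=zv$ is the tree edge into $v$, which is one concatenation reusing the already built prefix. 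The point, which the persistence of the deques buys us, is that every newly created path $P_{u,v}=P_{u,u_l}\cdot e_l\cdot P_{v_l,v}$ is assembled with $O(1)$ concatenations of structures that already exist, so no path is ever copied edge by edge, and neither time nor space blows up.

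To answer a query $(s,t)$ after $k\le F$ insertions in the phase, I would first check whether $P_{s,t}$ is already stored (this covers in particular the case $l_{s,t}=0$, where the $s\to t$ path sits inside $T_{G_0}(s)$); if not, scan $i=1,\ldots,k$ for the minimum $i$ with both $P_{s,u_i}$ and $P_{v_i,t}$ stored. By Lemma~\ref{l:concatenate} such an $i$ exists if and only if an $s\to t$ path exists in $G$, and then $P_{s,u_i}\cdot e_i\cdot P_{v_i,t}=P_{s,t}$, which is simple by Lemma~\ref{l:incr-simple}; assembling it is two concatenations and enumerating its edges is $O(1)$ apiece. Hence a query costs $O(k)+O(|P|)=O(n^{\rho})+O(|P|)$. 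Finally the update procedure restores~(I) after inserting $e_i$ by calling the query procedure $O(n)$ times (once to obtain $P_{u,u_i}$ for each $u$, once to obtain $P_{v_i,v}$ for each $v$) and storing only the returned $O(1)$-size representations, which is $O(nF)=O(n^{1+\rho})$ worst-case time; the heavy recomputation at phase boundaries is the only amortized part and is handled by the two-copies trick.

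The hard part will be arguing that invariant~(I) can be maintained both correctly and within the claimed bounds: one must show, using the minimality of the indices $l_{u,v}$ (as in the proofs of Lemmas~\ref{l:incr-simple} and~\ref{l:concatenate}), that the $O(n)$ query calls performed per update indeed reconstruct exactly the paths the invariant demands, that each such call triggers only $O(1)$ deque operations beyond the scan over $i$, and that storing the $O(1)$-size handles rather than the expanded paths keeps the update cost at $O(nF)$. The concatenable, fully persistent deques of~\cite{KaplanT99} are exactly what makes this possible; without structural sharing the same scheme would blow up to $\Theta(n^3)$ per phase.
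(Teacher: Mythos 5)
Your proposal matches the paper's proof essentially step for step: the phase framework with $F=n^\alpha$ and Lemma~\ref{l:trees-recompute} recomputation, the inductive path family $P_{u,v}$ with invariant~(I), the concatenable-deque representation, the query-by-minimum-$i$ scan justified by Lemmas~\ref{l:incr-simple} and~\ref{l:concatenate}, and the $O(n)$ query calls per insertion to restore the invariant. The only difference is that your closing paragraph flags the invariant-maintenance argument as a remaining gap, whereas the paper treats it as immediate from Lemma~\ref{l:concatenate} (each query call either returns a stored path or assembles one from two stored handles with $O(1)$ concatenations), so nothing further is actually needed.
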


Lemmas~\ref{l:reach-tree}~and~\ref{l:reach-path} combined yield Theorem~\ref{t:incremental-reach-sum}.

\section{Incremental approximate shortest paths}\label{s:incremental-appr}
In this section we consider computing $(1+\eps)$-approximate shortest paths
in the incremental setting.

We will make use of the following result of Zwick~\cite{Zwick02}.
\begin{theorem}\label{t:zwick-appr}
  Let $A = \{a_{i,j}\}$ be an $n\times m$ matrix and let $B = \{b_{i,j}\}$ be an $m\times p$ matrix such that $m=O(n^\alpha)$ and $p=O(n^\beta)$,
  both with real entries from the interval $[1,C]$.
  Then, a \emph{(1+\eps)-approximate min-plus product} $A\star_\eps B=D=\{d_{i,j}\}$ of $A$ and $B$, along with the witnesses, can be
  computed deterministically in %\linebreak 
  $\Ot(n^{\omega(1,\alpha,\beta)}\cdot (1/\eps)\cdot \log(C/\eps))$ time.
  More formally, within this time bound one can compute values $d_{i,j}$ and $w_{i,j}$ such that for all
  $i=1,\ldots,n$, and $j=1,\ldots,p$ we have:
  \begin{equation*}
    \min_{k=1}^m \{a_{i,k}+b_{k,j}\}\leq d_{i,j}=a_{i,w_{i,j}}+b_{w_{i,j},j}\leq (1+\eps)\cdot \min_{k=1}^m \{a_{i,k}+b_{k,j}\}.
  \end{equation*}
\end{theorem}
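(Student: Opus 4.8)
The plan is to reduce the approximate min-plus product to $O(\log(C/\eps))$ instances of the \emph{exact} min-plus product of matrices with small non-negative integer entries, via a geometric scaling of magnitudes. Recall the standard subroutine: if $A',B'$ have entries in $\{0,1,\dots,M\}\cup\{\infty\}$, then the exact min-plus product $A'\star B'$ together with witnesses can be computed deterministically in $\Ot(M\cdot n^{\omega(1,\alpha,\beta)})$ time. Indeed, replace a finite entry $a$ by the monomial $x^a$ and $\infty$ by $0$, so that the $(i,j)$ entry of the ordinary rectangular matrix product over $\mathbb{Z}[x]$ equals $\sum_k x^{a'_{i,k}+b'_{k,j}}$, a polynomial of degree $\le 2M$ with coefficients in $\{0,\dots,m\}$; substituting a power of two larger than $m$ for $x$ turns each entry into an integer on $\Ot(M)$ bits, so the product is computable by rectangular matrix multiplication over $\mathbb{Z}$ in $\Ot(M\cdot n^{\omega(1,\alpha,\beta)})$ time, and the least exponent present in entry $(i,j)$ is $(A'\star B')_{i,j}$. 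A witness index attaining it can be extracted within the same bound using the deterministic witness-finding technique of~\cite{AlonGMN92}, which (cf.~\cite{GrandoniILPU21}) also applies to rectangular products.

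For the scaling, for each $r=1,\dots,\lceil\log_2(2C)\rceil$ I would form $\tilde A^{(r)},\tilde B^{(r)}$ by replacing each entry $a$ with $\lceil 4a/(\eps 2^r)\rceil$ if $a\le 2^r$ and with $\infty$ otherwise. Whenever the entry stays finite we have $a\in[1,2^r]$, so these entries are non-negative integers bounded by $M=\lceil 4/\eps\rceil=O(1/\eps)$; hence by the subroutine above the exact min-plus product $\tilde D^{(r)}$ with witnesses $\tilde w^{(r)}$ is computed in $\Ot(n^{\omega(1,\alpha,\beta)}/\eps)$ time. Unscaling, set $d^{(r)}_{i,j}:=(\eps 2^r/4)\cdot \tilde D^{(r)}_{i,j}$, let $w_{i,j}$ be $\tilde w^{(r)}_{i,j}$ for the $r$ minimizing $d^{(r)}_{i,j}$, and output $d_{i,j}:=a_{i,w_{i,j}}+b_{w_{i,j},j}$. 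Since the number of scales is $O(\log C)=O(\log(C/\eps))$ (using $\eps\le1$), the total cost is $\Ot(n^{\omega(1,\alpha,\beta)}\cdot(1/\eps)\cdot\log(C/\eps))$, as required.

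It remains to verify correctness. On one hand, as $\lceil 4a/(\eps 2^r)\rceil\cdot(\eps 2^r/4)\ge a$, for the witness $k$ of any finite $d^{(r)}_{i,j}$ we get $d^{(r)}_{i,j}=(\eps 2^r/4)(\tilde a^{(r)}_{i,k}+\tilde b^{(r)}_{k,j})\ge a_{i,k}+b_{k,j}\ge V_{i,j}:=\min_k\{a_{i,k}+b_{k,j}\}$; in particular the output satisfies $V_{i,j}\le d_{i,j}=a_{i,w_{i,j}}+b_{w_{i,j},j}\le \min_r d^{(r)}_{i,j}$. On the other hand, fix $(i,j)$, note $V_{i,j}\in[2,2C]$, and put $r^*=\lceil\log_2 V_{i,j}\rceil$, so $2^{r^*-1}<V_{i,j}\le 2^{r^*}$. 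If $k^*$ attains $V_{i,j}$ then $a_{i,k^*},b_{k^*,j}\le V_{i,j}\le 2^{r^*}$, so $k^*$ is finite at level $r^*$ and $\tilde a^{(r^*)}_{i,k^*}+\tilde b^{(r^*)}_{k^*,j}\le 4V_{i,j}/(\eps 2^{r^*})+2$, hence $d^{(r^*)}_{i,j}\le V_{i,j}+\eps 2^{r^*-1}<(1+\eps)V_{i,j}$. Combining, $V_{i,j}\le d_{i,j}\le d^{(r^*)}_{i,j}<(1+\eps)V_{i,j}$, which is exactly the claimed guarantee.

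The main obstacle — and the one genuinely clever point — is the choice of scaling window: keeping only entries $\le 2^r$ and rounding with granularity $\Theta(\eps 2^r)$ makes the rounded entries bounded by $O(1/\eps)$, \emph{independent of the inner dimension $m$}, rather than the $O(m/\eps)$ one would get from naive uniform rounding; this is what yields per-level cost $\Ot(n^{\omega(1,\alpha,\beta)}/\eps)$ instead of $\Ot(m\cdot n^{\omega(1,\alpha,\beta)}/\eps)$. The rest is routine bookkeeping of $\eps$ together with the deterministic, rectangular-compatible witness extraction inside the bounded-integer subroutine.
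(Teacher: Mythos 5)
The paper does not prove Theorem~\ref{t:zwick-appr}; it cites it directly to Zwick~\cite{Zwick02}. Your proof is correct and is essentially a reconstruction of Zwick's original argument: reduce to $O(\log(C/\eps))$ instances of the bounded exact min-plus product over small-integer matrices (itself computed by encoding entries as monomial exponents and using fast rectangular integer matrix multiplication, with deterministic witness extraction \`a la Alon--Galil--Margalit--Naor), where the key point — as you correctly identify — is that within each geometric scaling window the entries that survive can be rounded to integers in $\{0,\dots,O(1/\eps)\}$ independently of the inner dimension, so the per-scale cost is $\Ot(n^{\omega(1,\alpha,\beta)}/\eps)$. The bookkeeping of the scaling thresholds (keeping entries $\le 2^r$, rounding with granularity $\eps 2^r/4$), the lower bound via rounding-up, the choice $r^*=\lceil\log_2 V_{i,j}\rceil$, and the witness-based output ensuring $d_{i,j}=a_{i,w_{i,j}}+b_{w_{i,j},j}$ all check out.
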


Let $\eps'$ be the internal error parameter of our data structure to be fixed later as a function of $\eps$ and $n$.
We will use the following analogue of Lemma~\ref{l:trees-recompute}.

\begin{lemma}\label{l:appr-recompute}
  Suppose the matrix $D$ represents $(1+\eps')^k$-approximate all-pairs distances in $G$.
  Let the matrix $P$ contain the respective underlying paths, implemented as concatenable deques as in Section~\ref{s:reach-path}.
  Let~$W$ be some subset of $V$ of size $O(n^\alpha)$, where $\alpha\in [0,1]$.
  Let $E^+$ be some set of edges with real weights in $[1,C]$, and their heads in the set $W$.

  Then one can construct a matrix $D'$ representing $(1+\eps')^{k+O(1)}$-approximate all-pairs
  distances in $G\cup E^+$, along with a matrix of respective (possibly non-simple) paths $P'$ (stored analogously as the matrix $P$), deterministically in $\Ot(n^{\omega(1,\alpha,1)}\cdot (1/\eps')\cdot \log(C/\eps'))$ time.
\end{lemma}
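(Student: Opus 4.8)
The plan is to mimic the structure of the proof of Lemma~\ref{l:trees-recompute}, but replace Boolean reachability with $(1+\eps')$-approximate distances, and replace the role of strongly connected components (which collapse zero-``weight'' cycles of reachability) with the observation that here every edge weight is at least $1$, so there are no such degenerate cycles and we can work with the graph $G^+=G\cup E^+$ directly. First I would note that a shortest $u\to v$ path in $G^+$ can be decomposed according to how it interacts with the at most $|W|=O(n^\alpha)$ distinct heads of $E^+$: write $W'\supseteq W$ for the union of $W$ with the tails and heads of edges of $E^+$ (still $O(n^\alpha)$ vertices), and observe that any $u\to v$ path in $G^+$ that uses at least one edge of $E^+$ can be broken into an alternating sequence of maximal subpaths that either lie entirely in $G$ or are single edges of $E^+$; since each such path may pass through at most $|W|$ vertices that are heads of $E^+$ edges, there are $O(n^\alpha)$ such pieces, but a simple $u\to v$ path uses each $E^+$-edge at most once, so in fact $O(|W|)$ alternations suffice. (Unlike the reachability case, paths here need not be simple; but an \emph{approximately shortest} path may be taken to have hop-length $O(n)$ since weights are $\geq 1$ and a shortest path is simple, so the alternation bound still applies to the shortest path we are approximating.)

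Concretely, I would proceed as follows. Form the matrices $D[V,W']$ and $D[W',V]$ (submatrices of the given $(1+\eps')^k$-approximate distance matrix $D$ for $G$), and an auxiliary $W'\times W'$ matrix $A$ with $A_{x,y}=\min(D_{x,y},\ \min\{\wei_{E^+}(xy)\})$ (taking the better of the approximate distance in $G$ and a one-hop $E^+$-edge, when such an edge $xy\in E^+$ exists), with $A_{x,x}=0$. By repeated squaring, compute $A^{*}:=A^{\lceil \log_2(2|W'|+1)\rceil}$-style min-plus power $A^{(2|W'|+1)}$ in $\Ot(n^{\alpha\omega}\cdot(1/\eps')\log(C/\eps'))$ time via $O(\log n)$ applications of Theorem~\ref{t:zwick-appr} on $n^\alpha\times n^\alpha\times n^\alpha$ matrices; by the alternation argument this gives a $(1+\eps')^{k+O(\log n)}$-approximation of the restricted-to-$W'$ distances in $G^+$ — and here is the main subtlety: $O(\log n)$ squarings accumulate $(1+\eps')^{O(\log n)}$ multiplicative error, which is \emph{not} of the form $(1+\eps')^{k+O(1)}$. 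To fix this I would instead use a \emph{balanced binary} exponentiation of depth $O(\log n)$ but reset the approximation: note that $D$ already encodes $(1+\eps')^k$-approximate $G$-distances, and I only need \emph{one} extra $(1+\eps')$ factor per ``level'' — so rather than $O(\log n)$ levels I would bound the number of $E^+$-edges on the shortest path by $O(|W|)=O(n^\alpha)$ and observe the shortest $u\to v$ path decomposes into $O(1)$ \emph{long} $G$-pieces only in the regime we care about... Actually the cleaner route, matching the phrasing ``$(1+\eps')^{k+O(1)}$'', is: do the min-plus power as a product of \emph{two} rectangular products — $D[V,W']\star_{\eps'}\big(B\star_{\eps'} D[W',V]\big)$ where $B$ is the $(1+\eps')^{O(1)}$-approximate all-pairs-within-$W'$ distance matrix in $G^+$ — and compute $B$ itself in $\Ot(n^{\alpha\omega})$ time by running a static incremental/min-plus algorithm on the $O(n^\alpha)$-vertex subgraph $G^+[W']$ together with the entries $D[W',W']$ as ``shortcut'' edges; since that subgraph has $n^\alpha$ vertices, an APSP-type computation there (even $\Ot(n^{3\alpha})$ Bellman–Ford, or a $(1+\eps')$-approximate min-plus closure à la Zwick) costs $\Ot(n^{\omega(1,\alpha,1)}\cdot(1/\eps')\log(C/\eps'))$, and it introduces only an $O(1)$ number of additional error levels.

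Having computed approximate $G^+$-distances, I would then set $D'_{u,v}=\min\{D_{u,v},\ \min_{x,y\in W'}\ D[V,W']_{u,x}+B_{x,y}+D[W',V]_{y,v}\}$, realized as the $\eps'$-approximate product above so it comes with \emph{witnesses} $w_{u,v}\in W'$ (or $\perp$, meaning the $G$-path in $D$ is used). The path matrix $P'$ is assembled from these witnesses exactly as in Section~\ref{s:reach-path}: if $w_{u,v}=\perp$ take $P'_{u,v}=P_{u,v}$; otherwise recover the pair $(x,y)\in W'^2$ realizing the value and set $P'_{u,v}=P_{u,x}\cdot \Pi_{x,y}\cdot P_{y,v}$, where $\Pi_{x,y}$ is the stored $x\to y$ path in $G^+[W']$ found when computing $B$ (itself a concatenation of $O(n^\alpha)$ stored $G$-subpaths from $P$ and single $E^+$-edges). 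Each assembly is $O(1)$ concatenable-deque splices, but the total work over all $n^2$ pairs to build the \emph{paths} is $O(n^2)$ — dominated by $\Ot(n^{\omega(1,\alpha,1)})$ for $\alpha>0$ — so the overall bound is $\Ot(n^{\omega(1,\alpha,1)}\cdot(1/\eps')\log(C/\eps'))$ as claimed. The hard part, as flagged, is keeping the error blow-up to an \emph{additive} $O(1)$ in the exponent of $(1+\eps')$ rather than multiplicative $O(\log n)$; doing the ``closure within $W'$'' as a single sub-problem on an $n^\alpha$-vertex graph (instead of $\log n$ rounds of squaring over the full vertex set) is precisely what buys this, and verifying that the decomposition of a shortest $G^+$-path into $W'$-bounded pieces is faithfully captured by $D[V,W']\star B\star D[W',V]$ — including the boundary cases $u\in W'$ or $v\in W'$, handled by padding $D[V,W']$ and $D[W',V]$ with the identity rows/columns of the min-plus semiring — is the routine but necessary correctness check.
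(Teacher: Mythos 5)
Your overall architecture matches the paper's: form a small within-$W$ (or within-$W'$) approximate-distance matrix for $G\cup E^+$ by repeated min-plus powering, then glue it to the rest of $V$ via two rectangular approximate min-plus products; you also correctly spot that the crux is preventing $O(\log n)$ squarings from accumulating a $(1+\eps')^{\Theta(\log n)}$ factor. What you do \emph{not} supply is the actual mechanism that the paper uses to kill that accumulation, and the substitutes you offer do not hold up. The paper's fix is to run Theorem~\ref{t:zwick-appr} internally with a finer parameter $\eps''=\eps'/\Theta(\log n)$: then $(1+\eps'')^{O(\log n)}=(1+\eps')^{O(1)}$, while the running time of each approximate min-plus product only grows by a $\polylog$ factor (since it scales as $\Ot((1/\eps)\log(C/\eps))$). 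Your diagnosis --- that ``doing the closure within $W'$ as a single sub-problem on an $n^\alpha$-vertex graph (instead of $\log n$ rounds of squaring over the full vertex set) is precisely what buys this'' --- is not correct: the paper also squares the $O(n^\alpha)\times O(n^\alpha)$ matrix $D^*[W,W]$, and the dimension of the matrix being squared has nothing to do with how much multiplicative error $O(\log n)$ applications of $\star_\eps$ accumulate. Your first concrete alternative, Floyd--Warshall/Bellman--Ford on the $O(n^\alpha)$-vertex shortcut graph, is indeed exact (no error issue) but runs in $\Theta(n^{3\alpha})$ time, which exceeds $\Ot(n^{\omega(1,\alpha,1)})$ for the large values of $\alpha$ that the paper actually needs (e.g.\ $\alpha\approx0.855$ in Lemma~\ref{l:approx-path-opt}, where $3\alpha\approx2.57>\omega(1,\alpha,1)\approx2.26$). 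Your second alternative, a ``$(1+\eps')$-approximate min-plus closure \`a la Zwick,'' would work, but only because Zwick's closure internally performs exactly the $\eps''$-rescaling trick you left unstated; asserting ``it introduces only an $O(1)$ number of additional error levels'' without saying why is precisely the step that needs an argument.

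A secondary issue: you enlarge $W$ to $W'$ by throwing in all tails of $E^+$ and assert $|W'|=O(n^\alpha)$. The lemma only bounds $|W|$ (the set of \emph{heads}); $|E^+|$, and hence the number of distinct tails, is not bounded by $O(n^\alpha)$ in the statement (indeed the analogous Lemma~\ref{l:trees-recompute} allows $E^+\subseteq V\times W$ of size up to $n\cdot|W|$). In the paper's applications of Lemma~\ref{l:appr-recompute} the updates are single edges so your size bound happens to hold, but as a proof of the lemma as stated this is an unjustified step; the paper deliberately works only with the head set $W$ (via the matrix $D^*$ with $D^*_{u,v}=\min(D_{u,v},\wei_{G\cup E^+}(uv))$), sidestepping the need to bound the tails.
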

\begin{proof}[Proof sketch.]
  The proof is completely analogous to that of Lemma~\ref{l:trees-recompute} and thus we only provide a sketch.
  Let $D^*$ be the matrix such that $D^*_{u,v}=\min(D_{u,v},\wei_{G\cup E^+}(uv)\}$.
  Let $\eps''=\eps/\Theta(\log{n})$ and $\ell=|W|$.
  First, one computes $(1+\eps')$-approximate distances between the vertices
  $W$ in $G\cup E^+$ by computing the $(2\ell+1)$-st power of the
  matrix $D^*[W,W]$ using the $(1+\eps'')$-approximate min-plus product from Theorem~\ref{t:zwick-appr}
  and repeated squaring.
  Since $\log_2{(2\ell+1)}=O(\log{n})$, this indeed yields
  $(1+\eps')^{k+O(1)}$-approximate distances between the vertices of~$W$ in $G\cup E^+$.
  Finally, the matrix $D'$ is obtained by taking the $(1+\eps')$-approximate min-plus product
  \begin{equation*}
    D^*[V,W]\star_\eps D^*[W,W]^{2\ell+1}\star_\eps D^*[W,V]
  \end{equation*}
  which introduces one more $(1+\eps')$ multiplicative error factor. This takes
  $\Ot(n^{\omega(1,\alpha,1)}\cdot (1/\eps')\cdot \log(C/\eps'))$ time by Theorem~\ref{t:zwick-appr}.

  The paths of the matrix $P'$ can be easily assembled from the paths in $P$
  and witnesses produced when computing the approximate min-plus product in $\Ot(n^2)$ time.
\end{proof}

In the following we will assume that whenever Lemma~\ref{l:appr-recompute} is applied,
we also compute the representation of the respective paths. For simplicity, we will refer to length-path inputs/outputs
of Lemma~\ref{l:appr-recompute} jointly as a \emph{matrix of approximate shortest paths}.

The algorithms in this section will operate in phases spanning $F=\Theta(n^\alpha)$ insertions as well.
At the beginning of each phase, we would like to recompute a matrix $D$ of approximate shortest paths
in~$G_0$ based on an analogous matrix from the previous phase.
However, we cannot simply recompute (approximate) shortest paths using Lemma~\ref{l:appr-recompute}
every $F$ updates: this would accumulate error so that after $q$ recomputations,
the multiplicative error would be $(1+\eps')^{\Theta(q)}$.
Since $q$ might be of order $\Omega(n^{1-\alpha})$,
to make the answers $(1+\eps)$-approximate, we would need to set $\eps'=\Ot(\eps/\poly(n))$
which would, in turn, increase the recomputation time by a polynomial factor.

Instead, we proceed as follows. We maintain the following invariant:
we store a matrix $D_b$, for $b=1,\ldots,\lceil \log_2(n/F)\rceil=b^*$,
such that when phase $j$ starts, $D_b$ is a matrix of $(1+\eps')^{O(b^*-b+1)}$-approximate shortest paths $D$\
of the graph $G$ plus the updates in phases $1,\ldots,l-1$,
where $l$ is the largest integer satisfying $l\leq j-2^{b-1}$ and $2^b\mid l$.

Let us now show how the invariant is maintained.
It is enough to guarantee that for all $b$ such that $2^b\mid j$,
when phase $j+2^{b-1}$ starts, the matrix $D_b$ is reset to an $(1+\eps')^{O(b^*-b+1)}$-approximate shortest paths matrix
capturing the updates in phases $1,\ldots,j-1$.

In the phase $j$, we do the following.
Suppose $b_j$ is the largest integer such that $2^{b_j}\mid j$.
For each $b=\min(b_j,b^*),\ldots,1$, we do the following.
If $b=b^*$, then define $D'$ to be a fresh $(1+\eps')$-approximate shortest paths matrix,
computed using Lemma~\ref{l:appr-recompute} with $G$ equal to an empty graph
and $E^+$ equal to all updates. 
Otherwise, if $b<b^*$,
set $D'$ to be computed from $D_{b+1}$ and $O(F\cdot 2^{b_j})$ updates in the phases
$j-2^{b_j},\ldots,j-1$ using Lemma~\ref{l:appr-recompute}.
Note that by the invariant, and since $2^{b+1}\mid j-2^{b}$,
at this point $D_{b+1}$ stores $(1+\eps')^{O(b^*-b)}$-approximate shortest paths after the updates in phases
$1,\ldots,j-2^{b}-1$.

In both cases, the computation of $D'$ is initiated when phase $j$ starts and is performed in equal time slices per each of the $O(2^{b}\cdot F)$ updates
in the phases $j,\ldots,j+2^{b-1}-1$, so that when
it is ready at the beginning of the phase $j+2^{b-1}$, we substitute
$D'$ for $D_{b}$.
It is also easy to see that $D'$ is $(1+\eps')^{O(b^*-b+1)}$-approximate
since it is obtained via a single application of Lemma~\ref{l:appr-recompute}
to a $(1+\eps')^{O(b^*-b)}$-approximate matrix.

For any $b=1,\ldots,b^*$, we apply Lemma~\ref{l:appr-recompute} to compute $D_b$ with $|W|=O(2^b\cdot n^\alpha)$
once per $2^b\cdot F$ updates.
Since this costs time proportional (up to polylogarithmic factors) to the cost of multiplying
an $n\times (2^b\cdot n^\alpha)$ by a $(2^b\cdot n^\alpha)\times n$ matrix using approximate min-plus product,
and such a task can be reduced to $2^b$ multiplications of $n\times n^\alpha$ and $n^\alpha\times n$ sized matrices,
this takes $\Ot(2^b\cdot n^{\omega(1,\alpha,1)})$ time.
This cost is distributed in equal parts through $2^{b-1}\cdot F$ updates,
so the worst-case per-update cost incurred by this computation is $\Ot(n^{\omega(1,\alpha,1)-\alpha}\cdot (1/\eps)\log(C/\eps))$.
Since $b^*=O(\log{n})$, through all $b$, the
worst-case recomputation cost per update remains
$\Ot(n^{\omega(1,\alpha,1)-\alpha}\cdot (1/\eps)\log(C/\eps))$.

Note that by the invariant, when a phase $j$ starts, $D_1$ contains approximate shortest paths
after updates in phases $1,\ldots,j-2$.
By applying Lemma~\ref{l:appr-recompute} once more to $D_1$ and the updates
in the phase $j-1$, we obtain in $\Ot(n^{\omega(1,\alpha,1)-\alpha}\cdot (1/\eps)\log{(C/\eps)})$
time $(1+\eps')^{\Theta(\log{n})+1}$-approximate shortest paths $D$ after updates
in phases $1,\ldots,j-1$.
Again, to convert this per-phase recomputation cost into a worst-case
per-update cost of $\Ot(n^{\omega(1,\alpha,1)-\alpha}\cdot (1/\eps)\log{(C/\eps)})$,
one proceeds in a standard way as described in Section~\ref{s:incremental-reach}
and also analogously to how the invariant is maintained.

Recall that we denote by $G_0$ the graph at the beginning of a phase
and $k\leq F$ is the number of updates issued in the current phase so far.

\subsection{Path reporting with linear cost or better}\label{s:approx-path}
We first describe a trade-off analogous to that of Section~\ref{s:reach-path}.
Suppose the updates are single-edge updates, i.e., $E_i=\{e_i\}$ for $e_i=u_iv_i$.

For each $u\in V$ and $i=1,\ldots,k$ we store
a $(1+\eps')^{O(\log{n})}$-approximate shortest path $P_{i,u}=u\to u_i$
in $G_0\cup E^+_{i-1}$ (if it exists).
Symmetrically, for each $v\in V$ and $i=1,\ldots,k$, we store
a $(1+\eps')^{O(\log{n})}$-approximate shortest path $Q_{i,v}=v_i\to v$
in $G_0\cup E^+_{i-1}$.

To answer an $s\to t$ shortest path query, we compute the shortest among the paths\linebreak
$\{D_{s,t}\}\cup \{P_{i,s}\cdot e_i\cdot Q_{i,t}:i=1,\ldots,k\}$.
This clearly takes $O(k)=O(n^\alpha)$ time.
The correctness follows easily by the requirements posed on the stored paths.
We stress that the above gives merely the path's length and a pointer to a concatenable deque storing the
underlying path $P$. Actually constructing $P$, which may turn out to be non-simple,
requires iterating through the entire deque in $O(|P|)$ time.
In general, if $C$ is large, this cost may be super-linear in $|V(P)|$ -- as opposed to what
happened in Section~\ref{s:reach-path}.

However, if $C=O(1)$, which happens e.g., for unweighted graphs, we have that
the length of the path can only be at most a constant factor larger than the hop-length
of a path.
As a result, in that case, even if the returned path $P$ is non-simple, $\eps=O(1)$ guarantees that $|P|$ is a
constant factor away from $|P'|$, where $P'$ is obtained from $P$ by eliminating cycles.

To process an update $e_k$, we need to compute the paths $P_{i,u}$ and $Q_{i,v}$ for $u,v\in V$.
We do this using the query procedure in $O(n^{1+\alpha})$ time.
It is crucial to observe that the query procedure is ``exact'' and thus
does not introduce additional multiplicative error.
In other words, if the stored paths $P_{i,\cdot},Q_{i,\cdot}$ were $(1+\eps')^\ell$-approximately shortest
in $G_0\cup E_{i-1}^+$,
the newly computed paths $P_{i,\cdot}$ and $Q_{i,\cdot}$ are 
$(1+\eps')^\ell$-approximately shortest in $G_0\cup E_k^+$ as well.

Since the paths are implemented using concatenable deques of~\cite{KaplanT99} as in Section~\ref{s:reach-path},
the asymptotic cost of computing their representation is the same as the cost
of computing their lengths.

By picking a suitable $\eps'=\eps/\Theta(\log{n})$, and $\alpha=\rho$, we obtain the following lemma.

\begin{lemma}\label{l:approx-path}
  Let $\eps\in (0,1)$.
  There exists a deterministic incremental data structure supporting insertions of single edges
  with real weights in $[1,C]$
  in $\Ot(n^{1+\rho}\cdot (1/\eps)\cdot \log{(C/\eps)})=$\linebreak $O(n^{1.529}\cdot (1/\eps)\cdot \log(C/\eps))$ worst-case time and $(1+\eps)$-approximate shortest path
  queries in \linebreak $O(n^{\rho}+|P|)=O(n^{0.529}+|P|)$
  time, where $|P|$ is the size of the returned \emph{not necessarily} simple path.
\end{lemma}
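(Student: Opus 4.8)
The plan is to instantiate the machinery developed in this section with $\alpha=\rho$ and an internal error parameter $\eps'=\eps/\Theta(\log n)$, and then merely collect the bounds. Concretely, I would set up the data structure exactly as described above: the hierarchy of matrices $D_1,\dots,D_{b^*}$ of approximate shortest paths maintained via Lemma~\ref{l:appr-recompute} with phase length $F=\Theta(n^\rho)$, together with, for each $u,v\in V$ and $i=1,\dots,k$, the stored $(1+\eps')^{O(\log n)}$-approximate paths $P_{i,u}=u\to u_i$ and $Q_{i,v}=v_i\to v$ in $G_0\cup E_{i-1}^+$, all represented as purely functional concatenable deques of~\cite{KaplanT99}. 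Since $\rho\ge 1/2$ we have $F\le n$, so $b^*=O(\log n)$ and the preceding analysis of the invariant applies verbatim.

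For the update time, I would invoke the per-update recomputation bound established above, namely $\Ot(n^{\omega(1,\alpha,1)-\alpha}\cdot(1/\eps)\log(C/\eps))$, made worst-case by the standard two-copies trick of Section~\ref{s:incremental-reach}, and plug in $\alpha=\rho$: by definition of $\rho$ we have $\omega(1,\rho,1)=1+2\rho$, so the exponent is $\omega(1,\rho,1)-\rho=1+\rho$. On top of this, restoring the invariant on the stored $P_{i,\cdot},Q_{i,\cdot}$ after inserting $e_k$ requires $O(n)$ calls to the query procedure, each costing $O(k)=O(n^\rho)$, i.e.\ $O(n^{1+\rho})$ worst-case, which is subsumed; the deque representations are produced within the same asymptotic cost as computing path lengths. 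Noting that $\log(C/\eps')=O(\log(C/\eps)+\log\log n)$ is absorbed into $\Ot(\cdot)$, the worst-case update time is $\Ot(n^{1+\rho}\cdot(1/\eps)\log(C/\eps))$.

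For the query time and correctness, given $s,t$ I would return the shortest among $\{D_{s,t}\}\cup\{P_{i,s}\cdot e_i\cdot Q_{i,t}:i=1,\dots,k\}$; selecting the best candidate (each being a length together with an $O(1)$-space deque pointer) takes $O(k)=O(n^\rho)$ time, and iterating through the chosen deque to emit the reported path $P$ takes $O(|P|)$ additional time, for $O(n^\rho+|P|)$ total. For the approximation guarantee, $D$ stores $(1+\eps')^{\Theta(\log n)}$-approximate distances and the stored prefix/suffix paths are $(1+\eps')^{O(\log n)}$-approximate, while the selection step introduces no further multiplicative error since it only concatenates stored paths and compares exact lengths; hence the reported length is within a $(1+\eps')^{\Theta(\log n)}$ factor of the true $s\to t$ distance, and choosing $\eps'=\eps/\Theta(\log n)$ with a small enough constant yields $(1+\eps')^{\Theta(\log n)}\le 1+\eps$ for $\eps\in(0,1)$. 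Determinism follows since every ingredient — Zwick's $(1+\eps)$-approximate min-plus product with deterministic witnesses (Theorem~\ref{t:zwick-appr}), deterministic witness extraction, and the deques — is deterministic.

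The only genuine difficulty here is the cross-phase error accumulation: naively reapplying Lemma~\ref{l:appr-recompute} every $F$ updates would compound the error to $(1+\eps')^{\Omega(n^{1-\rho})}$, forcing $\eps'$ down by a polynomial factor. But this is exactly what the hierarchical maintenance of $D_1,\dots,D_{b^*}$ already resolves (so that the final matrix $D$ passes through only $O(\log n)$ applications of Lemma~\ref{l:appr-recompute}), so the proof of the present lemma reduces to the parameter bookkeeping sketched above.
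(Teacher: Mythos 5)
Your proposal matches the paper's own argument essentially line for line: same hierarchical $D_b$ structure to cap cross-phase error at $(1+\eps')^{\Theta(\log n)}$, same stored prefix/suffix paths $P_{i,\cdot},Q_{i,\cdot}$ maintained via the query procedure, same selection-by-concatenation query, and the same parameter choices $\alpha=\rho$, $\eps'=\eps/\Theta(\log n)$. The bookkeeping (in particular absorbing the $\Theta(\log n)$ and $\log\log n$ factors into $\Ot(\cdot)$, and noting that $\omega(1,\rho,1)-\rho=1+\rho$) is correct, so this is the paper's proof.
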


\begin{remark}
  Lemma~\ref{l:approx-path} appears strictly more general than Lemma~\ref{l:reach-path},
  as it can as well handle reachability queries discussed in Section~\ref{s:reach-path} within the same bound, up to $\Ot(1)$ factors.
  However, the construction behind Lemma~\ref{l:reach-path} allows reporting
  the subsequent edges of a \emph{simple} path with worst-case $O(1)$-time overhead per each edge.
  Lemma~\ref{l:approx-path}, on the other hand, can only report edges
  of a \emph{non-necessarily simple} path  (albeit still approximately shortest in terms of length, and whose hop-length is bounded by its length since the minimum edge weight is $1$) very efficiently.

  It seems that if one wants a simple path to be reported, one needs to first construct
  the entire non-simple path, and only then eliminate cycles.
\end{remark}

\subsection{Path reporting with optimized update time}

In this section we sketch how one can balance the update and the query cost and
obtain a bound polynomially smaller than $O(n^{1+\rho})$.
Again, we consider single-edge updates only.

We subdivide each phase into subphases
of length $f=\Theta(n^\beta)$, where $f$ divides $F$.
Let $G_0'$ be the graph $G$ at the beginning of the current subphase.
Let $X$ be the set of endpoints $u_i$ in the current subphase.
Similarly, let $Y$ be the set of endpoints $v_i$ in the current subphase.
We will guarantee that the approximate shortest path queries regarding $G_0'$
can be answered in $O(n^\alpha)$ time using the data structure
of Section~\ref{s:approx-path}. After the subphase ends,
its edge updates will be ``merged'' into that data structure.

Throughout the subphase, we use a different approach for answering queries.
Namely, suppose that $k'$ updates $e_i=u_iv_i$ for $i=1,\ldots,k'$, were issued in the current subphase.
Then, construct an auxiliary graph $H_{k'}$ on the vertices $X\cup Y$ as follows.
First, the $k'$ edges $e_i$ are inserted to $H_{k'}$.
Next, for each $i=1,\ldots,k'$, $j=1,\ldots,k'$, there is a weighted edge $e_{i,j}=v_iu_j$
in $H_{k'}$ corresponding to an approximate shortest path from $v_i$ to $u_j$ in~$G_0'$ (if such a path
exists in $G_0'$).
The weight of each such edge, along with a pointer to the corresponding underlying path in~$G_0'$, can be computed
in $O(n^{\alpha})$ time by running a query algorithm from Section~\ref{s:approx-path}.

Crucially, observe that $H_{k'}$ can be obtained from $H_{k'-1}$ using $O(k')=O(n^\beta)$
edge insertions, i.e., in $O(n^{\alpha+\beta})$ time.

Now, if a query about an approximate shortest $s\to t$ path is issued, we first
construct a graph $H_{s,t}$ by adding to $H_{k'}$ edges
of the form $st$, $su_i$, $v_it$, for $i=1,\ldots,k'$,
corresponding, again, to approximate shortest paths between their respective endpoints
in $G_0'$ (if they exist).
As a result, $H_{s,t}$ can be obtained from $H_{k'}$ in $O(k'\cdot n^{\alpha})=O(n^{\alpha+\beta})$ time.
Finally, we answer the query by running Dijkstra's algorithm from $s$
on $H_{s,t}$ and returning the found path (lifted to a path in $G$ by concatenating the respective deques) in $O(n^{2\beta})$ time.
The correctness follows by the below lemma, whose easy proof is analogous to that of Lemma~\ref{l:auxiliary-graph}
and is thus omitted.
\begin{lemma}
  The shortest $s\to t$ path in $H_{s,t}$ corresponds to some $(1+\eps')^{O(\log{n})}$-approximately
  shortest path in $G$.
\end{lemma}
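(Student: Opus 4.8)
The plan is to bracket the length of the walk returned by the query procedure between the true $s\to t$ distance in $G$ and $(1+\eps')^{O(\log n)}$ times it, following the two directions in the proof of Lemma~\ref{l:auxiliary-graph}; here $G=G_0'\cup\{e_1,\dots,e_{k'}\}$ is the current graph and $e_i=u_iv_i$ are the subphase edges. For the first (lower-bound) direction, note that every edge of $H_{s,t}$ is of one of two kinds: a subphase edge $e_i$, carrying weight $\wei_G(e_i)$ and corresponding to the edge $e_i\in E(G)$; or a ``shortcut'' edge ($e_{i,j}=v_iu_j$, $su_i$, $v_it$, or $st$), carrying the length of an approximately shortest $G_0'$-path between its endpoints and corresponding to that path. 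Hence any $s\to t$ path $R$ in $H_{s,t}$ lifts — replace each shortcut edge by its underlying $G_0'$-path and keep the subphase edges — to an $s\to t$ walk $\widehat R$ in $G$ with $\operatorname{len}(\widehat R)=\wei_{H_{s,t}}(R)$; in particular $\wei_{H_{s,t}}(R)$ is at least the true $s\to t$ distance in $G$.

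For the second (upper-bound) direction I would exhibit one short $s\to t$ path in $H_{s,t}$. Fix a shortest $s\to t$ path $P^*$ in $G$; since all weights are at least $1$ we may take $P^*$ simple, so it traverses each subphase edge at most once, say $e_{i_1},\dots,e_{i_r}$ with $r\le k'$ in this order along $P^*$. This yields $P^*=Q_0\cdot e_{i_1}\cdot Q_1\cdots e_{i_r}\cdot Q_r$ with each $Q_a$ a (possibly empty) path lying entirely in $G_0'$: from $s$ to $u_{i_1}$ for $a=0$, from $v_{i_a}$ to $u_{i_{a+1}}$ for $0<a<r$, and from $v_{i_r}$ to $t$ for $a=r$ (and $P^*=Q_0$ is itself a $G_0'$-path when $r=0$). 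Because each $Q_a$ witnesses a path between its endpoints in $G_0'$, the matching shortcut edge ($su_{i_1}$, then $v_{i_a}u_{i_{a+1}}=e_{i_a,i_{a+1}}$, then $v_{i_r}t$; or $st$ when $r=0$) is present in $H_{s,t}$ and has weight at most $(1+\eps')^{O(\log n)}\cdot\operatorname{len}(Q_a)$ — exactly the guarantee of the paths supplied by the Section~\ref{s:approx-path} data structure when $H_{k'}$ and $H_{s,t}$ are built. Chaining these shortcut edges with $e_{i_1},\dots,e_{i_r}$, whose weights are reproduced exactly, gives an $s\to t$ path $R_0$ in $H_{s,t}$ with
\[
  \wei_{H_{s,t}}(R_0)\ \le\ (1+\eps')^{O(\log n)}\Bigl(\textstyle\sum_{a=0}^{r}\operatorname{len}(Q_a)+\sum_{b=1}^{r}\wei_G(e_{i_b})\Bigr)\ =\ (1+\eps')^{O(\log n)}\operatorname{len}(P^*),
\]
where we used $(1+\eps')^{O(\log n)}\ge 1$ on the exact terms. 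Since Dijkstra returns a minimum-weight $s\to t$ path $R$ in $H_{s,t}$, we get $\operatorname{len}(\widehat R)=\wei_{H_{s,t}}(R)\le\wei_{H_{s,t}}(R_0)\le(1+\eps')^{O(\log n)}\operatorname{len}(P^*)$, which is the claim.

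The one step that needs care — the analogue of the minimality subtlety in the proof of Lemma~\ref{l:auxiliary-graph} — is the decomposition of $P^*$: one must argue that $P^*$ can be chosen so that each subphase edge appears on it at most once, i.e.\ $r\le k'$, since this is precisely what guarantees that all the shortcut edges $v_{i_a}u_{i_{a+1}}=e_{i_a,i_{a+1}}$ needed to form $R_0$ are among the $e_{i,j}$ that were actually inserted into $H_{k'}$ (and hence into $H_{s,t}$). Everything else is bookkeeping: the same factor $(1+\eps')^{O(\log n)}$ multiplies every segment of $P^*$ and concatenating segments does not compound it, so no polynomial blow-up of the error occurs.
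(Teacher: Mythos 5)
Your proof is correct and follows the approach the paper has in mind: the paper explicitly omits this proof as ``analogous to that of Lemma~\ref{l:auxiliary-graph}'', and your two-direction argument (lift any $H_{s,t}$-path back to a $G$-walk for the lower bound; decompose a simple shortest $s\to t$ path in $G$ at its subphase edges and replace the maximal $G_0'$-segments by the corresponding shortcut edges for the upper bound) is exactly that analogue, with the extra bookkeeping that each shortcut edge carries a $(1+\eps')^{O(\log n)}$ multiplicative error and that concatenating segments only takes the maximum of these factors rather than compounding them.
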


Now let us describe how the edges inserted in a subphase are ``merged'' into the data
structure for answering queries in $G_0'$,
so that the state of the data structure is correct before the start of the next subphase.
If we proceeded as in Section~\ref{s:approx-path}, i.e., added these edges one by one,
we would end up spending $O(n^{1+\alpha})$ time per edge, which would not lead to any improvement at all.

Instead, we proceed as follows. Our goal is to compute such paths $P_{i,u}$, $Q_{i,v}$, $u,v\in V$, $i=1,\ldots,k'$, that
$P_{i,u}$ is an approximate shortest $u\to u_i$ path in $G_0'\cup \{u_iv_i:i=1,\ldots,k'\}$,
and $Q_{i,v}$ is an approximate shortest $v_i\to v$ path in that graph.

We first compute the approximate shortest paths between $X\cup Y$ in the graph $G_0'$.
Note that since the query algorithm in Section~\ref{s:approx-path} locates, for a single pair $(x,y)\in X\times Y$,
such an index $i\leq F$ that $P_{i,x}\cdot e_i\cdot Q_{i,y}$ is minimized,
doing it for all pairs $(x,y)$ is precisely computing an (exact) min-plus distance product.
We do not however need to compute it exactly; computing it approximately is sufficient
if we guarantee that not too much error will accumulate.
Let us forget, for a moment, about this issue.
Then, we can find approximate shortest paths matrix $R$ between the vertices $X\cup Y$ in $G_0'$
using Theorem~\ref{t:zwick-appr} in $\Ot(n^{\omega(\beta,\alpha,\beta)}\cdot (1/\eps)\cdot \log{(C/\eps)})$ time.
The matrix $R$ can be lifted to approximate shortest paths matrix $R^*$ between $X\cup Y$ in $G_0'\cup \{u_iv_i:i=1,\ldots,k'\}$
by repeated squaring of the matrix $R'$, where $R'_{x,y}$ is set to the shortest
out of the path $R_{x,y}$ and a single-edge path $xy$, if $xy=u_iv_i$ for some $i$
(this is completely analogous to the computation performed in the proof of Lemma~\ref{l:appr-recompute}).
Computing $R^*$ again takes $\Ot(n^{\omega(\beta,\beta,\beta)}\cdot (1/\eps)\cdot \log{(C/\eps)})$ time.

Next, we similarly use Lemma~\ref{l:appr-recompute} to compute approximate shortest paths matrices $P',Q'$ between pairs
in $(V\times (X\cup Y))\cup ((X\cup Y)\times V)$ in the graph $G_0'$; this takes\linebreak
$\Ot(n^{\omega(1,\alpha,\beta)}\cdot (1/\eps)\cdot \log{(C/\eps)})$ time
using the approximate distance product.
We then set $P'_{u_i,v_i}=\min(P'_{u_i,v_i},\wei_G(u_iv_i))$
and $Q'_{u_i,v_i}=\min(Q'_{u_i,v_i},\wei_G(u_iv_i))$.
Finally, to compute the approximate shortest paths $P_{i,u}$ in
$G_0'\cup \{u_iv_i:i=1,\ldots,k'\}$, it is enough to compute one more approximate distance
product of $P'$ and $R^*$.
Similarly, to compute approximate shortest paths $Q_{i,u}$,
it is enough to compute one more approximate distance product of $R^*$ and $Q'$.
Both these computations require \linebreak
$\Ot(n^{\omega(1,\beta,\beta)}\cdot (1/\eps)\cdot \log{(C/\eps)})$ time.

To conclude, the new subphase's updates can be merged into the data structure
answering queries in $\Ot(n^{\omega(1,\alpha,\beta)}\cdot (1/\eps)\cdot \log{(C/\eps)})$ time,
which is a cost incurred once per a subphase, i.e., the amortized time
spent on this is  $\Ot(n^{\omega(1,\alpha,\beta)-\beta}\cdot (1/\eps)\cdot \log{(C/\eps)})$.
This is however, a slight oversimplification:
to make this bound hold in the worst-case, and make sure that the error does not
accumulate too much (i.e., be as large as $(1+\eps')^{\Theta(F/f)}=(1+\eps')^{\poly(n)}$), we need to proceed similarly as in Section~\ref{s:approx-path} with matrices $D_b$:
merges of sizes $f,2f,4f,\ldots,2^i\cdot f,\ldots,\Theta(F)$ have
to be performed every $\Theta(2^i\cdot f)$ updates, so that the accumulated multiplicative
error stays $(1+\eps')^{\Theta(\log{n})}$.

The worst-case update time is $\Ot((n^{\omega(1,\alpha,1)-\alpha}+n^{\omega(1,\alpha,\beta)-\beta}+n^{\alpha+\beta})\cdot (1/\eps)\cdot \log(C/\eps))$,
whereas the query time for returning a representation of a $(1+\eps)$-approximate shortest path is $O(n^{\alpha+\beta})$.
This is the same trade-off as in~\cite[Theorem~4.2]{BrandNS19}, and with the currently
known bounds on $\omega(\cdot,\cdot,\cdot)$, one optimizes it
by setting $\alpha\approx0.855$ and $\beta\approx0.551$.
We thus obtain the following.

\begin{lemma}\label{l:approx-path-opt}
  Let $\eps\in (0,1)$.
  There exists a deterministic incremental data structure supporting insertions of single edges
  with real weights in $[1,C]$ in $O(n^{1.407}\cdot (1/\eps)\cdot \log(C/\eps))$
  worst-case time and $(1+\eps)$-approximate shortest path queries
  in $O(n^{1.407}+|P|)$ time, where $|P|$ is the size of the returned \emph{not necessarily} simple path.
\end{lemma}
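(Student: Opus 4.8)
The plan is to nest a second level of recomputation inside the phase structure already set up in Section~\ref{s:incremental-appr}. Keep $F=\Theta(n^\alpha)$ as the phase length and subdivide each phase into subphases of length $f=\Theta(n^\beta)$ with $f\mid F$; write $G_0'$ for the graph at the start of the current subphase, and let $X,Y$ collect the tails and heads of the updates issued in the current subphase. I would carry a data structure of Section~\ref{s:approx-path} that answers $(1+\eps')^{O(\log n)}$-approximate shortest path queries on $G_0'$ in $O(n^\alpha)$ time and never touch it inside a subphase; instead, after the $k'$-th update $e_{k'}=u_{k'}v_{k'}$ I maintain an auxiliary graph $H_{k'}$ on $X\cup Y$ whose edges are the $k'$ inserted edges together with, for every pair $(v_i,u_j)$, a weighted edge carrying an approximate $v_i\to u_j$ distance in $G_0'$ and a pointer to its underlying deque. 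Since $H_{k'}$ differs from $H_{k'-1}$ by $O(k')=O(n^\beta)$ edges, each produced by one $O(n^\alpha)$-time query, maintaining $H_{k'}$ costs $O(n^{\alpha+\beta})$ per update.

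To serve an $s\to t$ query I would build $H_{s,t}$ from $H_{k'}$ by adding the $O(k')$ edges $st$, $su_i$, $v_it$ (again carrying approximate $G_0'$-distances, in $O(n^{\alpha+\beta})$ total time), run Dijkstra from $s$ on the $O(n^\beta)$-vertex graph $H_{s,t}$ in $O(n^{2\beta})$ time, and lift the resulting path to $G$ by concatenating the stored deques. Correctness is the direct analogue of Lemma~\ref{l:auxiliary-graph}: any $s\to t$ walk in $G=G_0'\cup\{e_1,\dots,e_{k'}\}$ splits into maximal pieces inside $G_0'$ separated by subphase edges, and each such piece is represented, up to one more $(1+\eps')$ factor, by an edge of $H_{s,t}$.

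The hard part is merging a completed subphase's $k'=\Theta(n^\beta)$ updates back into the Section~\ref{s:approx-path} structure without paying the naive $O(n^{1+\alpha})$ per inserted edge. For this I would compute the paths $P_{i,u}$ (an approximate $u\to u_i$ path) and $Q_{i,v}$ (an approximate $v_i\to v$ path) in $G_0'\cup\{u_iv_i\}$ through a short chain of rectangular $(1+\eps'')$-approximate min-plus products (Theorem~\ref{t:zwick-appr}): first approximate distances among $X\cup Y$ in $G_0'$, then lift them to $G_0'\cup\{u_iv_i\}$ by $O(\log n)$ repeated squarings of a $\Theta(n^\beta)\times\Theta(n^\beta)$ matrix, then combine with the $n\times n^\beta$ products obtained via Lemma~\ref{l:appr-recompute}; the dominant term is $\Ot(n^{\omega(1,\alpha,\beta)}\cdot(1/\eps)\log(C/\eps))$, i.e.\ amortized $\Ot(n^{\omega(1,\alpha,\beta)-\beta}\cdot(1/\eps)\log(C/\eps))$ per update. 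The main obstacle, exactly as for the matrices $D_b$ earlier in Section~\ref{s:incremental-appr}, is keeping the error from accumulating: a flat scheme would merge $\Theta(F/f)=\poly(n)$ times and drive the error to $(1+\eps')^{\poly(n)}$, so I would schedule the merges hierarchically --- a merge of size $2^i f$ every $\Theta(2^i f)$ updates, for $i=0,\dots,\log(F/f)$ --- so that any state used by a query has absorbed only $O(\log n)$ merges and the multiplicative error stays $(1+\eps')^{\Theta(\log n)}$; choosing $\eps'=\eps/\Theta(\log n)$ then yields $(1+\eps)$-approximation, and the standard two-copies device converts each per-subphase (and per-phase) recomputation into a worst-case per-update cost.

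Collecting terms, the worst-case update time becomes $\Ot\big((n^{\omega(1,\alpha,1)-\alpha}+n^{\omega(1,\alpha,\beta)-\beta}+n^{\alpha+\beta})\cdot(1/\eps)\log(C/\eps)\big)$ and the query time is $O(n^{\alpha+\beta})$ plus the length of the reported path. This is the same trade-off as~\cite[Theorem~4.2]{BrandNS19}, and with the currently known bounds on $\omega(\cdot,\cdot,\cdot)$ it is optimized at $\alpha\approx0.855$ and $\beta\approx0.551$, which gives the claimed $O(n^{1.407})$ update and query bounds.
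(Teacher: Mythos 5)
Your proposal is correct and follows essentially the same approach as the paper's proof: the same subphase decomposition with the auxiliary graph $H_{k'}$ on $X\cup Y$, the same query routine via $H_{s,t}$ and Dijkstra, the same chain of rectangular approximate min-plus products (approximate $(X\cup Y)\times(X\cup Y)$ distances lifted by repeated squaring and then combined with $V\times(X\cup Y)$ products from Lemma~\ref{l:appr-recompute}) for the merge, and the same hierarchical rebuild schedule to bound error accumulation by $(1+\eps')^{\Theta(\log n)}$. The resulting trade-off and the choice $\alpha\approx 0.855$, $\beta\approx 0.551$ match the paper exactly.
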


Note that the $|P|$ term in the query time above can be neglected in the case
of unweighted graphs and $\eps=O(1)$, which can be used to obtain a deterministic incremental
reachability data structure supporting both single-edge updates and path queries
in $O(n^{1.407})$ worst-case time.

Lemmas~\ref{l:approx-path}~and~\ref{l:approx-path-opt} together yield Theorem~\ref{t:incremental-appr-sum}.

\section{Incremental exact shortest paths}\label{s:incremental-exact}
In this section we work with incremental unweighted graphs and our goal is to support
computing shortest paths exactly.
We will consider two settings:
\begin{itemize}
  \item single-edge updates and single-pair shortest path queries,
  \item single-vertex incoming edges updates and single-source shortest paths tree queries.
\end{itemize}

The data structures maintained by the algorithm will be the same
for both settings. We will use the following analogue of Lemma~\ref{l:trees-recompute}.

\begin{lemma}\label{l:exact-recompute}
  Let $h\leq n$ be an integer.
  Suppose the matrix $D$ represents exact all-pairs distances in~$G$ for pairs
  of vertices whose distance does not exceed the threshold $h$.
  Let the matrix $P$ contain the respective underlying paths, implemented as concatenable deques as in Section~\ref{s:reach-path}.
  Let~$W\subseteq V$ be of size $O(n^\alpha)$, where $\alpha\in [0,1]$.
  Let $E^+$ be some set of edges with heads in the set $W$.

  One can construct an analogous matrix $D'$ representing exact
  distances that do not exceed~$h$ in $G\cup E^+$, along with a matrix of respective simple paths $P'$,
  deterministically in $\Ot(h\cdot n^{\omega(1,\alpha,1)})$ time.
\end{lemma}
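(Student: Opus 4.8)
The plan is to mimic the proof of \cref{l:trees-recompute}, replacing reachability trees by distance matrices truncated at the threshold $h$ and replacing boolean matrix multiplication by the \emph{bounded exact min-plus product} of Zwick~\cite{Zwick02}, which—crucially—also yields witnesses deterministically. Let $G^+ = G\cup E^+$. Since all edges have their heads in $W$ with $|W|=O(n^\alpha)$, any shortest path of hop-length at most $h$ in $G^+$ can be decomposed into maximal subpaths that are either entirely contained in $G$ (hence already represented, within the length budget, by $D$ and $P$) or consist of a single edge of $E^+$; the number of $E^+$-edges on such a short path is at most $\min(h,|W|)$. First I would set $D^*_{u,v} = \min(D_{u,v}, \wei_{G^+}(uv))$ and analogously record single-edge paths in the auxiliary path matrix, so that $D^*$ captures one ``layer'' of a path in $G^+$.

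Next I would compute bounded exact min-plus products, using the threshold-$h$ variant of Theorem~\ref{t:zwick-appr}-style bounds (as stated by Zwick), which costs $\Ot(h\cdot n^{\omega(1,\alpha,1)})$ for a product where the inner dimension is $O(n^\alpha)$ (the extra $h$ factor is the standard cost of restricting to short distances). Concretely, with $\ell=|W|$, I would form the $(2\ell+1)$-st power of $D^*[W,W]$ by repeated squaring—$O(\log n)$ products, each of the square $O(n^\alpha)\times O(n^\alpha)$ type and thus cheap—obtaining exact distances up to $h$ between vertices of $W$ in $G^+$. I would then take the product $D^*[V,W] \star D^*[W,W]^{2\ell+1} \star D^*[W,V]$ (capping results above $h$), each factor a rectangular bounded min-plus product computable in $\Ot(h\cdot n^{\omega(1,\alpha,1)})$ time, giving the matrix $D'$ of exact $\le h$ distances in $G^+$. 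The correctness argument is the same decomposition-into-$\le 2\ell+1$-segments argument used in \cref{l:trees-recompute}, except that here min-plus rather than boolean composition automatically enforces the length bound, so no separate reachability-tree bookkeeping is needed.

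For the path matrix $P'$: every bounded min-plus product in Zwick's algorithm can be made to return witnesses $w_{i,j}$ deterministically, so each entry of each product identifies a midpoint; recursing through the $O(\log n)$ squaring levels and the three outer factors, I would assemble $P'_{u,v}$ by concatenating the already-stored deques from $P$ and the single-edge deques for $E^+$, using the $O(1)$-time concatenation of the purely functional concatenable deques of~\cite{KaplanT99}. Since each of the $O(n^2)$ entries requires $O(\log n)$ concatenations and lookups, building $P'$ costs $\Ot(n^2)$ additional time, which is dominated by $\Ot(h\cdot n^{\omega(1,\alpha,1)})$. Each resulting path is simple because its hop-length is at most $h\le n$ and it is a genuine shortest path in $G^+$ (a shortest path is always simple). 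The main obstacle, as in the approximate case, is bookkeeping the witnesses through the repeated-squaring layers so that the reconstructed path has the claimed length exactly and stays within the $h$-hop budget at every intermediate node; this is handled by always capping partial distances at $h$ and discarding (setting to $\infty$) any composition exceeding it, so that no witness pointing ``through'' a forbidden longer subpath is ever recorded.
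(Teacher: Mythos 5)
Your proof takes essentially the same route as the paper: start from $D^*_{u,v}=\min(D_{u,v},\wei_{G\cup E^+}(uv))$, compute the $(2\ell+1)$-st min-plus power of $D^*[W,W]$ by repeated squaring, multiply by $D^*[V,W]$ on the left and $D^*[W,V]$ on the right, and throughout replace the $(1+\eps)$-approximate min-plus product of Theorem~\ref{t:zwick-appr} by Zwick's bounded exact min-plus product $\star_{\leq h}$ (\cite[Lemma~3.3]{Zwick02}), which runs in $\Ot(h\cdot n^{\omega(1,\alpha,1)})$ time and returns witnesses deterministically; paths in $P'$ are then assembled from witnesses and the concatenable deques exactly as in Lemma~\ref{l:appr-recompute}. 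Your added observations — that capping partial distances at $h$ at every level is safe because every prefix of a $\leq h$-length shortest path also has length $\leq h$, and that the assembled paths are simple because an exact shortest path under positive weights is simple — are correct and make explicit what the paper's sketch leaves implicit.
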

\begin{proof}[Proof sketch.]
  The proof is almost identical to that of Lemma~\ref{l:appr-recompute}, so we merely sketch it.
  One only needs to substitute every application of approximate min-plus product $\star_\eps$~(Theorem~\ref{t:zwick-appr})
  with exact min-plus product $\star_{\leq h}$ of matrices whose finite entries are positive
  integers not larger than~$h$~\cite[Lemma~3.3]{Zwick02}.
  This bounded min-plus product can be computed in $\Ot(hn^{\omega(1,\alpha,1)})$ time
  and can provide the corresponding witnesses deterministically within the same bound.
\end{proof}

The algorithm will again operate in phases of insertions of length $F=n^\alpha$.
Let again $G_0$ denote the graph at the beginning of the current phase, and let
$E_i$ be the edges inserted in the $i$-th update of the phase.
Denote by $v_i$ the head of the edges inserted in the $i$-th update.
If just a single edge is inserted in that update, denote by $u_i$ the tail of that edge.
Let $k$ denote the most recent update's number within the current phase and let $E_k^+=\bigcup_{i=1}^k E_i$.

Whenever a phase starts, we will recompute a matrix of shortest paths $P$ up to length $h=n/F$ using
Lemma~\ref{l:exact-recompute} in $\Ot(hn^{\omega(1,\alpha,1)})=\Ot(n^{\omega(1,\alpha,1)+1-\alpha})$ time,
based on the matrix from the beginning of the previous phase.
The next step is to compute a small hitting set $B\subseteq V$ of all paths in $P$
of length \emph{precisely} $n/F$.
Such a hitting set of size $|B|=\Ot(F)$ can be computed deterministically
using a greedy algorithm in time linear in the input size, i.e., in $\Ot(n^{3-\alpha})=\Ot(n^{\omega(1,\alpha,1)+1-\alpha})$ time in our case (see, e.g.,~\cite{Zwick02}).
So the amortized time spent on recomputation
is $\Ot(n^{\omega(1,\alpha,1)+1-2\alpha})$.
This can be turned into a worst-case bound in a standard way as was already done
multiple times in the previous sections.
We will use the following trivial observation.
\begin{observation}\label{o:hitting}
  Suppose $B$ is a hitting set of all shortest paths of length precisely $h$ in $G_0$.
  Then, $B\cup \{v_i:i=1,\ldots,k\}$ is a hitting set of all shortest paths of length
  precisely $h$ in $G_0\cup E_k^+$.
\end{observation}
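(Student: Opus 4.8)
The plan is to give a short case analysis on whether the shortest path in question uses a newly inserted edge. Fix any shortest path $P$ from $s$ to $t$ in $G_0\cup E_k^+$ with hop-length exactly $h$ (recall the graph is unweighted, so length equals hop-length). I want to argue that $P$ contains a vertex of $B\cup\{v_i:i=1,\ldots,k\}$.

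First I would handle the case where $P$ uses no edge of $E_k^+$. Then $P$ is entirely a path in $G_0$, so $\operatorname{dist}_{G_0}(s,t)\le h$. On the other hand, since $G_0\subseteq G_0\cup E_k^+$, distances only drop under insertions, so $\operatorname{dist}_{G_0}(s,t)\ge \operatorname{dist}_{G_0\cup E_k^+}(s,t)=h$. Hence $\operatorname{dist}_{G_0}(s,t)=h$ and $P$ is a shortest path of length precisely $h$ in $G_0$ as well; by the hypothesis on $B$, it contains a vertex of $B$. Second, if $P$ does use some edge $e\in E_k^+$, then $e$ was inserted in some update $i\in\{1,\ldots,k\}$, so its head is $v_i$, and therefore $v_i\in V(P)$ with $v_i\in\{v_1,\ldots,v_k\}$. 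In both cases $P$ meets $B\cup\{v_i:i=1,\ldots,k\}$, which is exactly the claim.

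There is no real obstacle here: the argument is purely definitional and the only thing worth stating carefully is the monotonicity of distances under edge insertions, which justifies that a hop-length-$h$ path lying in $G_0$ and shortest in the current graph is still shortest (hence still hit by $B$) in $G_0$. I would keep the write-up to a couple of sentences matching the two cases above.
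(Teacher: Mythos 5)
Your two-case argument is correct and is exactly the implicit reasoning behind the paper's unproved ``trivial observation'': if the shortest $h$-length path $P$ in $G_0\cup E_k^+$ avoids $E_k^+$, then monotonicity of distances under insertions shows $P$ is a shortest $h$-length path in $G_0$ and hence hit by $B$, while if $P$ traverses some $e\in E_i\subseteq E_k^+$ then $P$ passes through the head $v_i$ of $e$. Nothing further is needed.
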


\subsection{Single-edge updates, pair queries}

Let $X=\{u_i,v_i:i=1,\ldots,k\}$ denote the set of endpoints of edge insertions in the current phase.
We have $|X|=O(F)=O(n^\alpha)$.
We construct an auxiliary graph $H_{s,t}$ with vertices $\{s,t\}\cup X\cup B$.
The edge set of $H_{s,t}$ includes the edges $E_k^+$ and a weighted edge $uv$ for each $u,v\in V(H_{s,t})$
corresponding to the path $P_{u,v}$ in $G_0$, if $P_{u,v}$ has length no more than $h$.
Finally, we compute an $s\to t$ shortest path in $H_{s,t}$ using
Dijkstra's algorithm in $O(|X\cup B|^2)=\Ot(F^2)=\Ot(n^{2\alpha})$ time.
Clearly, such a path in $H_{s,t}$ can be lifted to a path
in $G=G_0\cup E^+_k$ using the paths stored in the matrix $P$.
Note that since the graph is unweighted, the size of the returned path is $O(n)$.

The correctness follows easily by Observation~\ref{o:hitting} and the fact that
all the relevant paths of length no more than $h$ in $G_0$ are represented in $H_{s,t}$.

\begin{lemma}\label{l:exact-path}
  Let $G$ be an unweighted digraph and let $\alpha\in (0.5,1)$.
  There exists a deterministic incremental data structure supporting single-edge insertions
  in $\Ot(n^{\omega(1,\alpha,1)+1-2\alpha})$ worst-case time~and queries reporting a shortest path from
  a given source to a given target in $\Ot(n^{2\alpha})$ time.
  In particular, for $\alpha=0.81$, both bounds are $O(n^{1.62})$.
\end{lemma}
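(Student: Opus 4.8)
The plan is to run the incremental data structure in phases of $F=n^\alpha$ single-edge insertions, using the bounded-distance/hitting-set machinery already assembled in this section. At the start of each phase I would invoke Lemma~\ref{l:exact-recompute} on the length-$h$ shortest-path matrix $P$ from the start of the previous phase together with that phase's inserted edges (which have at most $F$ distinct heads), with threshold $h=n/F=n^{1-\alpha}$, to obtain a fresh matrix $P$ of exact shortest paths of hop-length at most $h$ in $G_0$, along with the corresponding concatenable deques; this costs $\Ot(h\cdot n^{\omega(1,\alpha,1)})=\Ot(n^{\omega(1,\alpha,1)+1-\alpha})$. I then compute, by the standard linear-time greedy procedure, a deterministic hitting set $B\subseteq V$ of all hop-length-exactly-$h$ shortest paths recorded in $P$; since $P$ has $O(n^2)$ entries each of hop-length $O(h)$, this takes $\Ot(n^{3-\alpha})$ time and yields $|B|=\Ot(n/h)=\Ot(F)$. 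Amortized over the $F$ updates of the phase this recomputation is $\Ot(n^{\omega(1,\alpha,1)+1-2\alpha})$ per update, and the two-rotating-copies trick used repeatedly above converts it to a worst-case bound. During a phase, $P$ and $B$ are never touched; we merely record the set $X=\{u_i,v_i:i\le k\}$ of endpoints of the $k\le F$ insertions seen so far, at $O(1)$ cost per update.

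To answer a query $(s,t)$ after $k$ updates of the current phase, I would build the auxiliary graph $H_{s,t}$ on the vertex set $\{s,t\}\cup X\cup B$ (of size $\Ot(F)$), whose edges are the inserted edges $E_k^+$ together with, for every ordered pair $u,v\in V(H_{s,t})$ with $|P_{u,v}|\le h$, a shortcut edge $uv$ of weight $|P_{u,v}|$ read off from $P$. Running Dijkstra from $s$ on $H_{s,t}$ takes $O(|V(H_{s,t})|^2)=\Ot(n^{2\alpha})$ time; the returned $s\to t$ walk is lifted to an actual $s\to t$ path in $G=G_0\cup E_k^+$ by splicing in, for each shortcut edge, the deque stored in $P$. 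Since $G$ is unweighted the lifted path has $O(n)$ edges, which is absorbed into $\Ot(n^{2\alpha})$ because $\alpha>1/2$.

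For correctness I would verify both directions. Every $H_{s,t}$-path lifts to a genuine $s\to t$ walk in $G$ of the same total length, so Dijkstra never underestimates $\operatorname{dist}_G(s,t)$. Conversely, take a shortest $s\to t$ path $Q$ in $G$ and cut it into maximal subpaths lying in $G_0$, separated by edges of $E_k^+$ (whose endpoints all lie in $X$). Each such $G_0$-subpath $\sigma$ is a shortest path in $G$ between its endpoints, and since $\sigma\subseteq G_0\subseteq G$ this forces $\operatorname{dist}_{G_0}(\cdot)=|\sigma|$, i.e. $\sigma$ is $G_0$-shortest. If $|\sigma|>h$, then by Observation~\ref{o:hitting} the set $B\cup\{v_i:i\le k\}\subseteq V(H_{s,t})$ hits every hop-length-$h$ shortest path of $G$, hence every $h$-hop subpath of $\sigma$ contains a vertex of $V(H_{s,t})$; cutting $\sigma$ at these vertices splits it into $G_0$-shortest pieces of hop-length at most $h$, each recorded in $P$ and therefore present as a shortcut in $H_{s,t}$. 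Stitching these shortcuts together with the $E_k^+$ edges realizes $Q$ as a walk of length $|Q|$ in $H_{s,t}$, so Dijkstra outputs (and we lift) a genuine shortest $s\to t$ path.

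The main obstacle is the bookkeeping around the threshold $h$ and the use of a ``stale'' hitting set: $B$ is computed with respect to the $G_0$-paths in $P$ but applied to the larger graph $G=G_0\cup E_k^+$, and the argument only succeeds because (i) adding the $\Ot(F)$ fresh endpoints restores the hitting property (Observation~\ref{o:hitting}) and (ii) a $G$-shortest path's maximal $G_0$-subpaths are automatically $G_0$-shortest, hence faithfully represented in $P$ once broken into $\le h$-hop pieces. The time analysis is then routine: the bound is $\max\bigl(n^{\omega(1,\alpha,1)+1-2\alpha},\,n^{2\alpha}\bigr)$ up to polylogarithmic factors, and plugging $\alpha=0.81$ with the current rectangular matrix multiplication exponents makes both terms $O(n^{1.62})$.
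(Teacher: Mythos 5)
Your proposal matches the paper's construction exactly: phases of length $F=n^\alpha$, recomputation of the bounded-distance matrix $P$ via Lemma~\ref{l:exact-recompute} with $h=n/F$, a deterministic greedy hitting set $B$ of the stored hop-length-$h$ paths, the auxiliary graph $H_{s,t}$ on $\{s,t\}\cup X\cup B$ with $E_k^+$ and $P$-shortcut edges, and Dijkstra plus deque-splicing. The paper dispatches correctness in one sentence by citing Observation~\ref{o:hitting}; your expanded argument is in the same spirit (decompose at $E_k^+$ edges, then at hitting-set vertices), modulo a harmless off-by-one in the spacing of consecutive $V(H_{s,t})$ vertices along a long $G_0$-subpath, which the standard ``first $h$ hops, swap in the stored prefix'' argument would tighten.
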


\begin{remark}
The above lemma holds even for $\alpha\leq 0.5$, but then the update time is $\Omega(n^2)$.
However, $O(n^2)$ worst-case update time with $O(1)$ query time can be obtained rather trivially even for real-weighted graphs,
  so the data structure behind Lemma~\ref{l:exact-path} is inferior in that case.
\end{remark}

\subsection{Incoming edges updates, single-source queries}

We construct an auxiliary graph $H_s$ with vertices $V$. Let $Y=\{v_i:i=1,\ldots,k\}$.
The edge
set of~$H_s$ includes the edges $E_k^+$ (recall that $|E_k^+|\leq nk=O(n^{1+\alpha})$).
Moreover, for each $v\in B\cup Y\cup\{s\}$ and $w\in V$, we add a weighted edge
$vw$ representing the path $P_{v,w}$ in $G_0$, unless $P_{v,w}$ has length more than $h$.
As a result, the graph $H_s$ has $\Ot(nF)=\Ot(n^{1+\alpha})$ edges.

By proceeding similarly as in the proof of Lemma~\ref{l:auxiliary-graph},
it is also easy to show that $H_s$ preserves distances from $s$ in $G\cup E_k^+$.
All the required paths of length no more than $h$ are represented in $H_s$
by the definition of the hitting set $B$ and Observation~\ref{o:hitting}.

A shortest paths tree $T_s$ from $s$ in $H_s$ can be computed in $\Ot(n^{1+\alpha})$ time
using Dijkstra's algorithm.
Note that each edge in that tree corresponds to either a path of length $\leq h=O(n^{1-\alpha})$
in $G_0$, or a single edge in $E_k^+$. 
As a result, by expanding the $O(n)$ edges of $T_s$
into paths in $G_0$, we obtain a subgraph $T_s'\subseteq G$ with $\Ot(n^{2-\alpha})$ edges
that is guaranteed to contain some shortest $s\to t$ path in $G$ for any $t$.
Consequently, a shortest paths tree from $s$ in $G$ can be obtained by running
Dijkstra's algorithm once again on $T_s'$ in $\Ot(n^{2-\alpha})$ time.
For $\alpha\geq 0.5$ this is dominated by the $\Ot(n^{1+\alpha})$ bound.
Therefore, we obtain the following.

\begin{lemma}\label{l:exact-tree}
  Let $G$ be an unweighted digraph and let $\alpha\in (0.5,1)$.
  There exists a deterministic incremental data structure supporting insertions of incoming edges of a single vertex
  in \linebreak $\Ot(n^{\omega(1,\alpha,1)+1-2\alpha})$ worst-case time
  and queries reporting a single-source shortest paths tree from a given source in $\Ot(n^{1+\alpha})$ time.
  In particular, for $\alpha=0.724$, both bounds are $O(n^{1.724})$.
\end{lemma}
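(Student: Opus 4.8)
The proof mirrors the single-pair case (Lemma~\ref{l:exact-path}) but must return an entire single-source shortest paths tree rather than one path, so the auxiliary graph has to be built on the full vertex set $V$ rather than a small vertex subset. First I would recall the global setup: the algorithm runs in phases of length $F=n^\alpha$; at the start of each phase Lemma~\ref{l:exact-recompute} recomputes a matrix $P$ of exact shortest paths of hop-length at most $h=n/F=n^{1-\alpha}$ in $G_0$, represented as concatenable deques, in $\Ot(h\cdot n^{\omega(1,\alpha,1)})=\Ot(n^{\omega(1,\alpha,1)+1-\alpha})$ time, and a greedy hitting set $B$ of all length-exactly-$h$ shortest paths of size $|B|=\Ot(F)$ is computed in $\Ot(n^{3-\alpha})$ time; both costs amortize to $\Ot(n^{\omega(1,\alpha,1)+1-2\alpha})$ per update, and this is converted to a worst-case bound by the standard two-copies trick used throughout the paper. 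Updates within a phase each insert $\le n$ incoming edges of a single vertex $v_i$; after $k$ updates we have $E_k^+=\bigcup_{i=1}^k E_i$ with $|E_k^+|=O(nk)=O(n^{1+\alpha})$, and $Y=\{v_1,\dots,v_k\}$ with $|Y|=O(n^\alpha)$.

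\textbf{The query algorithm.} Given a source $s$, I build an auxiliary graph $H_s$ on vertex set $V$ whose edges are: (i) all of $E_k^+$; and (ii) for every $v\in B\cup Y\cup\{s\}$ and every $w\in V$, a weighted edge $vw$ of weight $|P_{v,w}|$ whenever $P_{v,w}$ has hop-length $\le h$ (carrying a pointer to the deque $P_{v,w}$). Since $|B\cup Y\cup\{s\}|=\Ot(n^\alpha)$, the graph $H_s$ has $\Ot(n^{1+\alpha})$ edges and is constructed in $\Ot(n^{1+\alpha})$ time. Correctness — that $\dist_{H_s}(s,t)=\dist_{G}(s,t)$ for all $t$ — follows by the argument of Lemma~\ref{l:auxiliary-graph} combined with Observation~\ref{o:hitting}: take a shortest $s\to t$ path $\Pi$ in $G=G_0\cup E_k^+$, split it at the edges of $E_k^+$ into maximal subpaths entirely within $G_0$; each such subpath starts at $s$ or at some head $v_i\in Y$, and within $G_0$ it decomposes into segments of hop-length $\le h$ whose breakpoints can be chosen to lie in $B\cup\{v_i,s\}$ because $B\cup\{v_i:i\le k\}$ hits every length-exactly-$h$ shortest path of $G_0\cup E_k^+$ (Observation~\ref{o:hitting}); each segment is an edge of type (ii) and each $E_k^+$-edge is an edge of type (i), so $\Pi$ is realized in $H_s$. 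The reverse inequality is immediate since every edge of $H_s$ is backed by a genuine path in $G$.

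\textbf{Extracting the tree and accounting.} I run Dijkstra from $s$ on $H_s$ in $\Ot(n^{1+\alpha})$ time to obtain a shortest-path tree $T_s$ in $H_s$ with $O(n)$ edges; each tree edge is either a single edge of $E_k^+$ or a deque-path of hop-length $\le h=O(n^{1-\alpha})$ in $G_0$. Expanding these $O(n)$ edges yields a subgraph $T_s'\subseteq G$ with $\Ot(n\cdot n^{1-\alpha})=\Ot(n^{2-\alpha})$ edges that provably contains a shortest $s\to t$ path for every $t$ reachable from $s$; a final Dijkstra run on $T_s'$ produces the actual single-source shortest paths tree of $G$ in $\Ot(n^{2-\alpha})$ time. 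For $\alpha\ge 1/2$ this is dominated by $\Ot(n^{1+\alpha})$, so the query time is $\Ot(n^{1+\alpha})$. Maintaining $H_s$-relevant data per update costs no more than the amortized recomputation cost already accounted for, so the worst-case update time is $\Ot(n^{\omega(1,\alpha,1)+1-2\alpha})$. The two terms are balanced by choosing $\alpha$ with $\omega(1,\alpha,1)+1-2\alpha=1+\alpha$, i.e. $\omega(1,\alpha,1)=1+3\alpha$; with the current bounds on rectangular matrix multiplication this gives $\alpha\approx 0.724$ and both bounds equal $O(n^{1.724})$. I do not expect a genuine obstacle here — the only mild subtlety is verifying that the greedy hitting set $B$ stays small and valid after augmentation with the $O(n^\alpha)$ new heads, which is exactly Observation~\ref{o:hitting}; everything else is a direct adaptation of Lemmas~\ref{l:auxiliary-graph} and~\ref{l:exact-path}.
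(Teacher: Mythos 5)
Your proposal is correct and essentially reproduces the paper's proof: same auxiliary graph $H_s$ on $V$ with edges $E_k^+$ plus weighted shortcut edges from $B\cup Y\cup\{s\}$ to all of $V$ backed by paths of hop-length $\le h$, same two-stage Dijkstra (first on $H_s$ to get $T_s$, then on the expanded $\Ot(n^{2-\alpha})$-edge subgraph $T_s'$), same amortized-to-worst-case conversion of the recomputation cost, and the same balance at $\alpha\approx 0.724$. The only cosmetic difference is that you spell out the correctness reduction to Lemma~\ref{l:auxiliary-graph} and Observation~\ref{o:hitting} a bit more explicitly than the paper does.
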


Lemmas~\ref{l:exact-path}~and~\ref{l:exact-tree} together yield Theorem~\ref{t:incremental-exact-sum}.

\bibliographystyle{alpha}
\bibliography{references}

\end{document}